\declaretheoremstyle[%
  headfont=\bfseries,%
  headpunct={:},%
  notefont=\normalfont\bfseries,%
  notebraces={--~}{},
    qed=$\blacksquare$,
]{definitionstyle}
\theoremstyle{definition}
\declaretheorem[style=definitionstyle,name=Definition]{defn}
\declaretheorem[style=definitionstyle,name=Theorem]{thm}
\theoremstyle{definition}
\theoremstyle{plain}
\theoremstyle{remark}
\begin{document}
%
\title{Generalizing Linear Graphs and Bond Graph Models with Hetero-functional Graphs for System-of-Systems Engineering Applications}
\author{Ehsanoddin Ghorbanichemazkati, Amro M. Farid
\thanks{E. Ghorbanichemazkati is a doctoral research assistant with the Department of Systems and Enterprises at the Stevens Insititute of Technology, Hoboken NJ 07030}
\thanks{A.M. Farid is the Alexander Crombie Humphreys Chair Professor of Economics in Engineering in Department of Systems and Enterprises at the Stevens Insititute of Technology, Hoboken NJ 07030.  He is also the Principal Systems Scientist at CSIRO Smart Energy (Newcastle, Australia) and a Visiting Scientist at MIT Mechanical Engineering Cambridge, MA.}
}

\date{March 28 2024}
\maketitle

\begin{abstract}
In the 20th century, individual technology products like the generator, telephone, and automobile were connected to form many of the large-scale, complex, infrastructure networks we know today: the power grid, the communication infrastructure, and the transportation system. Progressively, these networked systems began interacting, forming what is now known as systems-of-systems. Because the component systems in the system-of-systems differ, modeling and analysis techniques with primitives applicable across multiple domains or disciplines are needed. For example, linear graphs and bond graphs have been used extensively in the electrical engineering, mechanical engineering, and mechatronic fields to design and analyze a wide variety of engineering systems. In contrast, hetero-functional graph theory (HFGT) has emerged to study many complex engineering systems and systems-of-systems (e.g. electric power, potable water, wastewater, natural gas, oil, coal, multi-modal transportation, mass-customized production, and personalized healthcare delivery systems).  This paper seeks to relate hetero-functional graphs to linear graphs and bond graphs and demonstrate that the former is a generalization of the latter two.  The contribution is relayed in three stages.  First, the three modeling techniques are compared conceptually.  Next, these techniques are contrasted on six example systems: (a) an electrical system, (b) a translational mechanical system, (c) a rotational mechanical system, (d) a fluidic system, (e) a thermal system, and (f) a multi-energy (electro-mechanical) system.  Finally, this paper proves mathematically that hetero-functional graphs are a formal generalization of both linear graphs and bond graphs.
\end{abstract}

\section{Introduction}
In the 20th century, individual technology products like the generator, telephone, and automobile were connected to form many of the large-scale, complex, infrastructure networks we know today: the power grid, the communication infrastructure, and the transportation system\cite{De-Weck:2011:00}. Over time, these networked systems began to develop interactions between themselves in what is now called systems-of-systems\cite{Little:2019:00,Blokdyk:2018:00, Delaurentis:2022:00, Jamshidi:2017:00, Mann:2015:00, Rainey:2022:00}. The ``smart grid”\cite{Annaswamy:2013:SPG-BK02}, the energy-water nexus\cite{Olsson:2015:01,Lubega:2014:EWN-J11}, the electrification of transport\cite{Anair:2012:00,
Pasaoglu:2012:00, Karabasoglu:2013:00, Raykin:2012:00, Yang:2012:00}, are all good examples where one network system has fused with another to form a new and much more capable system.  This trend is only set to continue. The energy-water-food nexus\cite{Albrecht:2018:00} fuses three such systems and the recent interest in smart cities\cite{Farid:2021:ISC-J41,Farid:2021:ISC-J42,Cocchia:2014:00} provides a platform upon which to integrate all of these efforts. 

Because the component systems in the system-of-systems are unlike each other, there is a need to use modeling and analysis techniques that have modeling primitives that can be applied across multiple domains or disciplines.  For example, linear graphs\cite{Rowell:1997:00} and bond graphs\cite{Karnopp:1990:00} have been used extensively in the electrical engineering, mechanical engineering, and mechatronic fields to design and analyze a wide variety of engineering systems.  They use modeling primitives such as generalized resistors, capacitors, inductors, transformers, and gyrators to address mechanical, electrical, fluidic, and thermal systems. Unfortunately, the application of linear graphs and bond graphs are limited to systems where system elements are connected via flows of power (of various types).

In contrast, hetero-functional graph theory (HFGT)\cite{Schoonenberg:2019:ISC-BK04,Farid:2022:ISC-J51,Farid:2016:ISC-BC06} has emerged to study many complex engineering systems and systems-of-systems.  More specifically, HFGT provides a means of algorithmically translating SysML models\cite{Delligatti:2014:00,Friedenthal:2014:00,Weilkiens:2007:00} into hetero-functional graphs and/or Petri Nets\cite{Girault:2013:00} where they can be structurally analyzed\cite{Thompson:2021:SPG-J46,Thompson:2024:ISC-J55}, dynamically simulated\cite{Khayal:2021:ISC-J48}, and ultimately optimized\cite{Schoonenberg:2022:ISC-J50}.  HFGT has been applied to numerous application domains including electric power, potable water, wastewater, natural gas, oil, coal, multi-modal transportation, mass-customized production, and personalized healthcare delivery systems.  In so doing, HFGT has demonstrated its ability to model the supply, demand, transportation, storage, transformation, assembly, and disassembly of multiple operands in distinct locations over time\cite{Schoonenberg:2022:ISC-J50}.  These multiple operands include matter, energy, information, money, and living organisms\cite{Schoonenberg:2019:ISC-BK04,Farid:2022:ISC-J51} and not just energy as in the case of linear graphs and bond graphs.  

\subsection{Original Contribution}
Consequently, this paper seeks to relate hetero-functional graphs to linear graphs and bond graphs and demonstrate that the former is a generalization of the latter two.  The contribution is relayed in three stages.  First, the three modeling techniques are compared conceptually.  Next, the three modeling techniques are contrasted on six example systems: (a) an electrical system, (b) a translational mechanical system, (c) a rotational mechanical system, (d) a fluidic system, (e) a thermal system, and (f) a multi-energy (electro-mechanical) system.  Finally, this paper proves mathematically that hetero-functional graphs are a formal generalization of both linear graphs and bond graphs.  

To facilitate the discussion, several assumptions and limitations are made.  
\begin{itemize}
\item As the majority of the literature for all three modeling approaches concerns lumped parameter models, this paper restricts its scope to such systems.  
\item So as to facilitate the discussion, all physical systems will be modeled with ``power-variables"\cite{Rowell:1997:00,Karnopp:1990:00} -- pairs of physical variables whose product equals a quantity of power. 
\item Linear graphs and bond graphs are assumed to describe the flows of power of various types:  translational mechanical, rotational mechanical, electrical, fluidic, and thermal.  This has been the typical application of linear graphs and bond graphs in the literature.  
\item Without loss of generality, and for simplicity of discussion, this paper restricts its discussion to linear constitutive laws.  Non-linear constitutive laws can be readily integrated into bond graphs and hetero-functional graphs.  
\item Without loss of generality, and also for simplicity of discussion, this paper restricts itself to the following types of elements (as understood in bond graphs):  
\begin{enumerate*}
\item Effort sources/sinks, 
\item Flow sources/sinks, 
\item Generalized resistors, 
\item Generalized capacitors, 
\item Generalized inductors, 
\item Generalized transformers, and
\item Generalized gyrators.
\end{enumerate*}
While other types of elements have been introduced in bond graphs \cite{Brown:2006:00}, their exclusion does not detract from the original contribution presented here and can be re-introduced straightforwardly.  
\end{itemize}
\subsection{Paper Outline}
The remainder of the paper proceeds as follows.  Sec. \ref{sec:Overview_on_methods} provides an overview of linear graphs, bond graphs, and hetero-functional graph techniques.  Sec. \ref{sec:examples} then introduces six illustrative examples that serve as the basis for comparison.  Sections \ref{subsec:Linear_graph_by_example}, \ref{subsec:Bond_graph_by_example}, and \ref{subsec:HFGT_by_example} then demonstrate the linear graph, bond graph, and hetero-functional graph methodologies on each of these six systems.  Sec. \ref{sec:On_generality_of_HFGT} then proves mathematically that hetero-functional graphs are a formal generalization of both linear graphs and bond graphs.  Sec. \ref{sec:Conclusion} brings the paper to a conclusion.  
\section{Overview of Linear Graph, Bond Graph, and Hetero-functional Graph Techniques}
\label{sec:Overview_on_methods}
In order to begin to relate hetero-functional graphs to linear graphs and bond graphs, all three approaches must be briefly described at a conceptual level.    Interestingly, linear graphs share more common features with bond graphs and hetero-functional graphs than the other two with each other.  Consequently, Sections \ref{subsec:linear_graphs}, \ref{subsec:bond_graphs} and \ref{subsec:HFGT} describe linear graphs, bond graphs, and hetero-functional in that order to facilitate discussion. Figure \ref{fig:different_graphs} serves to guide the discussion for the remainder of the section.

\begin{figure}
\centering
\includegraphics[width=0.85\textwidth]{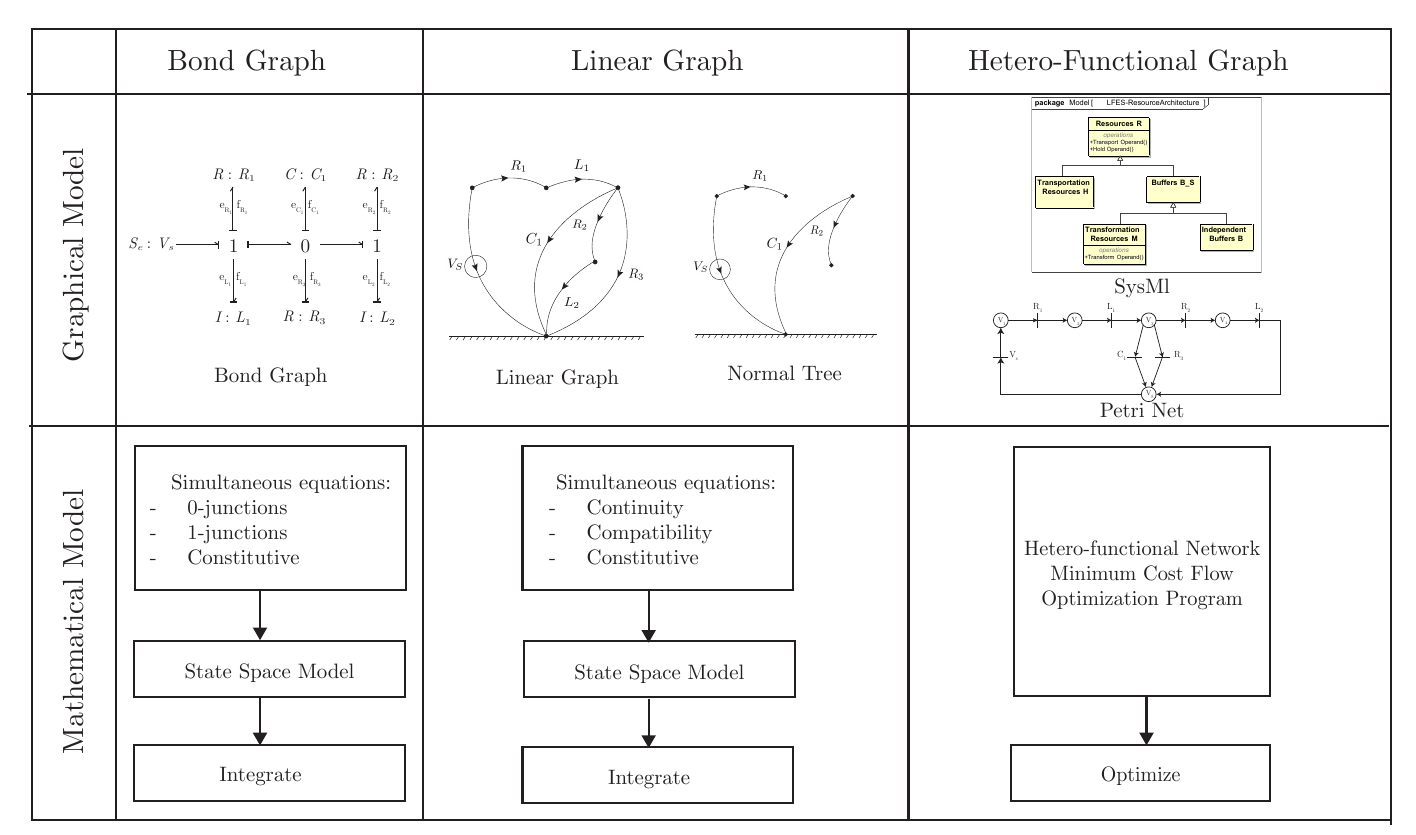}
\caption{Graphical and mathematical models of Bond Graph, Linear Graph, and Hetero-functional Graph.}
\label{fig:different_graphs}
\end{figure}

\subsection{Linear Graphs}
\label{subsec:linear_graphs}
As shown in Fig. \ref{fig:different_graphs}, linear graphs are a type of graphical model that shows the interconnectedness of lumped parameter elements depicted as arcs\cite{Rowell:1997:00} that transform and transport the flow of power.  It is important to recognize that when defining the linear graph's lumped parameter elements, the system modeler is implicitly choosing either a Lagrangian or an Eulerian view\cite{White:1994:00} of the system.  Each of these power flows in the linear graph's arcs is associated with a pair of variables; one ``across" variable and another ``through" variable.  Fig. \ref{fig:across_through_variables} shows the across and through variables for each of the five energy domains.  Furthermore, as shown in Fig. \ref{fig:graph_elements}, each of the arcs in a linear graph can be classified into one of several different types. These arcs connect with each other at nodes that are associated with points in space that have distinct across-variable values measured relative to a well-chosen absolute reference frame.  Once a linear graph has been constructed it is then translated into a ``normal tree" so as to eliminate redundant variables\cite{Rowell:1997:00}. From there, three sets of simultaneous differential and algebraic equations are derived to form a mathematical model of the system.  
\begin{itemize}
\item \textbf{Continuity laws:} For each node in the normal tree, a continuity law (e.g. Newton's 1st Law, Kirchhoff's Current law) is derived that describes the conservation of the relevant through variables.  
\item \textbf{Constitutive laws:} For each arc in the normal tree, a constitutive law (e.g. Newton's 2nd Law, Ohm's Law) is derived that describes the relationship between across and through variables in the lumped parameter element.  
\item \textbf{Compatibility law:} For each arc in the normal tree, a compatibility law is derived to relate the across variable of the lumped parameter element to the across variables of the linear graph's nodes. 
 In most cases, the compatibility laws can be entirely omitted if the simultaneous equations are expressed exclusively in terms of the (absolute) across variables at each of the nodes.  
\end{itemize}
The three sets of simultaneous equations that constitute the mathematical model can then be algebraically simplified into a state space model of the form:  
\begin{align}
\dot{X}= AX + BU + E\dot{U} \label{eq:general_state_eq1}\\
y = CX + DU + F\dot{U} \label{eq:general_state_eq2}
\end{align}
where $X$ is the system state vector, $U$ are the system inputs, $y$ are the system outputs, and $A,B,C,D,E$ and $F$ are constant parameter coefficient matrices of appropriate size.  The size of the system state vector is determined by the number of independent energy storage elements (as determined from the normal tree)\cite{Rowell:1997:00}.  The size of the system input vector is determined from the number of source/sink elements, and the size of the output vector is left to the modeler's discretion.  Finally, the state space model is simulated in the time domain by numerical integration.

\begin{figure}[htbp]
\centering
\includegraphics[width=0.9\linewidth]{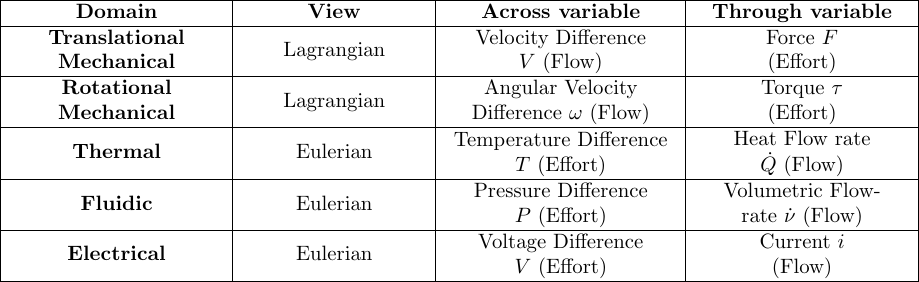}
\caption{Comparison of physical domains with respective Lagrangian and Eulerian views, illustrating the across and through, flow and effort variables associated with each domain.}
\label{fig:across_through_variables} 
\end{figure}

\begin{figure}[htbp]
\centering
\includegraphics[width=0.9\linewidth]{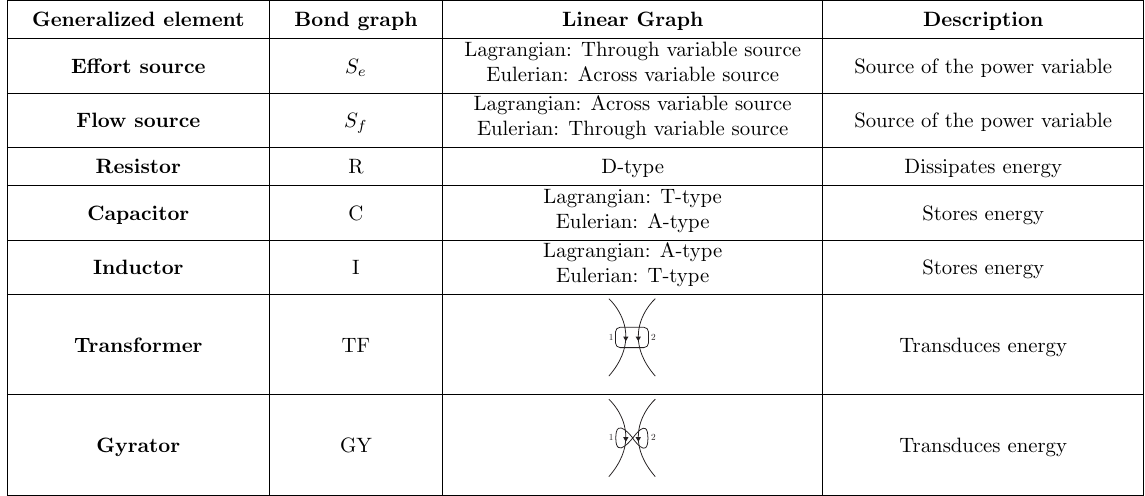}
\caption{Physical elements used in bond graphs and linear graphs}
\label{fig:graph_elements} 
\end{figure}

\subsection{Bond Graphs}
\label{subsec:bond_graphs}
Bond graph models are derived in a similar manner; although with several key differences.  As shown in Fig. \ref{fig:different_graphs}, bond graphs are a type of graphical model that shows the interconnectedness of lumped parameter elements depicted as labeled nodes \cite{Karnopp:1990:00} transform and transport power.  Each of these power flows is associated with a pair of variables; one ``effort" variable and another ``flow" variable.  Fig. \ref{fig:across_through_variables} shows the effort and flow variables for each of the five energy domains.  Importantly, when the system model adopts a Lagrangian view, the effort and flow variables map to through and across variables respectively.  In contrast, when the system model adopts an Eulerian view, the effort and flow variables map to across and through variables respectively.  Furthermore, as shown in Fig. \ref{fig:graph_elements}, each of the elements in the bond graph can be classified into one of several different types.  In addition to these bond graph elements, the bond graph also includes ``0-Junctions" and ``1-Junctions".  0-Junctions, or flow junctions, conserve the sum of all flow variables and have a common associated effort variable.  Meanwhile, 1-Junctions, or effort junctions, conserve the sum of all effort variables and have a common associated flow variable.  Finally, the 0-Junctions and the 1-Junctions are connected to the bond graph elements with arcs (i.e. power bonds) that describe a flow of power between elements and junctions.  Once the bond graph has been constructed, three sets of simultaneous differential-algebraic equations are derived to form a mathematical model of the system.  
\begin{itemize}
\item \textbf{0-Junction Laws:} For each 0-Junction, the flow conversation law is derived.  For systems with a Lagrangian view, these 0-Junction laws are compatibility laws.  For systems with an Eulerian view, these 0-junction laws are continuity laws (e.g. Kirchhoff's current law).  
\item \textbf{1-Junction Laws:} For each 1-Junction, the effort conservation law is derived.  For systems with a Lagrangian view, these 1-Junction laws are continuity laws (e.g. Newton's 1st law).  For systems with an Eulerian view, these 1-junction laws are compatibility laws (e.g. Kirchhoff's voltage law).  
\item \textbf{Constitutive laws:} For each element in the bond graph, a constitutive law (e.g. Newton's 2nd Law, Ohm's Law) is derived that describes the relationship between effort and flow variables in the lumped parameter element.  
\end{itemize}
As with linear graphs, the three sets of simultaneous equations that constitute the mathematical model can then be algebraically simplified into a state space model shown in Eq. \ref{eq:general_state_eq1} and \ref{eq:general_state_eq2}.  

\subsection{Hetero-functional Graphs}\label{subsec:HFGT}
As shown in Fig. \ref{fig:different_graphs}, hetero-functional graphs are also a type of graphical models that show the interconnectedness of lumped parameter elements.  Unlike the relatively specific graphical ontologies used in linear graphs and bond graphs, Hetero-functional graph theory stems from the universal structure of human language with subjects and predicates and the latter made up of verbs and objects\cite{Schoonenberg:2019:ISC-BK04,Farid:2022:ISC-J51}  It includes set of system resources $R$ as subjects, a set of system processes $P$ as predicates, and a set of operands $L$ as their constituent objects.  
\begin{defn}[System Operand \cite{SE-Handbook-Working-Group:2015:00}]\label{Defn:D1}
An asset or object $l_i \in L$ that is operated on or consumed during the execution of a process.
\end{defn}
\begin{defn}[System Process\cite{Hoyle:1998:00,SE-Handbook-Working-Group:2015:00}]\label{def:CH7:process}
An activity $p \in P$ that transforms or transports a predefined set of input operands into a predefined set of outputs. 
\end{defn}
\begin{defn}[System Resource \cite{SE-Handbook-Working-Group:2015:00}]
An asset or object $r_v \in R$ that facilitates the execution of a process.  
\end{defn}
\noindent As shown in Fig. \ref{fig:LFESMetaArchitecture}, these operands, processes, and resources are organized in an engineering system meta-architecture stated in the Systems Modeling Language (SysML)\cite{Delligatti:2014:00,Friedenthal:2014:00,Weilkiens:2007:00}.  
\begin{figure}[htbp]
\centering
\includegraphics[width=\textwidth]{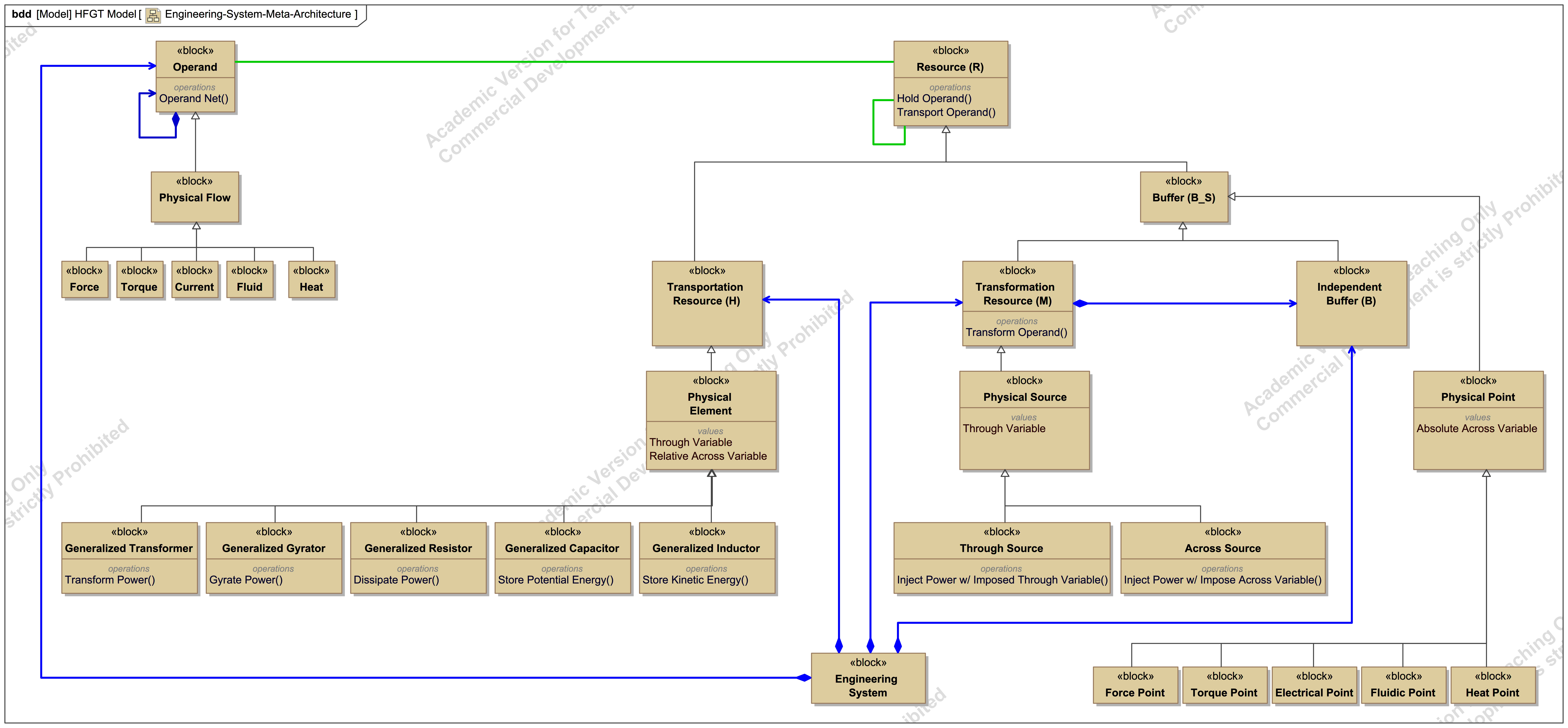}
\caption{A SysML Block Definition Diagram of the System Form of the Engineering System Meta-Architecture}
\label{fig:LFESMetaArchitecture}
\end{figure}

Importantly, the system resources $R=M \cup B \cup H$ are classified into transformation resources $M$, independent buffers $B$, and transportation resources $H$.  Additionally, the set of ``buffers" $B_S=M \cup B$ is introduced to support the discussion.  Equally important, the system processes $P = P_\mu \cup P_{\bar{\eta}}$ are classified into transformation processes $P_\mu$ and refined transportation processes $P_\eta$.  The latter arises from the simultaneous execution of one transportation process and one holding process.  Finally, hetero-functional graph theory emphasizes that resources are capable of one or more system processes to produce a set of ``capabilities"\cite{Schoonenberg:2019:ISC-BK04}.
\begin{defn}[Buffer\cite{Schoonenberg:2019:ISC-BK04,Farid:2022:ISC-J51}]\label{defn:BSCh7}
A resource $r_v \in R$ is a buffer $b_s \in B_S$ iff it is capable of storing or transforming one or more operands at a unique location in space.  
\end{defn}
\begin{defn}[Capability\cite{Schoonenberg:2019:ISC-BK04,Farid:2022:ISC-J51,Farid:2016:ISC-BC06}]\label{defn:capabilityCh7}
An action $e_{wv} \in {\cal E}_S$ (in the SysML sense) defined by a system process $p_w \in P$ being executed by a resource $r_v \in R$.  It constitutes a subject + verb + operand sentence of the form: ``Resource $r_v$ does process $p_w$".  
\end{defn}
\noindent The highly generic and abstract nature of these definitions has allowed HFGT to be applied to numerous application domains including electric power, potable water, wastewater, natural gas, oil, coal, multi-modal transportation, mass-customized production, and personalized healthcare delivery systems.  For a more in-depth description of HFGT, readers are directed to past works\cite{Schoonenberg:2019:ISC-BK04,Farid:2022:ISC-J51,Farid:2016:ISC-BC06}. 

Fig. \ref{fig:LFESMetaArchitecture} also serves to relate the hetero-functional graph theory definitions to the ontological elements found in linear graphs and bond graphs.  
\begin{enumerate}
\item First, linear graphs and bond graphs are concerned with physical flows, and more specifically the flows of force, torque, electrical current, fluids, and heat.  Consequently, Fig. \ref{fig:LFESMetaArchitecture} shows these objects as types of operands.
\item Second, linear graphs and bond graphs implicitly define a set of physical points at which there are distinct values of across-value attributes measured in absolute terms relative to a well-chosen reference frame.  Again, Fig. \ref{fig:across_through_variables} summarizes the different types of physical points by the type of across variable.  Consequently, Fig. \ref{fig:LFESMetaArchitecture} shows these as types of independent buffers.  
\item Third, linear graphs and bond graphs are composed of physical sources.  Because the across variables at physical points are types of independent buffers (in HFGT), across-variable and through-variable sources are adopted into the taxonomy (rather than effort and flow sources).  Because these physical sources inject power across the system boundary as system processes, and all system processes that inject operands across the system boundary are transformation processes\cite{Schoonenberg:2019:ISC-BK04}, then Fig. \ref{fig:LFESMetaArchitecture} shows these physical sources as transformation resources.  Furthermore, because they are transformation resources, they are also buffers and inherently have an absolute across variable associated with physical points.  
\item Fourth, linear graphs and bond graphs are composed of the physical elements shown in Fig. \ref{fig:graph_elements}.  Each of these is associated with a through-variable attribute and an across-variable attribute defined relative to the across variables of the physical points (as independent buffers).  They are further classified as generalized resistors, capacitors, inductors, transformers, and gyrators (as defined in bond graphs).  They dissipate power, store potential energy in an effort variable, store energy in a flow variable, transform power, and gyrate power respectively.  (For brevity, the storage of energy in a flow variable will be referred to as storing kinetic energy; although this is a strained analogy in the electrical domain).  As each of these system processes is a flow of power between two physical points, Fig. \ref{fig:LFESMetaArchitecture} shows all of these physical elements as transportation resources.  
\end{enumerate}
Finally, it is important to recognize that linear graphs and bond graphs are limited to physical sources and physical elements with only a single associated capability.  In contrast, HFGT is able to address engineering systems with resources that have an arbitrary number of processes.  In this regard, and as shown in Fig. \ref{fig:LFESMetaArchitecture}, HFGT is more ontologically rich than both linear graphs and bond graphs.

Returning to Fig. \ref{fig:different_graphs}, the engineering system meta-architecture stated in SysML must be instantiated and ultimately transformed into the associated Petri net model(s). To that end, the positive and negative hetero-functional incidence tensors (HFIT) are introduced to describe the flow of operands through buffers and capabilities.  
\begin{defn}[The Negative 3$^{rd}$ Order Hetero-functional Incidence Tensor (HFIT) $\widetilde{\cal M}_\rho^-$\cite{Farid:2022:ISC-J51}]\label{Defn:D6}
The negative hetero-functional incidence tensor $\widetilde{\cal M_\rho}^- \in \{0,1\}^{|L|\times |B_S| \times |{\cal E}_S|}$  is a third-order tensor whose element $\widetilde{\cal M}_\rho^{-}(i,y,\psi)=1$ when the system capability ${\epsilon}_\psi \in {\cal E}_S$ pulls operand $l_i \in L$ from buffer $b_{s_y} \in B_S$.
\end{defn} 

\begin{defn}[The Positive  3$^{rd}$ Order Hetero-functional Incidence Tensor (HFIT)$\widetilde{\cal M}_\rho^+$\cite{Farid:2022:ISC-J51}]
The positive hetero-functional incidence tensor $\widetilde{\cal M}_\rho^+ \in \{0,1\}^{|L|\times |B_S| \times |{\cal E}_S|}$  is a third-order tensor whose element $\widetilde{\cal M}_\rho^{+}(i,y,\psi)=1$ when the system capability ${\epsilon}_\psi \in {\cal E}_S$ injects operand $l_i \in L$ into buffer $b_{s_y} \in B_S$.
\end{defn}
\noindent These incidence tensors are straightforwardly ``matricized" to form 2$^{nd}$ Order Hetero-functional Incidence Matrices $M = M^+ - M^-$ with dimensions $|L||B_S|\times |{\cal E}|$. Consequently, the supply, demand, transportation, storage, transformation, assembly, and disassembly of multiple operands in distinct locations over time can be described by an Engineering System Net and its associated State Transition Function\cite{Schoonenberg:2022:ISC-J50}.

\begin{defn}[Engineering System Net\cite{Schoonenberg:2022:ISC-J50}]\label{Defn:ESN}
An elementary Petri net ${\cal N} = \{S, {\cal E}_S, \textbf{M}, W, Q\}$, where
\begin{itemize}
    \item $S$ is the set of places with size: $|L||B_S|$,
    \item ${\cal E}_S$ is the set of transitions with size: $|{\cal E}|$,
    \item $\textbf{M}$ is the set of arcs, with the associated incidence matrices: $M = M^+ - M^-$,
    \item $W$ is the set of weights on the arcs, as captured in the incidence matrices,
    \item $Q=[Q_B; Q_E]$ is the marking vector for both the set of places and the set of transitions. 
\end{itemize}
\end{defn}

\begin{defn}[Engineering System Net State Transition Function\cite{Schoonenberg:2022:ISC-J50}]\label{Defn:ESN-STF}
The  state transition function of the engineering system net $\Phi()$ is:
\begin{equation}\label{CH6:eq:PhiCPN}
Q[k+1]=\Phi(Q[k],U^-[k], U^+[k]) \quad \forall k \in \{1, \dots, K\}
\end{equation}
where $k$ is the discrete time index, $K$ is the simulation horizon, $Q=[Q_{B}; Q_{\cal E}]$, $Q_B$ has size $|L||B_S| \times 1$, $Q_{\cal E}$ has size $|{\cal E}_S|\times 1$, the input firing vector $U^-[k]$ has size $|{\cal E}_S|\times 1$, and the output firing vector $U^+[k]$ has size $|{\cal E}_S|\times 1$.  
\begin{align}\label{CH6:CH6:eq:Q_B:HFNMCFprogram}
Q_{B}[k+1]&=Q_{B}[k]+{M}^+U^+[k]\Delta T-{M}^-U^-[k]\Delta T \\ \label{CH6:CH6:eq:Q_E:HFNMCFprogram}
Q_{\cal E}[k+1]&=Q_{\cal E}[k]-U^+[k]\Delta T +U^-[k]\Delta T
\end{align}
where $\Delta T$ is the duration of the simulation time step.  
\end{defn}

Here, it is important to recognize that the engineering system net state transition function is an explicit restatement of the continuity laws in linear graphs.  Similarly, the engineering system net describes the 0-Junction laws in bond graphs that use an Eulerian view and describes the 1-Junction laws in bond graphs that use a Lagrangian view.  For this reason, the relationship between hetero-functional graphs and linear graphs is more straightforward than between hetero-functional graphs and bond graphs.  

In addition to the engineering system net, in HFGT, each operand can have its own state and evolution.  This behavior is described by an Operand Net and its associated State Transition Function for each operand.  
\begin{defn}[Operand Net\cite{Farid:2008:IEM-J04,Schoonenberg:2019:ISC-BK04,Khayal:2017:ISC-J35,Schoonenberg:2017:IEM-J34}]\label{Defn:OperandNet} Given operand $l_i$, an elementary Petri net ${\cal N}_{l_i}= \{S_{l_i}, {\cal E}_{l_i}, \textbf{M}_{l_i}, W_{l_i}, Q_{l_i}\}$ where 
\begin{itemize}
\item $S_{l_i}$ is the set of places describing the operand's state.  
\item ${\cal E}_{l_i}$ is the set of transitions describing the evolution of the operand's state.
\item $\textbf{M}_{l_i} \subseteq (S_{l_i} \times {\cal E}_{l_i}) \cup ({\cal E}_{l_i} \times S_{l_i})$ is the set of arcs, with the associated incidence matrices: $M_{l_i} = M^+_{l_i} - M^-_{l_i} \quad \forall l_i \in L$.  
\item $W_{l_i} : \textbf{M}_{l_i}$ is the set of weights on the arcs, as captured in the incidence matrices $M^+_{l_i},M^-_{l_i} \quad \forall l_i \in L$.  
\item $Q_{l_i}= [Q_{Sl_i}; Q_{{\cal E}l_i}]$ is the marking vector for both the set of places and the set of transitions. 
\end{itemize}
\end{defn}

\begin{defn}[Operand Net State Transition Function\cite{Farid:2008:IEM-J04,Schoonenberg:2019:ISC-BK04,Khayal:2017:ISC-J35,Schoonenberg:2017:IEM-J34}]\label{Defn:OperandNet-STF}
The  state transition function of each operand net $\Phi_{l_i}()$ is:
\begin{equation}\label{CH6:eq:PhiSPN}
Q_{l_i}[k+1]=\Phi_{l_i}(Q_{l_i}[k],U_{l_i}^-[k], U_{l_i}^+[k]) \quad \forall k \in \{1, \dots, K\} \quad i \in \{1, \dots, L\}
\end{equation}
where $Q_{l_i}=[Q_{Sl_i}; Q_{{\cal E} l_i}]$, $Q_{Sl_i}$ has size $|S_{l_i}| \times 1$, $Q_{{\cal E} l_i}$ has size $|{\cal E}_{l_i}| \times 1$, the input firing vector $U_{l_i}^-[k]$ has size $|{\cal E}_{l_i}|\times 1$, and the output firing vector $U^+[k]$ has size $|{\cal E}_{l_i}|\times 1$.  

\begin{align}\label{X}
Q_{Sl_i}[k+1]&=Q_{Sl_i}[k]+{M_{l_i}}^+U_{l_i}^+[k]\Delta T - {M_{l_i}}^-U_{l_i}^-[k]\Delta T \\ \label{CH6:CH eq:Q_E:HFNMCFprogram}
Q_{{\cal E} l_i}[k+1]&=Q_{{\cal E} l_i}[k]-U_{l_i}^+[k]\Delta T +U_{l_i}^-[k]\Delta T
\end{align}
\end{defn}
Here, it is important to recognize that although HFGT introduces operand nets and their respective state transition functions, linear graphs and bond graphs do not have a counterpart modeling concept.  This is because when power flows as an operand, regardless of whether it is mechanical, electrical, fluidic, or thermal power, it does not change state and therefore does not require an operand net.  In contrast, other application domains, most notably production systems\cite{Schoonenberg:2017:IEM-J34,Farid:2008:IEM-J04,Farid:2008:IEM-J05} and healthcare systems\cite{Khayal:2021:ISC-J48,Khayal:2017:ISC-J35,Khayal:2015:ISC-J20} respectively have products and patients as operands with often very complex operand state evolution.  

Returning to Fig. \ref{fig:different_graphs}, HFGT describes the behavior of an engineering system using the Hetero-Functional Network Minimum Cost Flow (HFNMCF) problem\cite{Schoonenberg:2022:ISC-J50}.  Whereas linear graphs and bond graphs models insist on the state space form in Eq. \ref{eq:general_state_eq1}-\ref{eq:general_state_eq2}, HFGT similarly insists on the HFNMCF mathematical program.  It optimizes the time-dependent flow and storage of multiple operands (or commodities) between buffers, allows for their transformation from one operand to another, and tracks the state of these operands.  In this regard, it is a very flexible optimization problem that applies to a wide variety of complex engineering systems.  For the purposes of this paper, the HFNMCF is a type of discrete-time-dependent, time-invariant, convex optimization program\cite{Schoonenberg:2022:ISC-J50}.

\vspace{0.2in}
\begin{align}\label{Eq:ObjFunc1}
\text{minimize } Z &= \sum_{k=1}^{K-1} f_k(x[k],y[k]) \\ \label{Eq:EqualityConstraints}
\text{s.t. } A_{CP}X &= B_{CP} \\ \label{ch6:eq:QPcanonicalform:3}
\underline{E}_{CP} \leq D(X) &\leq \overline{E}_{CP} \\ \label{Eq:DeviceModels}
g(X,Y) &= 0 \\ \label{Eq:DevicModels2}
h(Y) &\leq 0
\end{align}

where 
\begin{itemize}
\item $Z$ is a convex objective function separable in $k$.
\item $k$ is the discrete time index. 
\item $K$ is the simulation horizon.
\item $f_k()$ is a set of discrete-time-dependent convex functions.
\item $X=\left[x[1]; \ldots; x[K]\right]$  is the vector of primary decision variables at time $k$.
\begin{equation}\label{eq:primary_decision_var}
x[k] = \begin{bmatrix} Q_B ; Q_{\cal E} ; Q_{SL} ; Q_{{\cal E}L} ; U^- ; U^+ ; U^-_L ; U^+_L \end{bmatrix}[k] \quad \forall k \in \{1, \dots, K\}
\end{equation}
\item $Y=\left[y[1]; \ldots; y[K]\right]$  is the vector of auxiliary decision variables at time $k$.  The need for auxiliary decision variables depends on the presence and nature of the device models $g(X,Y)=0$ in Eq. \ref{Eq:DeviceModels} and $h(Y)\leq 0$ in Eq. \ref{Eq:DevicModels2}.  
\item $A_{CP}$ is the linear equality constraint coefficient matrix.
\item $B_{CP}$ is the linear equality constraint intercept vector.
\item $D_{CP}$ is the linear inequality constraint coefficient matrix.
\item $E_{CP}$ is the linear inequality constraint intercept vector.  
\item g(X,Y) and h(Y) is a set of device model functions whose presence and nature depend on the specific problem application.  
\end{itemize}

Despite the terse description of the HFNMCF problem presented above, it has immediate relationships to linear graphs and bond graphs.  
\begin{itemize}
\item The through variables in a linear graph appear amongst the primary decision variables X.  
\item The across variables in a linear graph appear amongst the auxiliary variables Y.
\item Because HFGT assumes that the across variables are stated in absolute terms, the compatibility laws in a linear graph are not required in the HFNMCF.  
\item The continuity relations in a linear graph appear amongst the linear equality constraints in Eq. \ref{Eq:EqualityConstraints}
\item The constitutive relations in a linear graph appear amongst the device model constraints in Eq. \ref{Eq:DeviceModels}.  
\end{itemize}
The relationships between the HFNMCF problem and bond graphs can be similarly deduced via the HFGT-to-linear graph relationships stated above.  With the above understanding, Equations \ref{Eq:ObjFunc1}-\ref{Eq:DevicModels2} are elaborated below.  

\vspace{0.1in}
\subsubsection{Objective Function}
With respect to the objective function in Eq.  \ref{Eq:ObjFunc1}, $Z$ is a convex objective function separable in discrete time steps $k$.  For the remainder of this work, the discrete-time-dependent functions $f_k$ are assumed to be time-invariant quadratic functions.  Matrix $F_{QP}$ and vector $f_{QP}$ in Equation \ref{Eq:ObjFunc} allow quadratic and linear costs to be incurred from the place and transition markings in both the engineering system net and operand nets. 
\begin{align}\label{Eq:ObjFunc}
Z &= \sum_{k=1}^{K-1} x^T[k] F_{QP} x[k] + f_{QP}^T x[k]
\end{align}
\begin{itemize}
\item $F_{QP}$ is a positive semi-definite, diagonal, quadratic coefficient matrix.
\item $f_{QP}$ is a linear coefficient matrix.
\end{itemize}

\vspace{0.1in}
\subsubsection{Equality Constraints}

Matrix $A_{QP}$ and vector $B_{QP}$ in Equation \ref{Eq:EqualityConstraints} are constructed by concatenating constraints Equations  \ref{Eq:ESN-STF1}-\ref{CH6:eq:HFGTprog:comp:Fini}.  

\begin{align}\label{Eq:ESN-STF1}
-Q_{B}[k+1]+Q_{B}[k]+{M}^+U^+[k]\Delta T - {M}^-U^-[k]\Delta T=&0 && \!\!\!\!\!\!\!\!\!\!\!\!\!\!\!\!\!\!\!\!\!\!\!\!\!\!\!\!\!\!\!\!\!\!\!\!\!\!\!\!\!\forall k \in \{1, \dots, K\}\\  \label{Eq:ESN-STF2}
-Q_{\cal E}[k+1]+Q_{\cal E}[k]-U^+[k]\Delta T + U^-[k]\Delta T=&0 && \!\!\!\!\!\!\!\!\!\!\!\!\!\!\!\!\!\!\!\!\!\!\!\!\!\!\!\!\!\!\!\!\!\!\!\!\!\!\!\!\!\forall k \in \{1, \dots, K\}\\ \label{Eq:DurationConstraint}
 - U_\psi^+[k+k_{d\psi}]+ U_{\psi}^-[k] = &0 && \!\!\!\!\!\!\!\!\!\!\!\!\!\!\!\!\!\!\!\!\!\!\!\!\!\!\!\!\!\!\!\!\!\!\!\!\!\!\!\!\!\forall k\in \{1, \dots, K\} \quad \psi \in \{1, \dots, {\cal E}_S\}\\ \label{Eq:OperandNet-STF1}-Q_{Sl_i}[k+1]+Q_{Sl_i}[k]+{M}_{l_i}^+U_{l_i}^+[k]\Delta T - {M}_{l_i}^-U_{l_i}^-[k]\Delta T=&0 && \!\!\!\!\!\!\!\!\!\!\!\!\!\!\!\!\!\!\!\!\!\!\!\!\!\!\!\!\!\!\!\!\!\!\!\!\!\!\!\!\!\forall k \in \{1, \dots, K\} \quad i \in \{1, \dots, |L|\}\\ \label{Eq:OperandNet-STF2}
-Q_{{\cal E}l_i}[k+1]+Q_{{\cal E}l_i}[k]-U_{l_i}^+[k]\Delta T + U_{l_i}^-[k]\Delta T=&0 && \!\!\!\!\!\!\!\!\!\!\!\!\!\!\!\!\!\!\!\!\!\!\!\!\!\!\!\!\!\!\!\!\!\!\!\!\!\!\!\!\!\forall k \in \{1, \dots, K\} \quad i \in \{1, \dots, |L|\}\\ \label{Eq:OperandNetDurationConstraint}
- U_{xl_i}^+[k+k_{dxl_i}]+ U_{xl_i}^-[k] = &0 &&  \!\!\!\!\!\!\!\!\!\!\!\!\!\!\!\!\!\!\!\!\!\!\!\!\!\!\!\!\!\!\!\!\!\!\!\!\!\!\!\!\!
\forall k\in \{1, \dots, K\}, \: \forall x\in \{1, \dots, |{\cal E}_{l_i}\}|, \: l_i \in \{1, \dots, |L|\}\\ \label{Eq:SyncPlus}
U^+_L[k] - \widehat{\Lambda}^+ U^+[k] =&0 && \!\!\!\!\!\!\!\!\!\!\!\!\!\!\!\!\!\!\!\!\!\!\!\!\!\!\!\!\!\!\!\!\!\!\!\!\!\!\!\!\!\forall k \in \{1, \dots, K\}\\ \label{Eq:SyncMinus}
U^-_L[k] - \widehat{\Lambda}^- U^-[k] =&0 && \!\!\!\!\!\!\!\!\!\!\!\!\!\!\!\!\!\!\!\!\!\!\!\!\!\!\!\!\!\!\!\!\!\!\!\!\!\!\!\!\!\forall k \in \{1, \dots, K\}\\ \label{CH6:eq:HFGTprog:comp:Bound}
\begin{bmatrix}
D_{Up} & \mathbf{0} \\ \mathbf{0} & D_{Un}
\end{bmatrix} \begin{bmatrix}
U^+ \\ U^-
\end{bmatrix}[k] =& \begin{bmatrix}
C_{Up} \\ C_{Un}
\end{bmatrix}[k] && \!\!\!\!\!\!\!\!\!\!\!\!\!\!\!\!\!\!\!\!\!\!\!\!\!\!\!\!\!\!\!\!\!\!\!\!\!\!\!\!\!\forall k \in \{1, \dots, K\} \\\label{Eq:OperandRequirements}
\begin{bmatrix}
E_{Lp} & \mathbf{0} \\ \mathbf{0} & E_{Ln}
\end{bmatrix} \begin{bmatrix}
U^+_{l_i} \\ U^-_{l_i}
\end{bmatrix}[k] =& \begin{bmatrix}
F_{Lpi} \\ F_{Lni}
\end{bmatrix}[k] && \!\!\!\!\!\!\!\!\!\!\!\!\!\!\!\!\!\!\!\!\!\!\!\!\!\!\!\!\!\!\!\!\!\!\!\!\!\!\!\!\!\forall k \in \{1, \dots, K\}\quad i \in \{1, \dots, |L|\} \\\label{CH6:eq:HFGTprog:comp:Init} 
\begin{bmatrix} Q_B ; Q_{\cal E} ; Q_{SL} \end{bmatrix}[1] =& \begin{bmatrix} C_{B1} ; C_{{\cal E}1} ; C_{{SL}1} \end{bmatrix} \\ \label{CH6:eq:HFGTprog:comp:Fini}
\begin{bmatrix} Q_B ; Q_{\cal E} ; Q_{SL} ; U^- ; U_L^- \end{bmatrix}[K+1] =   &\begin{bmatrix} C_{BK} ; C_{{\cal E}K} ; C_{{SL}K} ; \mathbf{0} ; \mathbf{0} \end{bmatrix}
\end{align}

\begin{itemize}
\item Equations \ref{Eq:ESN-STF1} and \ref{Eq:ESN-STF2} describe the state transition function of an engineering system net (Defn \ref{Defn:ESN} \& \ref{Defn:ESN-STF}).
\item Equation \ref{Eq:DurationConstraint} is the engineering system net transition duration constraint where the end of the $\psi^{th}$ transition occurs $k_{d\psi}$ time steps after its beginning. 
\item Equations \ref{Eq:OperandNet-STF1} and \ref{Eq:OperandNet-STF2} describe the state transition function of each operand net ${\cal N}_{l_i}$ (Defn. \ref{Defn:OperandNet} \& \ref{Defn:OperandNet-STF}) associated with each operand $l_i \in L$.  
\item Equation \ref{Eq:OperandNetDurationConstraint} is the operand net transition duration constraint where the end of the $x^{th}$ transition occurs $k_{dx_{l_i}}$ time steps after its beginning. 
\item Equations \ref{Eq:SyncPlus} and \ref{Eq:SyncMinus} are synchronization constraints that couple the input and output firing vectors of the engineering system net to the input and output firing vectors of the operand nets respectively. $U_L^-$ and $U_L^+$ are the vertical concatenations of the input and output firing vectors $U_{l_i}^-$ and $U_{l_i}^+$ respectively.
\begin{align}
U_L^-[k]&=\left[U^-_{l_1}; \ldots; U^-_{l_{|L|}}\right][k] \\
U_L^+[k]&=\left[U^+_{l_1}; \ldots; U^+_{l_{|L|}}\right][k]
\end{align}
\item Equations \ref{CH6:eq:HFGTprog:comp:Bound} and \ref{Eq:OperandRequirements} are boundary conditions.  Eq. \ref{CH6:eq:HFGTprog:comp:Bound} is a boundary condition constraint that allows some of the engineering system net firing vectors decision variables to be set to an exogenous constant.  Eq. \ref{Eq:OperandRequirements} is a boundary condition constraint that allows some of the operand net firing vector decision variables to be set to an exogenous constant.  
\item Equations \ref{CH6:eq:HFGTprog:comp:Init} and \ref{CH6:eq:HFGTprog:comp:Fini} are the initial and final conditions of the engineering system net and the operand nets where $Q_{SL}$ is the vertical concatenation of the place marking vectors of the operand nets $Q_{Sl_i}$.
\begin{align}
Q_{SL}^-[k]&=\left[Q^-_{Sl_1}; \ldots; U^-_{Sl_{|L|}}\right][k] \\
U_{SL}^+[k]&=\left[U^+_{Sl_1}; \ldots; U^+_{Sl_{|L|}}\right][k]
\end{align}
\end{itemize}

\vspace{0.1in}
\subsubsection{Inequality Constraints}
 $D_{QP}()$ and vector $E_{QP}$ in Equation \ref{ch6:eq:QPcanonicalform:3} place capacity constraints on the vector of decision variables at each time step $x[k] = \begin{bmatrix} Q_B ; Q_{\cal E} ; Q_{SL} ; Q_{{\cal E}L} ; U^- ; U^+ ; U^-_L ; U^+_L \end{bmatrix}[k] \quad \forall k \in \{1, \dots, K\}$. This flexible formulation allows capacity constraints on the place and transition markings in both the engineering system net and operand nets.  

\vspace{0.1in}
\subsubsection{Device Model Constraints}
As mentioned above, g(X,Y) and h(Y) are a set of device model functions whose presence and nature depend on the specific problem application.  They can not be further elaborated until the application domain and its associated capabilities are identified.  

\begin{figure}[H]
\centering
\includegraphics[width=1\textwidth]{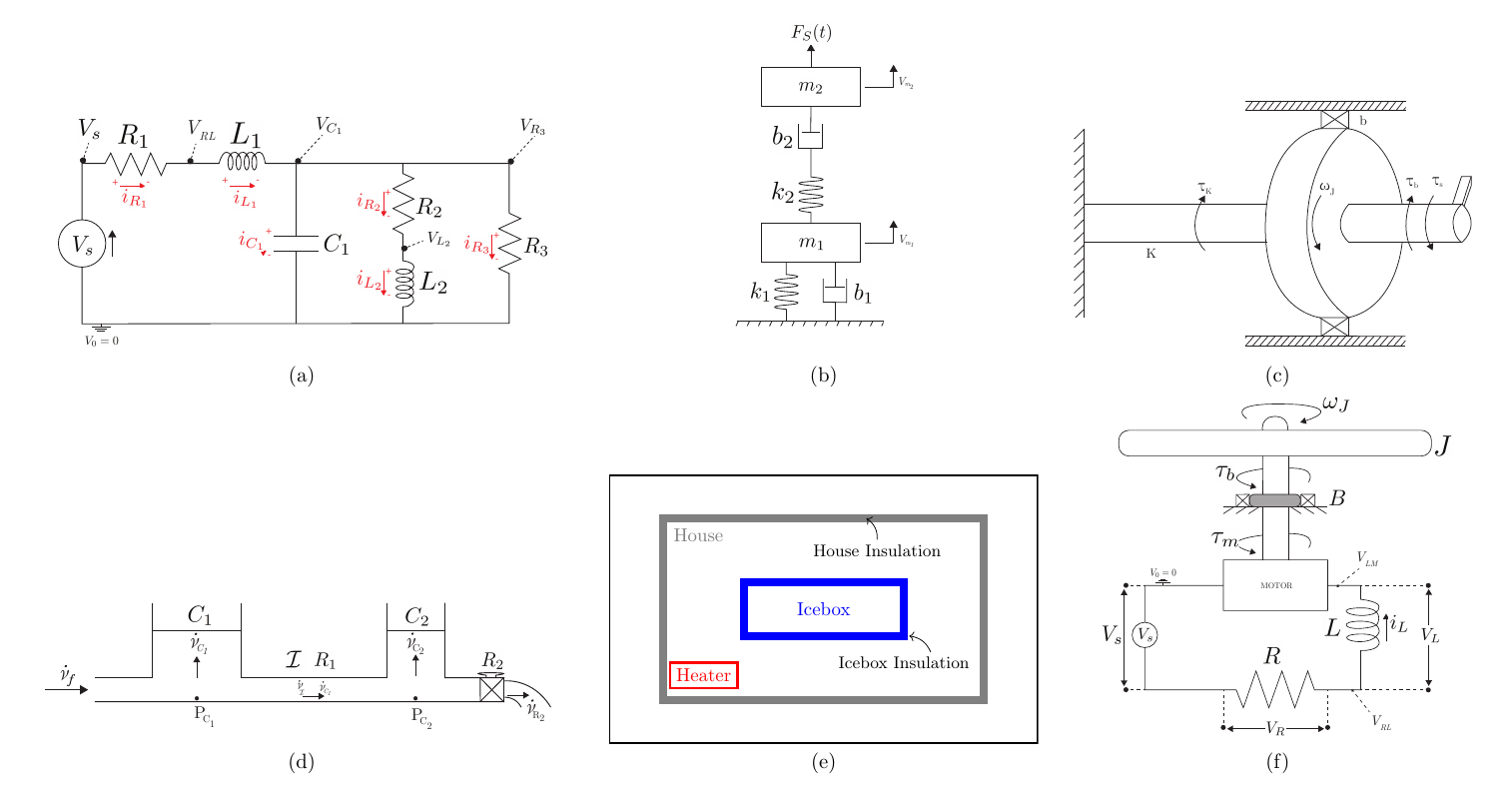}
\caption{Different system domains studied in this paper: (a) Electrical system; (b) Translational mechanical system; (c) Rotational mechanical system; (d) Fluidic system; (e) Thermal system; (f) Electronic motor generator}
\label{fig:illustrative_examples}
\end{figure}

\section{Illustrative Examples}\label{sec:examples}
The overview of linear graphs, bond graphs, and hetero-functional graphs in the previous section, provides a strong foundation for comparison.  In order to concretely describe the relationship between linear graphs, bond graphs, and hetero-functional graphs, this section introduces several illustrative examples for further discussion.  As these graph-based methodologies have been applied to electrical, translational mechanical, rotational mechanical, fluidic, thermal, and multi-energy domains, one example system for each of these domains is provided. Figure \ref{fig:illustrative_examples} summarizes each of these systems graphically.

\begin{itemize}
\item Figure \ref{fig:illustrative_examples}a is an electrical system composed of an ideal voltage source $V_s$, three resistors $R_1$, $R_2$, and $R_3$, two inductors $L_1$ and $L_2$, and one capacitor $C_1$.
\item Figure \ref{fig:illustrative_examples}b is a translational mechanical system composed of an ideal force source $F_s(t)$, two translational dampers $B_1$ and $B_2$, two translational springs $K_1$ and $K_2$, two masses $m_1$ and $m_2$. $V_{m_1}$ and $V_{m_2}$ indicate the sign convention of positive velocity.
\item Figure \ref{fig:illustrative_examples}c is a rotational mechanical system composed of a rotating disk $J$, a torsional spring $K$, a torsional damper $b$, and a torque source $\tau_s(t)$. The angular velocity $\omega_J$ indicates the sign convention of positive angular velocity.
\item Figure \ref{fig:illustrative_examples}d is a fluidic system composed of two tanks $C_1$ and $C_2$, a pipe with inductance ${\cal I}$ and resistance $R_{1}$ and a valve with resistance $R_{2}$, and a volumetric flow rate source $\dot{\cal V}_f$. The pressure measurement points are $P_{C_1}$ and $P_{C_2}$.
\item Figure \ref{fig:illustrative_examples}e is a thermal system composed of a ``House" and an ``ice box" which are considering as thermal capacitors $C_h$ and $C_i$ respectively. Also the ``House Insulation" and ``Ice box insulation " are considered as thermal resistors, $R_h$ and $R_i$ respectively. The system also includes a ``Heater" as a heat flow source $\dot{Q}_s$.  
\item Figure \ref{fig:illustrative_examples}f is an electro-mechanical system composed of an ideal voltage source $V_s$, a resistor $R$, an inductor $L$, a torsional damper $B$, and a rotating disk $J$. The angular velocity $\omega_J$ indicates the sign convention of positive angular velocity of the disk.
\end{itemize}

\section{Linear Graphs by Example}
\label{subsec:Linear_graph_by_example}
Building upon the illustrative examples described in the previous section, this section demonstrates the linear graph methodology by example.  To recall from Fig. \ref{fig:different_graphs} and the overview provided in Sec. \ref{subsec:linear_graphs}, the linear graph methodology consists of four essential steps:
\begin{enumerate}
\item Construct the linear graph from the identified system elements.
\item Translate the linear graph into its corresponding normal tree.  
\item State the continuity, constitutive, and compatibility laws of the system.
\item Simplify these laws into a single state-space model.  
\end{enumerate}
This section follows each of these four steps for the six illustrative examples identified in Fig. \ref{fig:illustrative_examples}.  

\subsection{Electrical System}\label{Subsec:lin_graph_electrical_system}
First, the electrical circuit diagram in Fig. \ref{fig:illustrative_examples}a is transformed into the linear graph shown in Fig. \ref{fig:linear_graph_electrical_system}. Based on Fig. \ref{fig:across_through_variables}, the through-variable in electrical systems is current $i$, and the across-variable is voltage $V$. Furthermore Fig. \ref{fig:graph_elements} shows that resistors, capacitors, and inductors are categorized as D-type, A-type, and T-type elements respectively. Consequently, the voltage source in Fig. \ref{fig:illustrative_examples}a becomes the across-variable source in Fig. \ref{fig:linear_graph_electrical_system}.

Second, the linear graph shown in Fig. \ref{fig:linear_graph_electrical_system} is translated into its associated normal tree in Fig. \ref{fig:normal_tree_electrical_system}. To recall, in the linear graph methodology, a normal tree is a ``spanning tree" \cite{Hillier:2010:00,Steen:2010:00} that includes the following elements in order of priority:  
\begin{enumerate*}
\item all the system graph nodes,
\item all across variable sources,
\item as many A-Type elements as possible,
\item in the case of existing transformers or gyrators, include one branch of each transformer and both or neither branch of each gyrator,
\item as many D-Type elements as possible,
\item as many T-Type elements as possible, and
\item as many through variable sources as possible,
\end{enumerate*}
without creating any loops in the graph.
\cite{Rowell:1997:00}. For the electrical system in Fig. \ref{fig:linear_graph_electrical_system}, the $V_s$, $C_1$, $R_1$, and $R_2$ elements are included following the above prioritization, and consequently the $R_3$, $L_1$, and $L_2$ elements are removed. Importantly, the A-Type elements in the normal tree, and T-Type elements hidden from the normal tree, define the state variables in the system:  $V_{C_1}$ for the capacitor, and $i_{L_1}$ and $i_{L_2}$ for the two inductors.  

\begin{figure}[H] 
\centering
\begin{subfigure}[b]{0.5\textwidth} 
\centering
\includegraphics[width=0.6\linewidth]{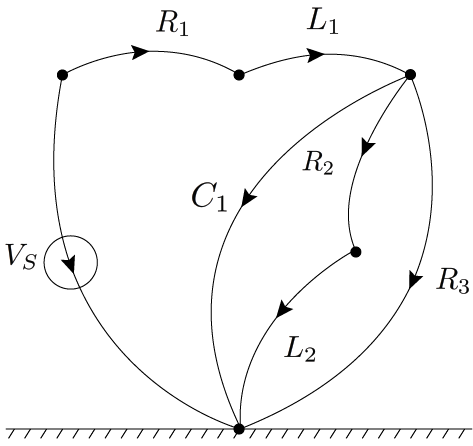} 
\caption{Linear graph}
\label{fig:linear_graph_electrical_system}
\end{subfigure}%
\hfill 
\begin{subfigure}[b]{0.5\textwidth}
\centering
\includegraphics[width=0.7\linewidth]{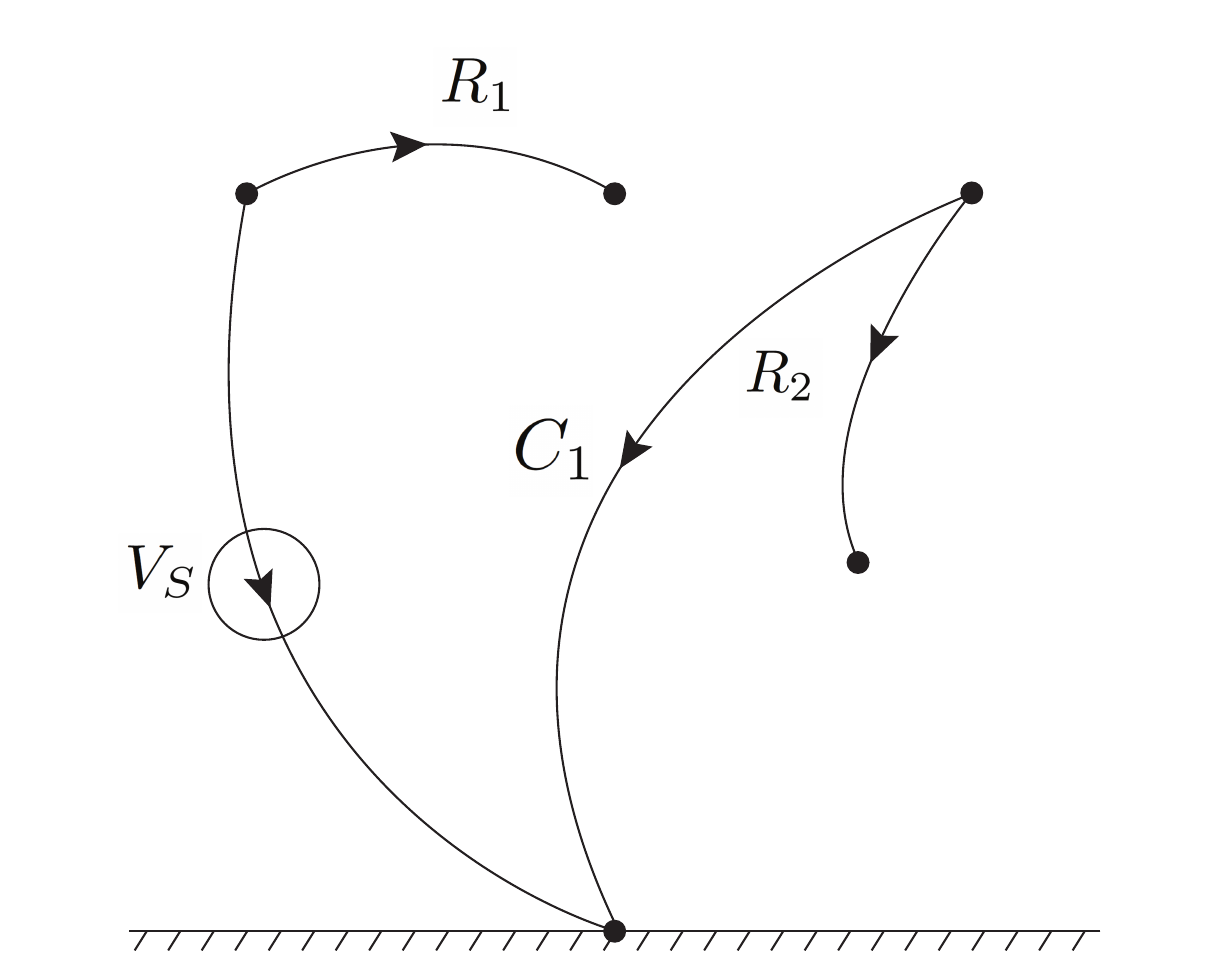} 
\caption{Normal tree}
\label{fig:normal_tree_electrical_system}
\end{subfigure}
\caption{Linear graph and normal tree of the electrical system, as referred to in Fig. 5a}
\end{figure}

Third, the normal tree facilitates the statement of the electrical system's constitutive, continuity, and compatibility laws. The constitutive laws of the system are: 
\begin{align}
\frac{dV_{C_1}}{dt} &= \frac{1}{C_1} i_{C_1}, \label{eq:electrical_const_1} \\
V_{R_1} &= i_{R_1} R_1, \label{eq:electrical_const_2} \\
\frac{di_{L_1}}{dt} &= \frac{1}{L_1} V_{L_1}, \label{eq:electrical_const_3} \\
V_{R_2} &= i_{R_2} R_2, \label{eq:electrical_const_4} \\
\frac{di_{L_2}}{dt} &= \frac{1}{L_2} V_{L_2}, \label{eq:electrical_const_5} \\
i_{R_3} &= \frac{1}{R_3} V_{R_3}. \label{eq:electrical_const_6}
\end{align}
Additionally, the continuity laws are:
\begin{align}
i_{C_1} &= i_{L_1} - i_{R_2} - i_{R_3}, \label{eq:electrical_cont_1} \\
i_{R_1} &= i_{L_1}, \label{eq:electrical_cont_2} \\
i_{R_2} &= i_{L_2}. \label{eq:electrical_cont_3}
\end{align}
Also, the compatibility laws are:
\begin{align}
V_{L_1} &= V_S - V_{C_1} - V_{R_1}, \label{eq:electrical_comp_1} \\
V_{L_2} &= V_{C_1} - V_{R_2}, \label{eq:electrical_comp_2} \\
V_{R_3} &= V_{C_1}. \label{eq:electrical_comp_3}
\end{align}

Finally, these laws are simplified algebraically to produce a state space model in Eq. \ref{eq:electrical_state}.

\begin{equation}
\frac{d}{dt}
\begin{bmatrix}
V_{C_1} \\
i_{L_1} \\
i_{L_2}
\end{bmatrix}
=
\begin{bmatrix}
-\frac{1}{R_3C_1} & \frac{1}{C_1} & -\frac{1}{C_1} \\
-\frac{1}{L_1} & -\frac{R_1}{L_1} & 0 \\
\frac{1}{L_2} & 0 & -\frac{R_2}{L_2}
\end{bmatrix}
\begin{bmatrix}
V_{C_1} \\
i_{L_1} \\
i_{L_2}
\end{bmatrix}
+
\begin{bmatrix}
0 \\
\frac{1}{L_1} \\
0
\end{bmatrix}
V_S \label{eq:electrical_state}
\end{equation}

\subsection{Translational Mechanical System}
The state space model of the translational mechanical system in Fig. \ref{fig:illustrative_examples}b is developed similarly.  First, it is transformed into the linear graph shown in Fig. \ref{fig:linear_graph_translational_mechanical_system}. According to Fig. \ref{fig:across_through_variables}, the through-variable in translation mechanical systems is force $F$, and the across-variable is velocity $V$. Furthermore, Fig. \ref{fig:graph_elements} shows that translational dampers, masses, and translational springs are categorized as D-type, A-type, and T-type elements respectively. Consequently, a force source is a through variable source.

Second, the linear graph shown in Fig. \ref{fig:linear_graph_translational_mechanical_system} is translated into the normal tree in Fig. \ref{fig:normal_tree_translational_mechanical_system}.
Given the prioritization exposited in Sec. \ref{Subsec:lin_graph_electrical_system}, the $m_1$, $m_2$, and $b_2$ elements in this translational mechanical system have been included, and consequently the $F_s$, $b_1$, $k_1$, and $k_2$ elements are removed. The across variables of the A-Type elements included in the normal tree $V_{m_1}$ and $V_{m_2}$, and the through variables of T-Type elements not included in the normal tree $F_{k_1}$ and $F_{k_2}$, are identified as system state variables.  
\begin{figure}[H] 
\centering
\begin{subfigure}[b]{0.5\textwidth} 
\centering
\includegraphics[width=0.6\linewidth]{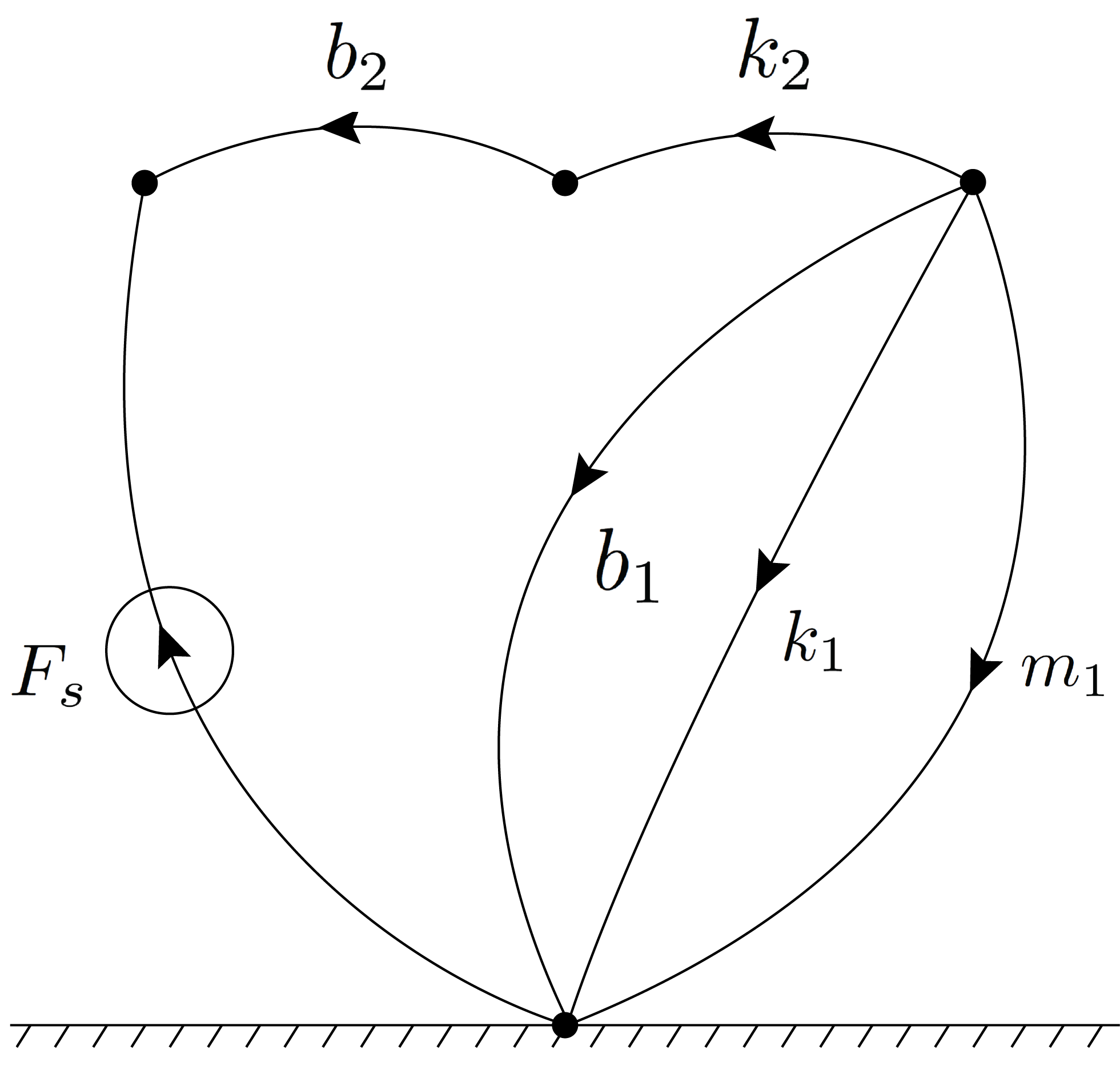} 
\caption{Linear graph}
\label{fig:linear_graph_translational_mechanical_system}
\end{subfigure}%
\hfill 
\begin{subfigure}[b]{0.5\textwidth}
\centering
\includegraphics[width=0.7\linewidth]{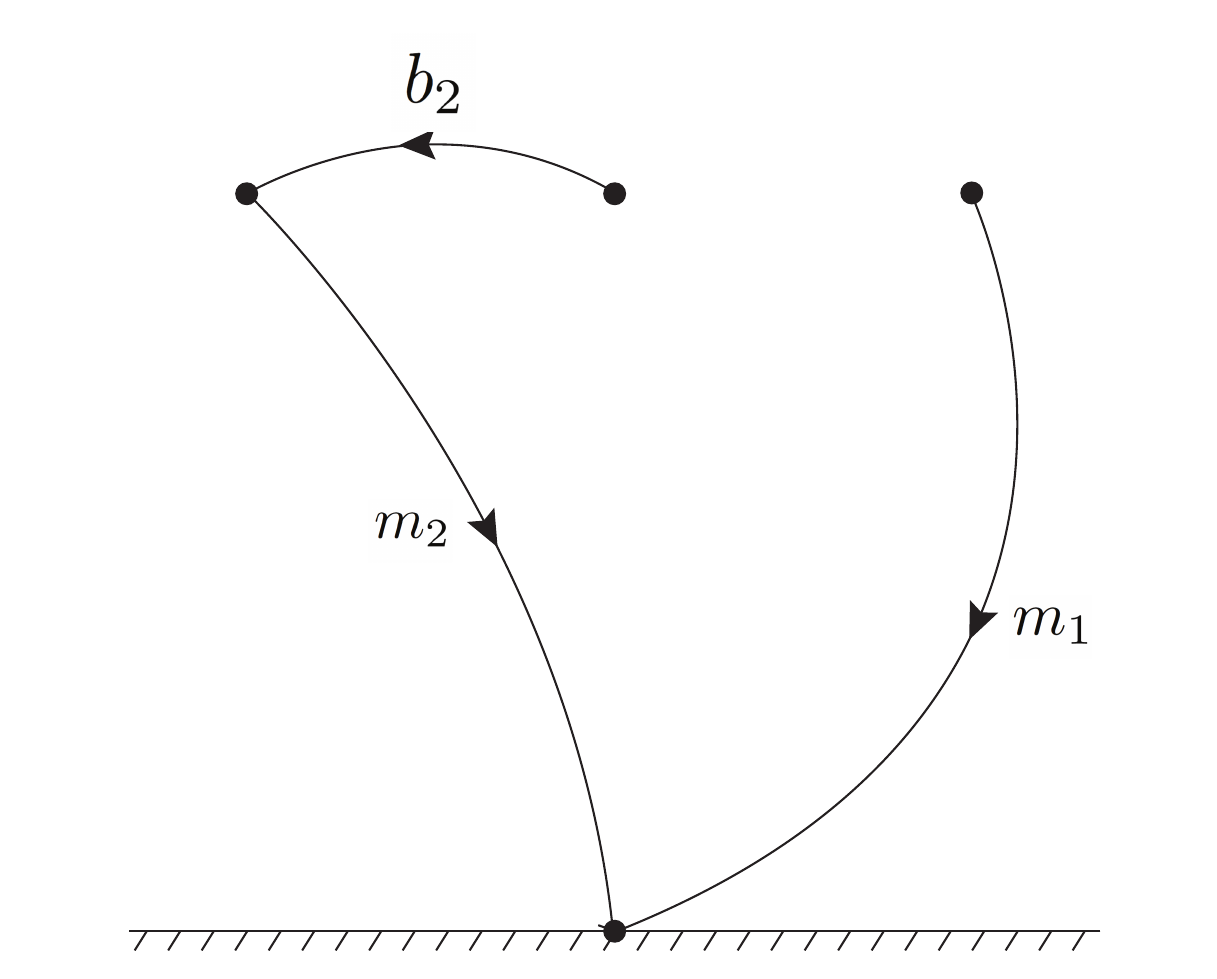} 
\caption{Normal tree}
\label{fig:normal_tree_translational_mechanical_system}
\end{subfigure}
\caption{Linear graph and normal tree of the translational mechanical system, as referred to in Fig. 5b}
\end{figure}

Thirdly, the normal tree facilitates the statement of the translational mechanical system's constitutive, continuity, and compatibility laws. The constitutive laws are:  
\begin{align}
\frac{dV_{m_1}}{dt} &= \frac{1}{m_1} F_{m_1}, \label{eq:trans_const_1} \\
\frac{dV_{m_2}}{dt} &= \frac{1}{m_2} F_{m_2}, \label{eq:trans_const_2} \\
\frac{dF_{k_1}}{dt} &= k_1 V_{k_1}, \label{eq:trans_const_3} \\
\frac{dF_{k_2}}{dt} &= k_2 V_{k_2}, \label{eq:trans_const_4} \\
V_{b_2} &= \frac{1}{b_2} F_{b_2}, \label{eq:trans_const_5} \\
F_{b_1} &= b_1 V_{b_1} \label{eq:trans_const_6}
\end{align}
The continuity laws are:
\begin{align}
F_{m_1} &= - F_{b_1} - F_{k1} - F_{k2}, \label{eq:trans_cont_1} \\
F_{m_2} &= F_{b_2} + F_S, \label{eq:trans_cont_2} \\
F_{b_2} &= F_{k_2} \label{eq:trans_cont_3}
\end{align}
The compatibility laws are:
\begin{align}
V_{k_1} &= V_{m_1}, \label{eq:trans_comp_1} \\
V_{k_2} &= V_{m_1} - V_{m_2} - V_{b_2}, \label{eq:trans_comp_2} \\
V_{b_1} &= V_{m_1} \label{eq:trans_comp_3}
\end{align}

Finally, these equations are algebraically simplified to produce the system's state space model in Equation \ref{eq:trans_state}.  
\begin{equation}
\frac{d}{dt}
\begin{bmatrix}
V_{m_1} \\
V_{m_2} \\
F_{k_1} \\
F_{k_2}
\end{bmatrix}
=
\begin{bmatrix}
-\frac{b_1}{m_1} & 0 & -\frac{1}{m_1} & -\frac{1}{m_1} \\
0 & 0 & 0 & \frac{1}{m_2} \\
k_1 & 0 & 0 & 0 \\
k_2 & -k_2 & 0 & -\frac{k_2}{b_2}
\end{bmatrix}
\begin{bmatrix}
V_{m_1} \\
V_{m_2} \\
F_{k_1} \\
F_{k_2}
\end{bmatrix}
+
\begin{bmatrix}
0 \\
\frac{1}{m_2} \\
0 \\
0
\end{bmatrix}
V_S \label{eq:trans_state}
\end{equation}

\subsection{Rotational Mechanical System}
The linear graph methodology is applied similarly to the rotational mechanical system in Fig. \ref{fig:illustrative_examples}c.   It is transformed into the linear graph shown in Fig. \ref{fig:linear_graph_rotational_mechanical_system}. Referring back to Fig. \ref{fig:across_through_variables}, the through-variable in a rotational mechanical system is torque $\tau$, and the across-variable is angular velocity $\omega$. Also referring back to Fig. \ref{fig:graph_elements}, rotational dampers, rotational disks, and rotational springs are categorized as D-type, A-type, and T-type elements respectively. Consequently, a torque source is a through-variable source.

Second, the linear graph shown in Fig. \ref{fig:linear_graph_rotational_mechanical_system} is translated into the normal tree in Fig. \ref{fig:normal_tree_rotational_mechanical_system}.  Following the prioritization mentioned in Sec \ref{Subsec:lin_graph_electrical_system}, the $J$ element in this rotational mechanical system is included accordingly, and consequently, the $b$, $K$, and $\tau_s$ elements are necessarily removed. Furthermore, the across variables of A-Type elements included in the normal tree $\omega_{J}$, and the through variables of T-Type elements not included in the normal tree $\tau_{K}$ are the state variables of this system.
\begin{figure}[H] 
\centering
\begin{subfigure}[b]{0.5\textwidth} 
\centering
\includegraphics[width=0.63\linewidth]{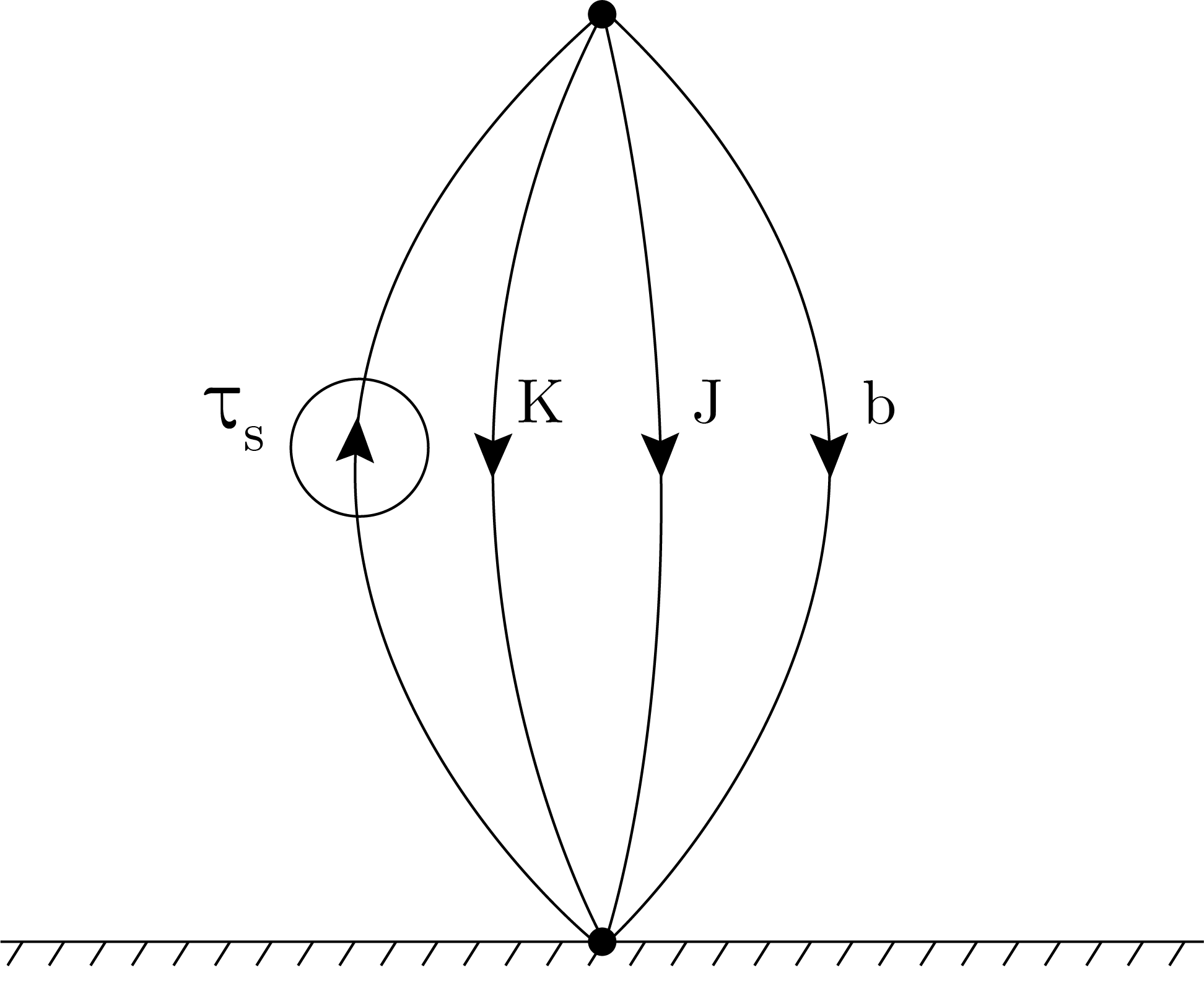} 
\caption{Linear graph}
\label{fig:linear_graph_rotational_mechanical_system}
\end{subfigure}%
\hfill 
\begin{subfigure}[b]{0.5\textwidth}
\centering
\includegraphics[width=0.7\linewidth]{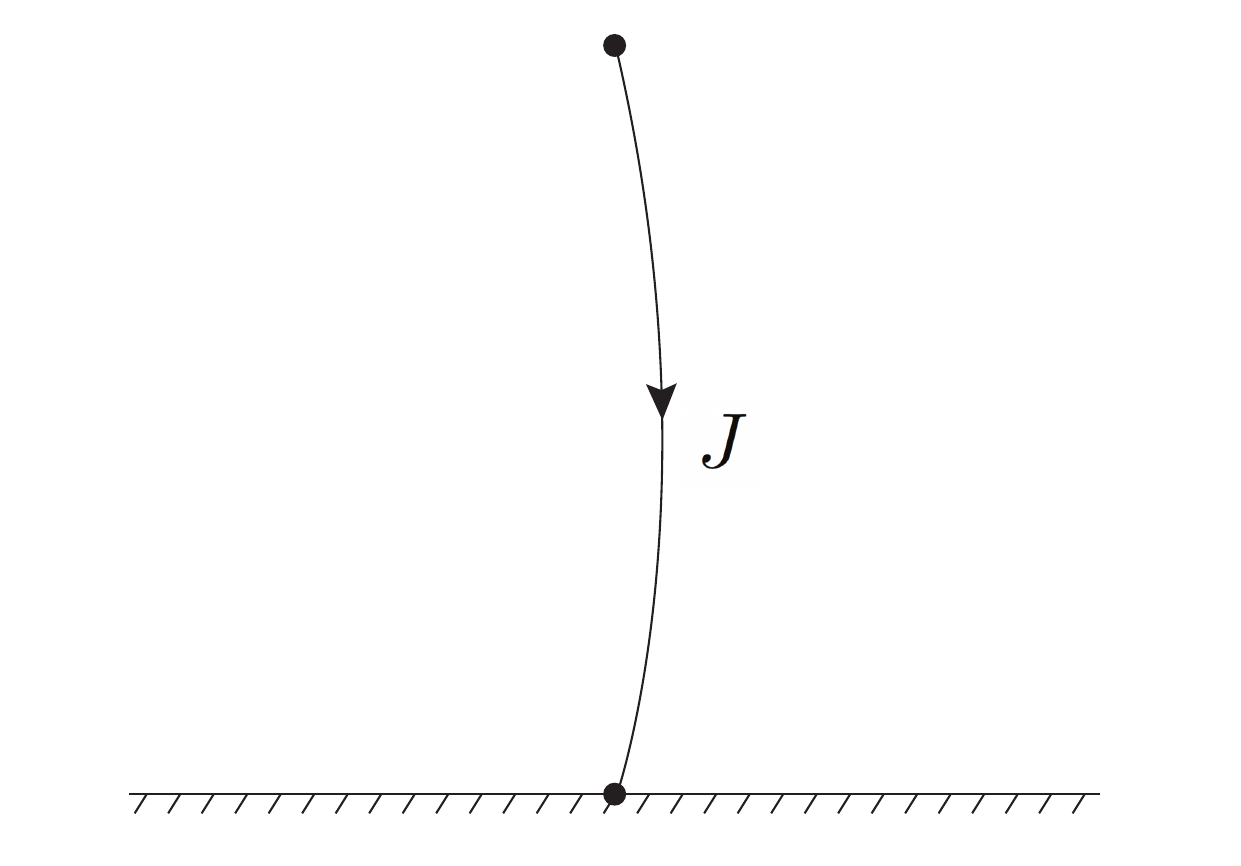} 
\caption{Normal tree}
\label{fig:normal_tree_rotational_mechanical_system}
\end{subfigure}
\caption{Linear graph and normal tree of the rotational mechanical system, as referred to in Fig. 5c}
\end{figure}

Third, the normal tree facilitates the statement of the rotational mechanical system's constitutive, continuity, and compatibility laws. The constitutive laws are:
\begin{align}
\frac{d\omega_J}{dt} &= \frac{1}{J} \tau_J \label{eq:rot_const_1} \\
\frac{d\tau_K}{dt} &= K \omega_K \label{eq:rot_const_2} \\
\tau_b &= b \omega_b \label{eq:rot_const_3}
\end{align}
The continuity law is: 
\begin{align}
\tau_J &= \tau_s - \tau_K - \tau_b \label{eq:rot_cont_1}
\end{align}
The compatibility laws are:
\begin{align}
\omega_K &= \omega_J \label{eq:rot_comp_1} \\
\omega_b &= \omega_J \label{eq:rot_comp_2}
\end{align}

Finally, these laws are simplified algebraically to produce a state space model in Eq. \ref{eq:rot_state}.
\begin{equation}
\frac{d}{dt} \begin{bmatrix} \omega_J \\ \tau_K \end{bmatrix} = \begin{bmatrix} -\frac{b}{J} & -\frac{1}{J} \\ k & 0 \end{bmatrix} \begin{bmatrix} \omega_J \\ \tau_K \end{bmatrix} + \begin{bmatrix} \frac{1}{J} \\ 0 \end{bmatrix} \tau_s \label{eq:rot_state}
\end{equation}

\subsection{Fluidic System}
The linear graph methodology is also applied to the fluidic system in Fig. \ref{fig:illustrative_examples}d. It is transformed into the linear graph shown in Fig. \ref{fig:linear_graph_fluidic_system}. In the fluidic system domain, Fig. \ref{fig:across_through_variables} shows that the through-variable is volumetric flow rate $\dot{\cal V}$, and the across-variable is pressure difference $P$.  Similarly, Fig. \ref{fig:graph_elements} shows that a fluidic resistance in a pipe or valve, a fluidic capacitance in a tank, and a fluidic inertance are categorized as D-type, A-type, and T-type elements respectively. Consequently, a fluid flow source is a through-variable source.

Second, the linear graph shown in Fig. \ref{fig:linear_graph_fluidic_system} is translated into its associated normal tree in Fig. \ref{fig:normal_tree_fluidic_system}.
Given the prioritization defined in Sec. \ref{Subsec:lin_graph_electrical_system}, the $C_1$, $C_2$, and $R_1$ elements are included, and consequently $\dot{\cal V}_f$, $R_2$, and $\cal I$ elements are necessarily removed. The across variables of A-Type elements included in the normal tree $P_{C_1}$ and $P_{C_2}$, and the through variables of T-Type elements not included in the normal tree $\dot{\cal V_{\cal I}}$, are the system state variables.

Third, the constitutive, continuity, and compatibility laws are determined from the normal tree. The constitutive laws are:
\begin{align}
\frac{dp_{C_1}}{dt} &= \frac{1}{C_1} \dot{\cal V}_{C_1} \label{eq:fluid_const_1} \\
\frac{dp_{C_2}}{dt} &= \frac{1}{C_2} \dot{\cal V}_{C_2} \label{eq:fluid_const_2} \\
\frac{d\dot{\cal V}_{\cal I}}{dt} &= \frac{1}{I} \dot{\cal V}_{R_1} \label{eq:fluid_const_3} \\
P_{R_1} &= R_1 P_{R_2} \label{eq:fluid_const_4} \\
\dot{\cal V}_{R_2} &= \frac{1}{R_2} P_{R_2} \label{eq:fluid_const_5}
\end{align}
The continuity laws are:
\begin{align}
\dot{\cal V}_{C_1} &= \dot{\cal V}_f - \dot{\cal V}_{\cal I} \label{eq:fluid_cont_1} \\
\dot{\cal V}_{C_2} &= \dot{\cal V}_{R_1} - \dot{\cal V}_{R_2} \label{eq:fluid_cont_2} \\
\dot{\cal V}_{R_1} &= \dot{\cal V}_{\cal I} \label{eq:fluid_cont_3}
\end{align}
The compatibility laws are:
\begin{align}
P_{\cal I} &= P_{C_1} - P_{C_2} - P_{R_1} \label{eq:fluid_comp_1} \\
P_{R_2} &= P_{C_2} \label{eq:fluid_comp_2}
\end{align}

Finally, these equations are simplified into the state space model in Eq \ref{eq:fluid_state}.
\begin{equation}
\frac{d}{dt} \begin{bmatrix} P_{C_1} \\ P_{C_2} \\ \dot{\cal V}_I \end{bmatrix} = \begin{bmatrix} 0 & 0 & -\frac{1}{C_1} \\ 0 & -\frac{1}{C_2 R_2} & \frac{1}{C_2} \\ \frac{1}{\cal I} & -\frac{1}{\cal I} & -\frac{R_1}{\cal I} \end{bmatrix} \begin{bmatrix} P_{C_1} \\ P_{C_2} \\ \dot{\cal V}_{\cal I} \end{bmatrix} + \begin{bmatrix} \frac{1}{C_1} \\ 0 \\ 0 \end{bmatrix} \dot{\cal V}_f \label{eq:fluid_state}
\end{equation}
\begin{figure} 
\centering
\begin{subfigure}[b]{0.4\textwidth} 
\centering
\includegraphics[width=0.65\linewidth]{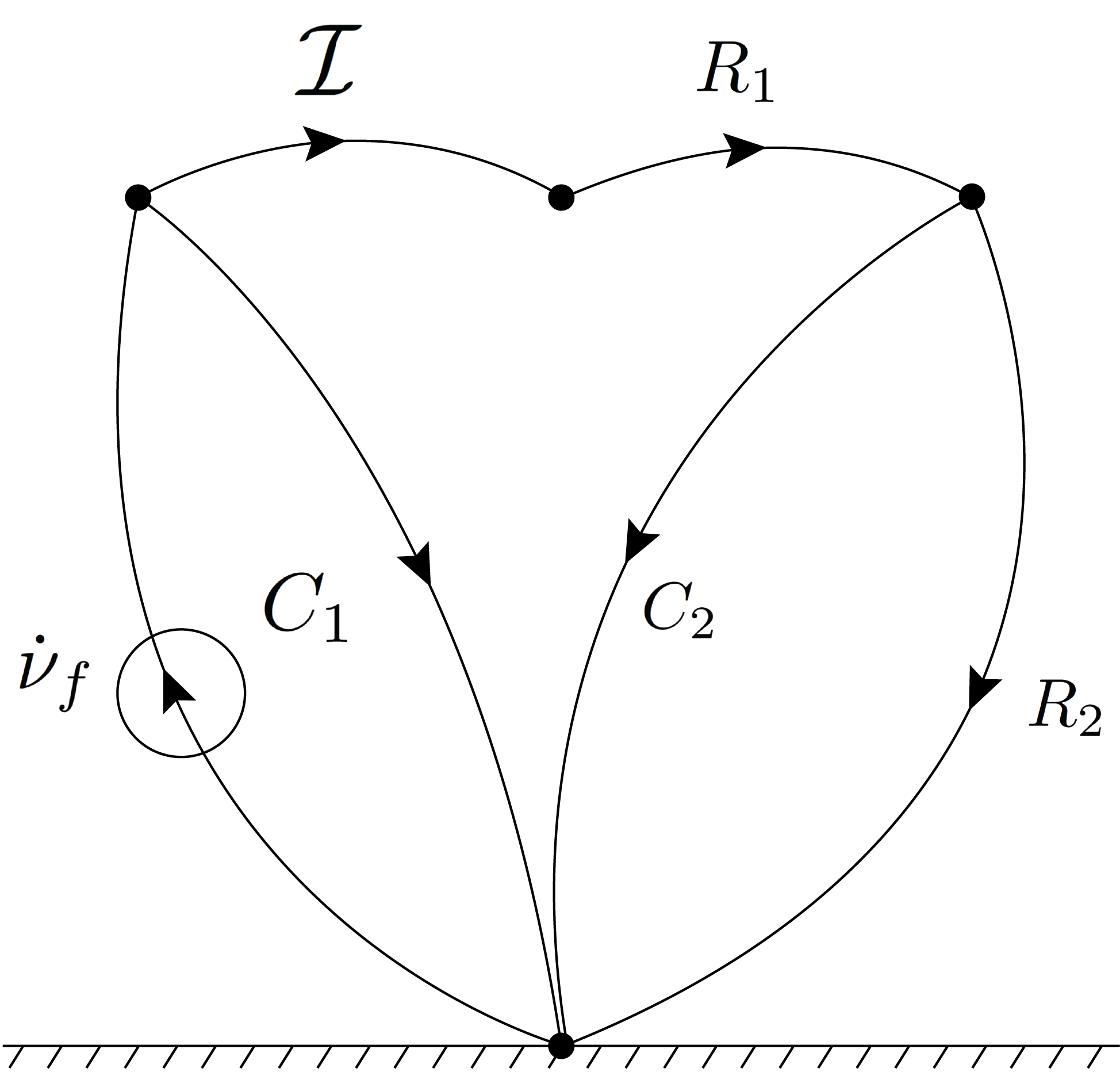} 
\caption{Linear graph}
\label{fig:linear_graph_fluidic_system}
\end{subfigure}%
\hfill 
\begin{subfigure}[b]{0.4\textwidth}
\centering
\includegraphics[width=0.7\linewidth]{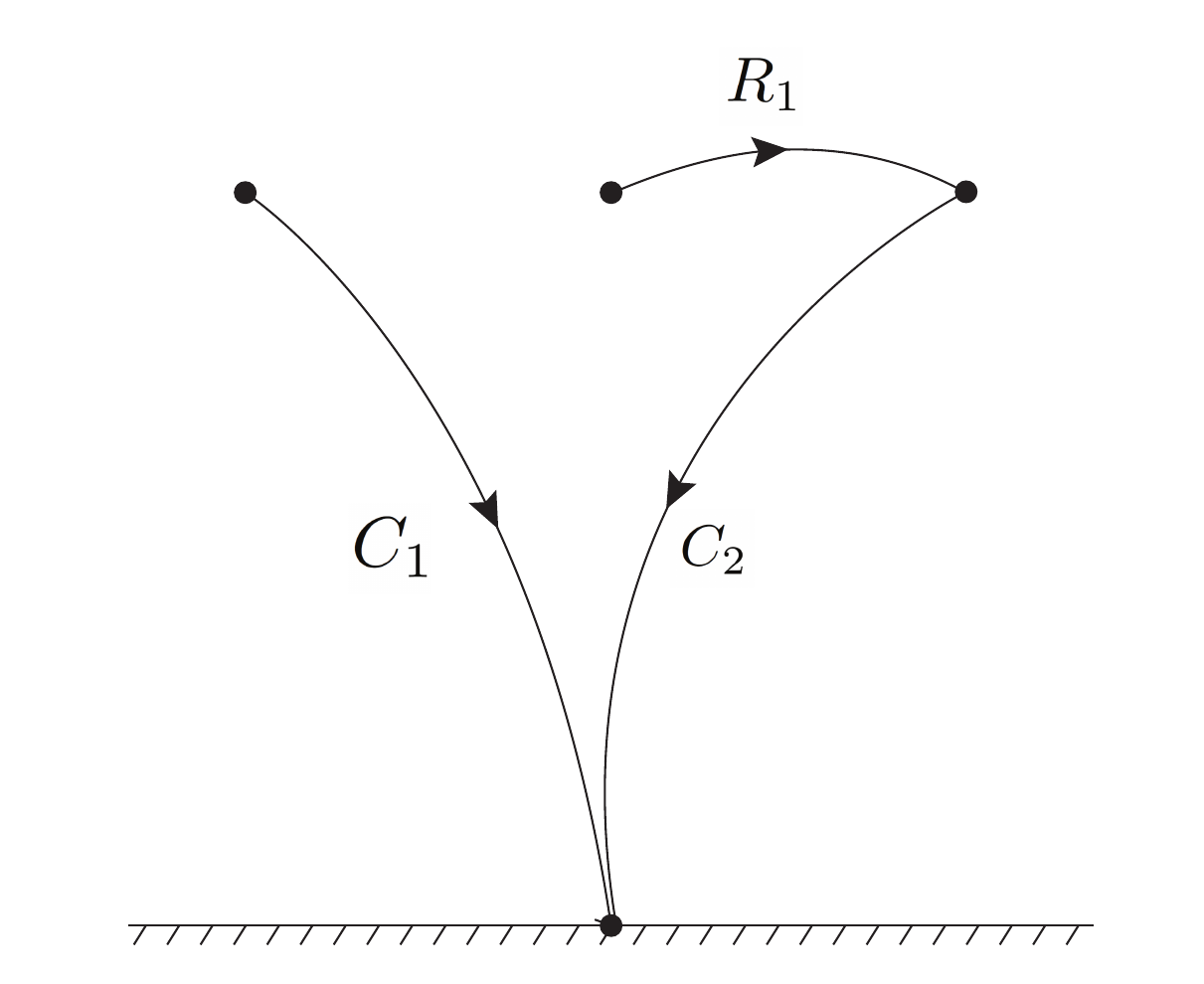} 
\caption{Normal tree}
\label{fig:normal_tree_fluidic_system}
\end{subfigure}
\caption{Linear graph and normal tree of the fluidic system, as referred to in Fig. 5d}
\end{figure}
\subsection{Thermal System}
The linear graph methodology is also applied to the thermal system depicted in Fig. \ref{fig:illustrative_examples}e.  To facilitate the process, this thermal system is first converted into an analogous electrical circuit as illustrated in Fig. \ref{fig:thermal_system_electrical_circuit}.  Furthermore, the following notation is adopted.  
\begin{itemize}
\item $T_{C_i}$: The temperature within the icebox, representing the temperature difference across the $C_i$ element, which is the thermal capacitance of the icebox.
\item $T_{R_i}$: The temperature drop across icebox wall resistor.
\item $T_{C_h}$:  The temperature within the house, representing the temperature difference across the $C_h$ element, which is the thermal capacitance of the house.
\item $T_{R_h}$: The temperature drop across the house wall resistor.
\item $\dot {Q}_{C_i}$: The heat flow into the icebox.
\item $\dot {Q}_{R_i}$: The heat flow through the icebox wall.
\item $\dot {Q}_{C_h}$: The heat flow into the house.
\item $\dot {Q}_{R_h}$: The heat flow through the house wall.
\end{itemize}
\begin{figure}[H]
\centering
\includegraphics[width=0.6\textwidth]{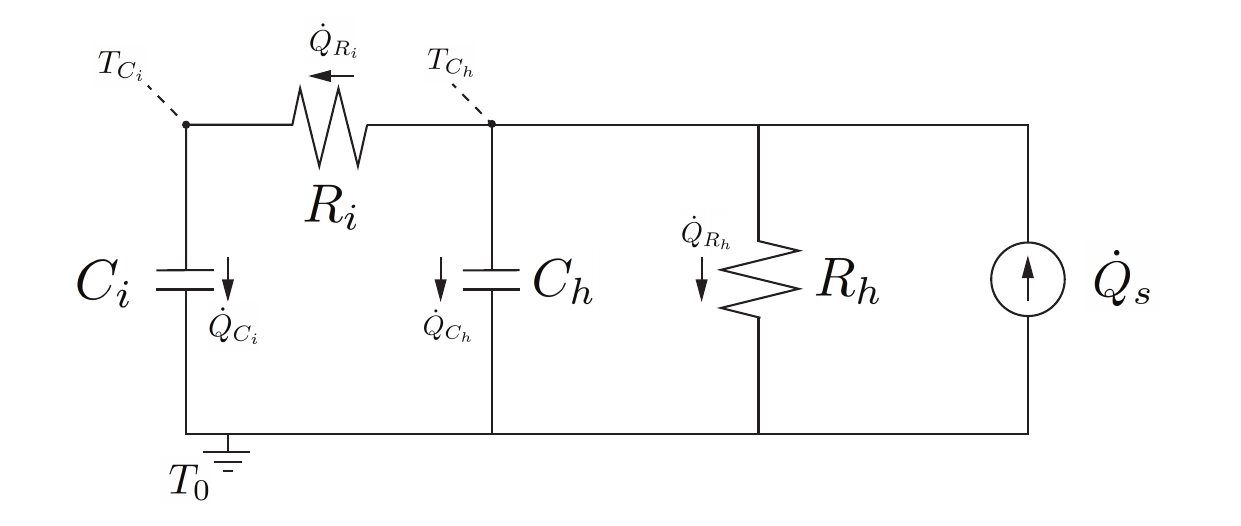}
\caption{Equivalent electrical system to the thermal system shown in Fig. \ref{fig:illustrative_examples}e}
\label{fig:thermal_system_electrical_circuit}
\end{figure}
From this point, the linear graph methodology is straightforwardly applied.  The fluidic system in Fig. \ref{fig:thermal_system_electrical_circuit} is transformed into the linear graph shown in Fig. \ref{fig:linear_graph_thermal_system}.  Next, Fig. \ref{fig:across_through_variables} shows that, in thermal systems, the through-variable is heat flow rate $\dot{Q}$, and the across-variable is temperature $T$. Furthermore, Fig. \ref{fig:across_through_variables} states that thermal resistances and capacitances are D-type and A-type elements respectively. Consequently, a fluid flow source is a through-variable source. 

Second, the associated normal tree is derived in Fig. \ref{fig:normal_tree_thermal_system}.  Given the prioritization defined in Sec. \ref{Subsec:lin_graph_electrical_system}, the $C_h$, and $C_i$ elements are included, and consequently, $\dot{Q_s}$, $R_h$, and $R_i$ elements are necessarily removed. The across variables of A-Type elements included in the normal tree $T_{C_i}$ and $T_{C_h}$ are the state variables of this system.

\begin{figure}[H] 
\centering
\begin{subfigure}[b]{0.5\textwidth} 
\centering
\includegraphics[width=0.65\linewidth]{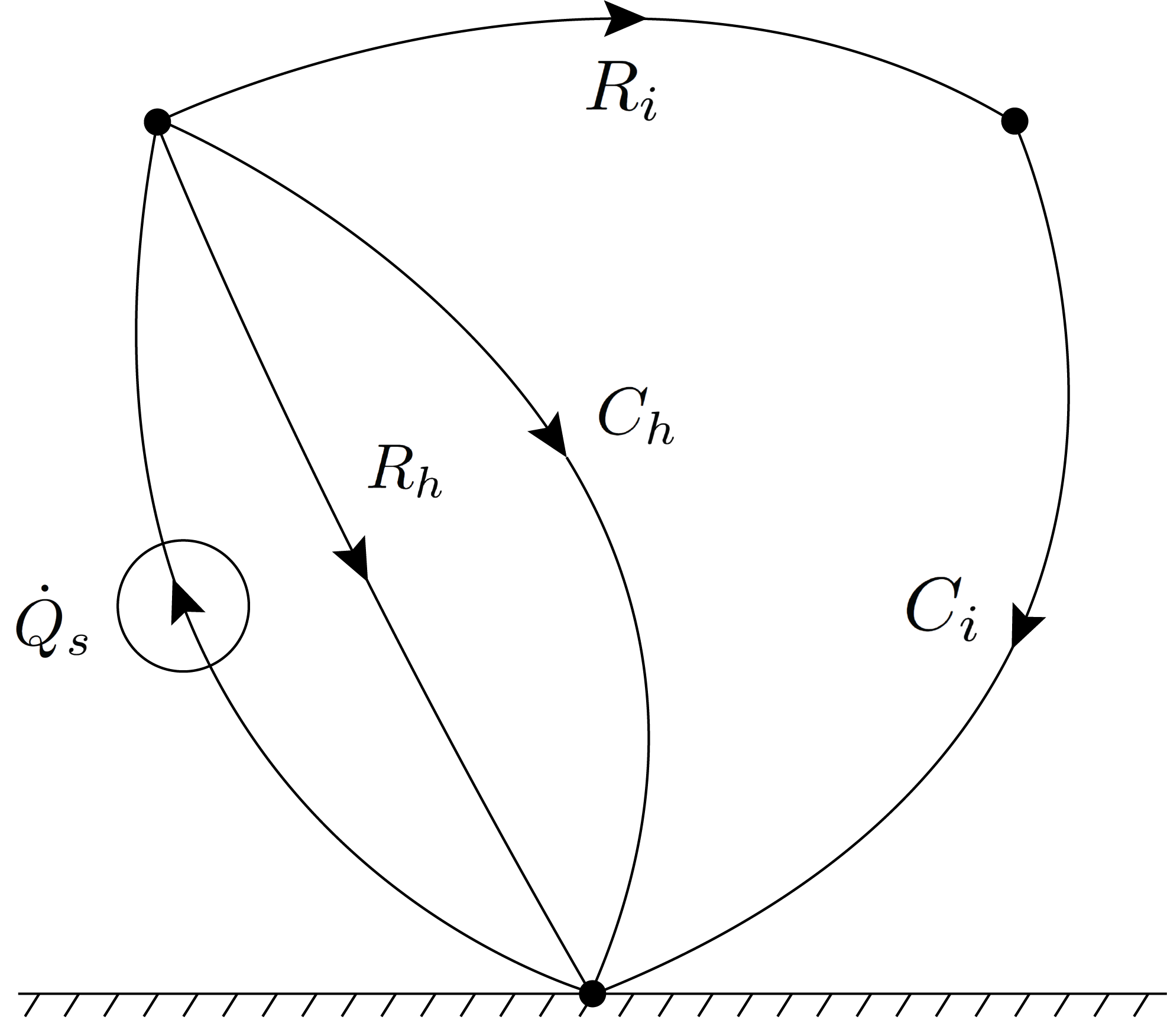} 
\caption{Linear graph}
\label{fig:linear_graph_thermal_system}
\end{subfigure}%
\hfill 
\begin{subfigure}[b]{0.5\textwidth}
\centering
\includegraphics[width=0.7\linewidth]{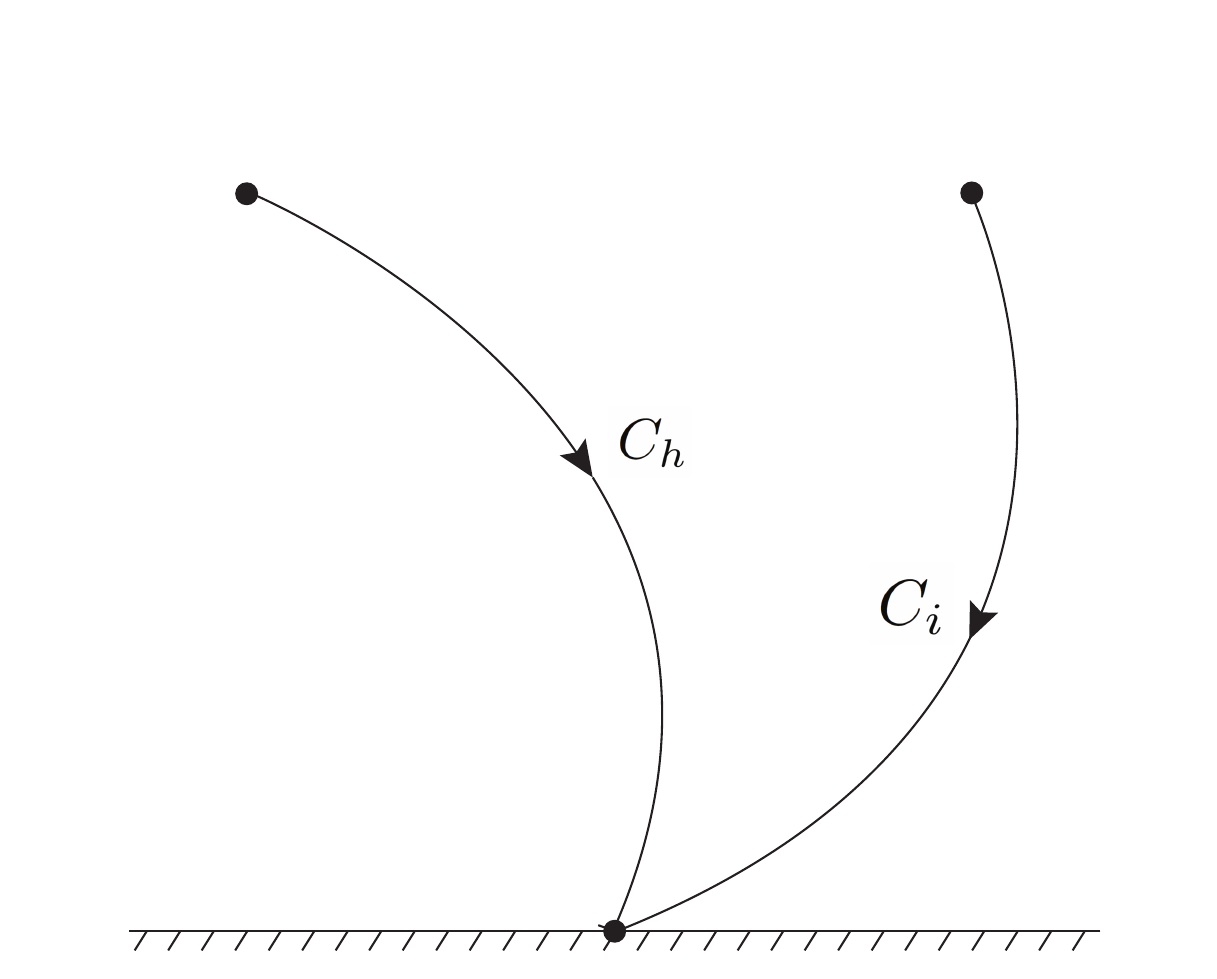} 
\caption{Normal tree}
\label{fig:normal_tree_thermal_system}
\end{subfigure}
\caption{Linear graph and normal tree of the thermal system, as referred to in Fig. 5e}
\end{figure}

Third, the normal tree facilitates the statement of the thermal system's constitutive, continuity, and compatibility laws. The constitutive laws are:
\begin{align}
\frac{dT_{C_i}}{dt} &= \frac{1}{C_i} \dot{Q}_{C_i} \label{eq:thermal_const_1} \\
\frac{dT_{C_h}}{dt} &= \frac{1}{C_h} \dot{Q}_{C_h} \label{eq:thermal_const_2} \\
\dot{Q}_{R_h} &= \frac{1}{R_h} T_{R_h} \label{eq:thermal_const_3} \\
\dot{Q}_{R_i} &= \frac{1}{R_i} T_{R_i} \label{eq:thermal_const_4}
\end{align}
The continuity laws are:
\begin{align}
\dot{Q}_{C_i} &= \dot{Q}_{R_i} \label{eq:thermal_cont_1} \\
\dot{Q}_{C_h} &= \dot{Q}_s - \dot{Q}_{R_i} - \dot{Q}_{R_h} \label{eq:thermal_cont_2}
\end{align}
The compatibility laws are:
\begin{align}
T_{R_h} &= T_{C_h} \label{eq:thermal_comp_1} \\
T_{R_i} &= T_{C_h} - T_{C_i} \label{eq:thermal_comp_2}
\end{align}

Finally, these equations are algebraically simplified to the state space model in Eq. \ref{eq:thermal_state}
\begin{equation}
\frac{d}{dt} \begin{bmatrix} T_{C_i} \\ T_{C_h} \end{bmatrix} = \begin{bmatrix}  -\frac{1}{C_i R_i} & \frac{1}{C_i R_i} \\ \frac{1}{R_i C_h} & -\frac{1}{R_i C_h}-\frac{1}{C_h R_h} \end{bmatrix} \begin{bmatrix} T_{C_i} \\ T_{C_h} \end{bmatrix} + \begin{bmatrix} 0 \\ \frac{1}{C_h} \end{bmatrix} \dot{Q}_s \label{eq:thermal_state}
\end{equation}

\subsection{Multi-Energy System}
First, the linear graph methodology is also applied to the electro-mechanical system shown in Fig. \ref{fig:illustrative_examples}f.  It is transformed into the linear graph shown in Fig. \ref{fig:linear_graph_multi_system}. As electro-mechanical systems are a combination of electrical and (rotational) mechanical systems; current $i$ and torque $\tau$ are the through variables, while voltage $V$ and angular velocity $\omega$ are the across-variables.  Additionally, referring to Fig. \ref{fig:graph_elements}, the electrical inductor and the rotational spring are the T-type energy storage elements.  The electrical capacitor and the rotating disk are A-type energy storage elements.  The electrical resistor and the mechanical damper are the D-type elements. The voltage $V_{S}$ represents an ideal across-variable source.  Lastly, a transformer element connects the electrical subsystem to the rotational subsystem and transforms power between the two different domains.

Second, Fig. \ref{fig:normal_tree_multi_system} shows the normal tree associated with the electro-mechanical system.  Given the prioritization defined in Sec. \ref{Subsec:lin_graph_electrical_system}, the $V_S$, the $J$, the electrical branch of the transformer, and the $R$ elements are included.  Consequently, the $L$, the $B$, and the mechanical branch of the transformer are necessarily removed. As mentioned in the electrical and rotational mechanical system examples, $\omega_{J}$, and $i_{L}$ are the state variables of this system.
\begin{figure}[H] 
\centering
\begin{subfigure}[b]{0.5\textwidth} 
\centering
\includegraphics[width=0.67\linewidth]{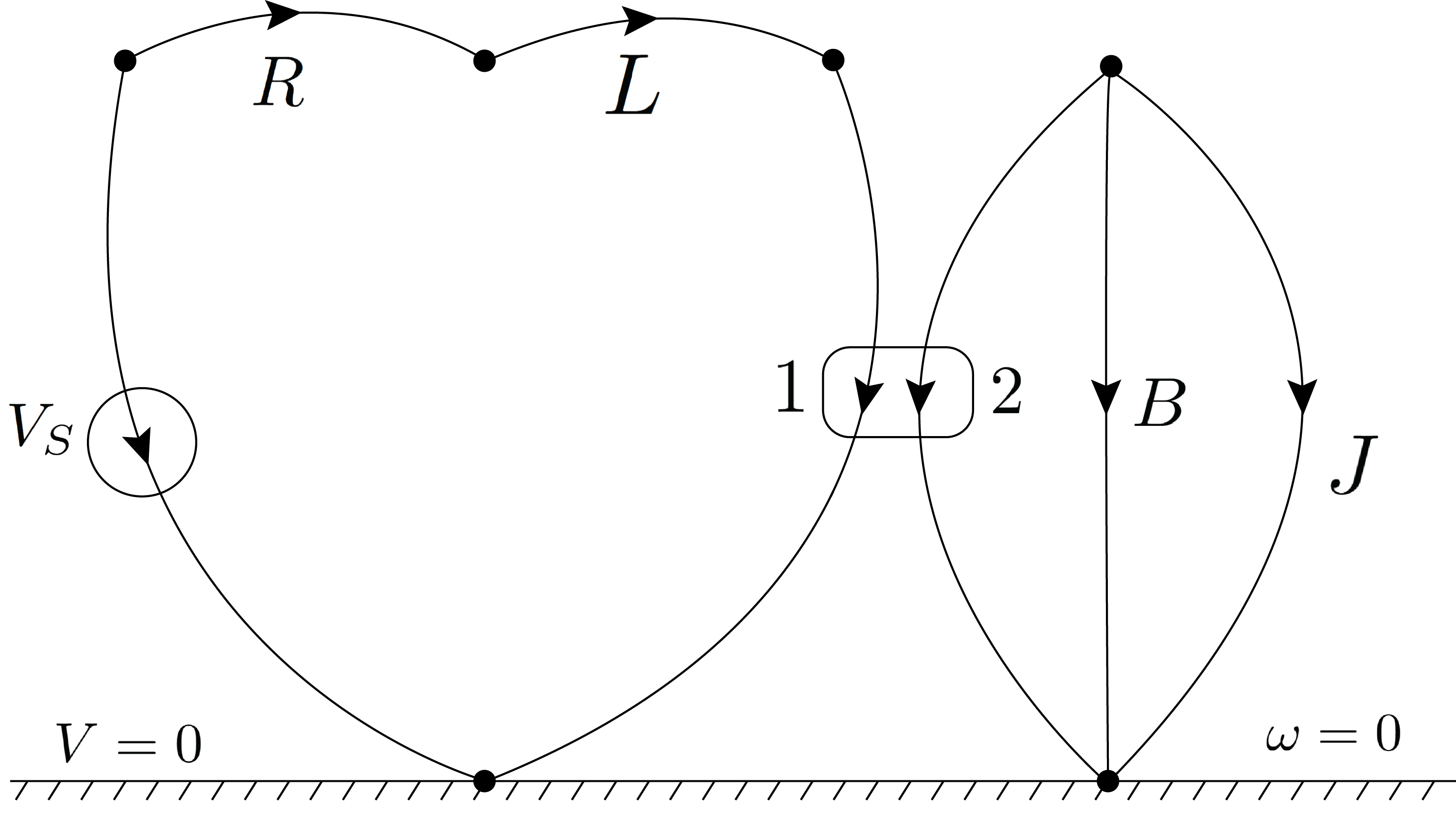} 
\caption{Linear graph}
\label{fig:linear_graph_multi_system}
\end{subfigure}%
\hfill 
\begin{subfigure}[b]{0.5\textwidth}
\centering
\includegraphics[width=0.7\linewidth]{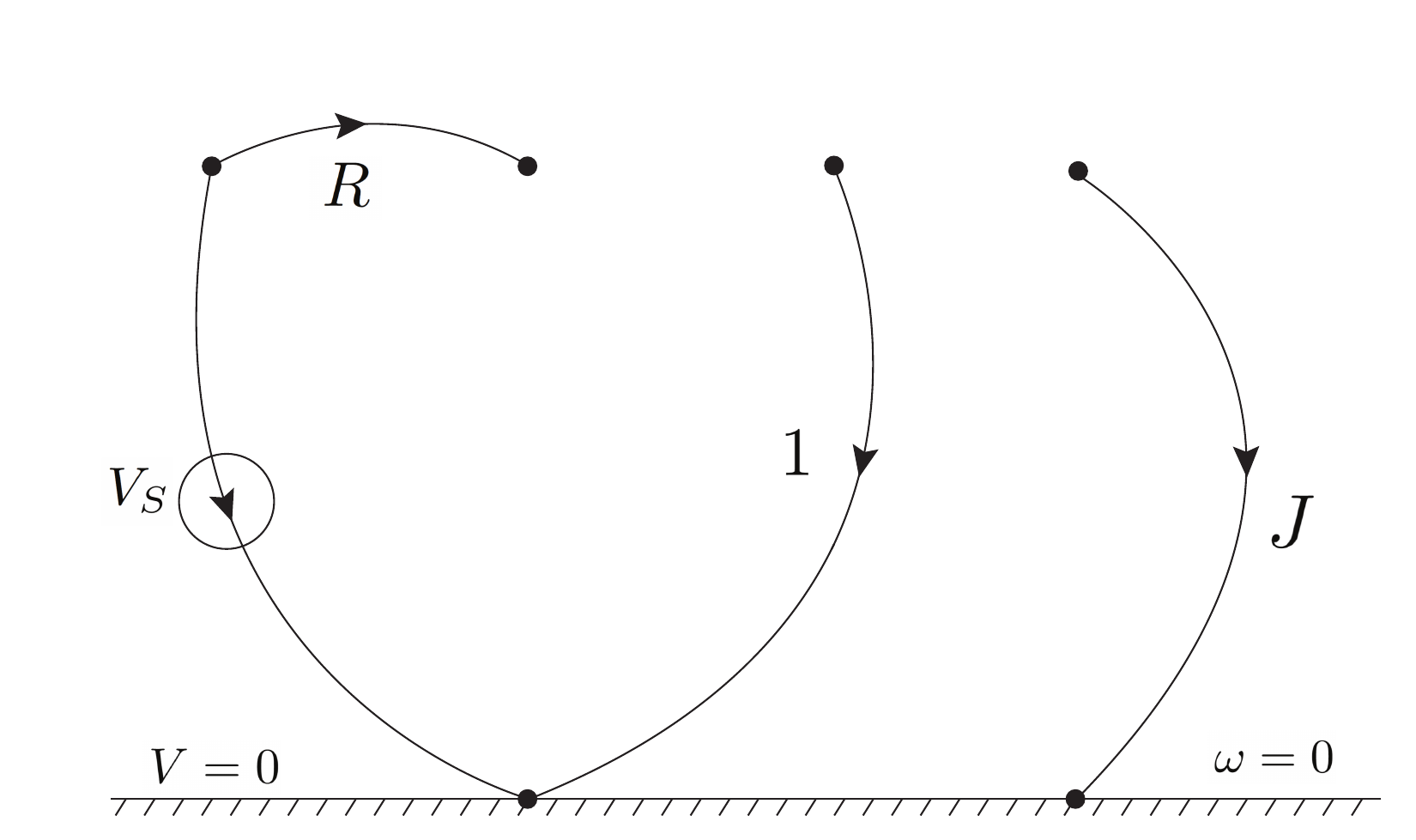} 
\caption{Normal tree}
\label{fig:normal_tree_multi_system}
\end{subfigure}
\caption{Linear graph and the normal tree of the electro-mechanical system, as referred to in Fig. 5f}
\end{figure}

Third, the normal tree facilitates the statement of the electro-mechanical system's constitutive, continuity, and compatibility laws. The constitutive laws are:  
\begin{align}
\frac{d\omega_J}{dt} &= \frac{1}{J} \tau_J \label{eq:elecmech_const_1} \\
\frac{di_L}{dt} &= \frac{1}{L} V_L \label{eq:elecmech_const_2} \\
V_R &= R i_R \label{eq:elecmech_const_3} \\
\tau_B &= B \omega_B \label{eq:elecmech_const_4} \\
V_1 &= \frac{1}{K_a} \omega_2 \label{eq:elecmech_const_5} \\
\tau_2 &= -\frac{1}{K_a} i_1 \label{eq:elecmech_const_6}
\end{align}
The continuity laws are:
\begin{align}
\tau_J &= -\tau_2 - \tau_B \label{eq:elecmech_cont_1} \\
i_R &= i_L \label{eq:elecmech_cont_2} \\
i_1 &= i_L \label{eq:elecmech_cont_3}
\end{align}
The compatibility laws are:
\begin{align}
V_L &= V_s - V_1 - V_R \label{eq:elecmech_comp_1} \\
\omega_2 &= \omega_J \label{eq:elecmech_comp_2} \\
\omega_B &= \omega_J \label{eq:elecmech_comp_3}
\end{align}

Finally, these equations are algebraically simplified to the state space model in Eq. \ref{eq:elecmech_state}.
\begin{equation}
\frac{d}{dt} \begin{bmatrix} \omega_J \\ i_L \end{bmatrix} = \begin{bmatrix} -\frac{B}{J} & \frac{1}{J K_a} \\ -\frac{1}{K_a L} & -\frac{R}{L} \end{bmatrix} \begin{bmatrix} \omega_J \\ i_L \end{bmatrix} + \begin{bmatrix} 0 \\ \frac{1}{L} \end{bmatrix} V_s(t) \label{eq:elecmech_state}
\end{equation}

\section{Bond Graphs by Example}
\label{subsec:Bond_graph_by_example}
In order to concretely describe the relationship between linear graphs, bond graphs, and hetero-functional graphs, the same illustrative examples are now modeled using the bond graph methodology.  According to Fig. \ref{fig:different_graphs}, and the overview provided in Sec \ref{subsec:bond_graphs}, the bond graph methodology follows these three main steps:
\begin{enumerate}
\item Construct the bond graph from the identified system elements.
\item State the 0-junction, 1-junction, and constitutive laws of the system using bond graph junctions.
\item Simplify the above-mentioned laws into a single state space model.
\end{enumerate}
This section follows each of these three steps for the six illustrative examples identified in Fig. \ref{fig:illustrative_examples}.  

\subsection{Electrical System}
First, the bond graph associated with the electrical system illustrated in Fig. \ref{fig:illustrative_examples}a is shown in Fig. \ref{fig:bond_graph_electrical_system}. According to Fig. \ref{fig:across_through_variables}, in the electrical system domain, voltage $V$ is the effort variable, and current $i$ is the flow variable.  Additionally, as shown in Fig. \ref{fig:graph_elements}, electrical resistors, capacitors, and inductors are categorized as generalized resistors, capacitors, and inductors respectively. Furthermore, in bond graphs, the state variables of this system are the effort variables of generalized C elements ($V_{C_1}$), and the flow variables of generalized inductors ($i_{L_1}$ and $i_{L_2}$).  
\begin{figure}[H]
\centering
\includegraphics[width=0.45\textwidth]{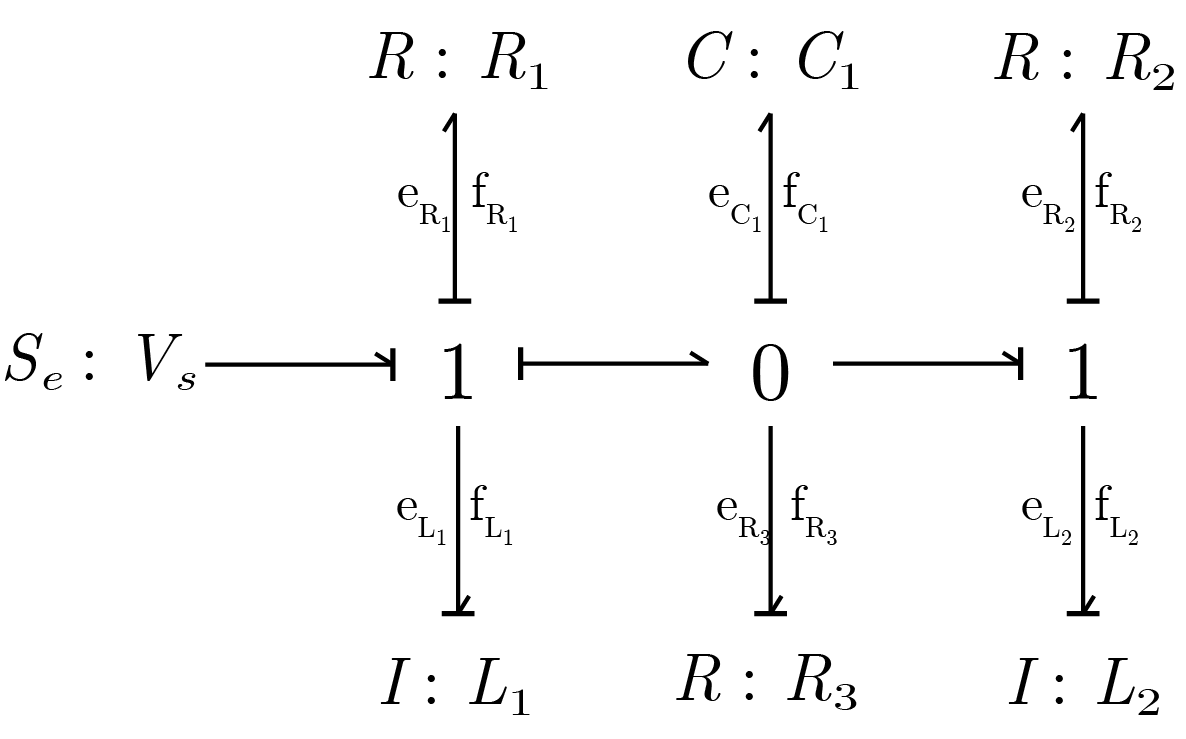} 
\caption{Bond graph of the electrical system, as referred to in Fig. \ref{fig:illustrative_examples}a}
\label{fig:bond_graph_electrical_system}
\end{figure}
\noindent Second, the constitutive laws in Eq. \ref{eq:electrical_const_1}-\ref{eq:electrical_const_6} are retained, Eq. \ref{eq:electrical_cont_1}-\ref{eq:electrical_cont_3} are adopted as 0-junction laws, and Eq. \ref{eq:electrical_comp_1}-\ref{eq:electrical_comp_3} are adopted as 1-junction laws. Finally, these laws are simplified algebraically to produce the state space model in Eq. \ref{eq:electrical_state}.  

\subsection{Translational Mechanical System}

The bond graph methodology is also applied to the translational mechanical system shown in Fig. \ref{fig:illustrative_examples}b. The bond graph associated with this system is illustrated in \ref{fig:bond_graph_translational_mechanical_system}. According to Fig. \ref{fig:across_through_variables}, in the translational mechanical system domain, force $F$ is the effort variable, and velocity $V$ is the flow variable.  Referring back to Fig. \ref{fig:graph_elements}, translational dampers, springs, and masses are categorized as generalized resistors, capacitors, and inductors respectively. The state variables of this system are the effort variables of generalized C elements ($F_{k_1}$ and $F_{k_2}$), and the flow variable of generalized inductors ($V_{m _1}$ and $V_{m_2}$).
\begin{figure}[H]
\centering
\includegraphics[width=0.55\textwidth]{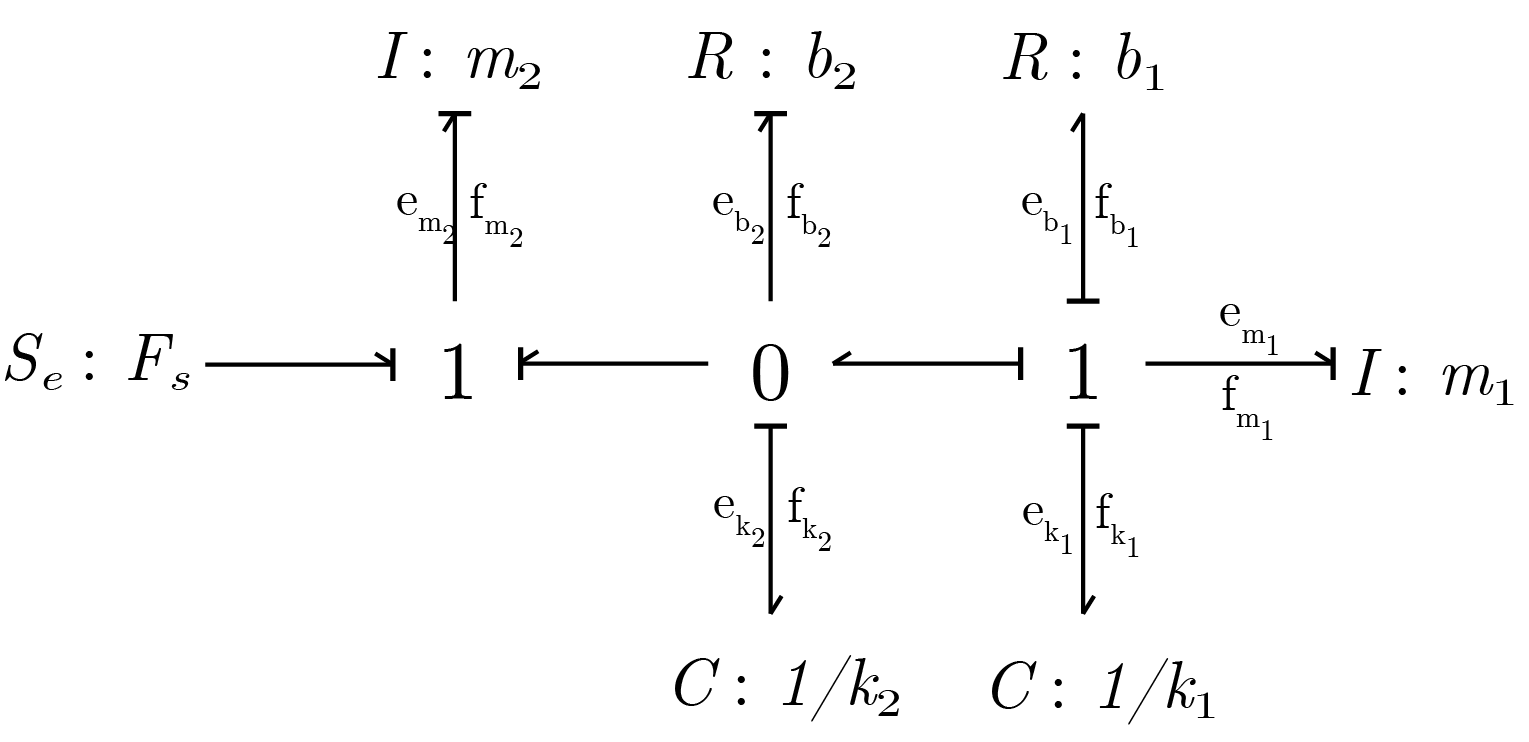} 
\caption{Bond graph of the translational mechanical system, as referred to in Fig. \ref{fig:illustrative_examples}b}
\label{fig:bond_graph_translational_mechanical_system}
\end{figure}
\noindent After the bond graph is derived, the constitutive laws in Eq. \ref{eq:trans_const_1}-\ref{eq:trans_const_6} are retained, Eq. \ref{eq:trans_comp_1}-\ref{eq:trans_comp_3} are adopted as 0-junction laws, and Eq. \ref{eq:trans_cont_1}-\ref{eq:trans_comp_3} are adopted as 1-junction laws.  Finally, these laws are simplified algebraically to produce the state space model in Eq. \ref{eq:trans_state}.  

\subsection{Rotational Mechanical System}

The bond graph associated with the rotational mechanical system illustrated in Fig. \ref{fig:illustrative_examples}c is shown in Fig. \ref{fig:bond_graph_rotational_mechanical_system}. Fig. \ref{fig:across_through_variables} shows that in the rotational mechanical system domain, torque $\tau$ is the effort variable, and angular velocity $\omega$ is the flow variable.  Additionally, as shown in Fig. \ref{fig:graph_elements}, rotational dampers, springs, and disks are categorized as generalized resistors, capacitors, and inductors respectively. Furthermore, the state variables of this system are the effort variable of generalized C elements ($\tau_{K}$), and the flow variable of generalized inductors ($\omega_J$).

\begin{figure}[H]
\centering
\includegraphics[width=0.4\textwidth]{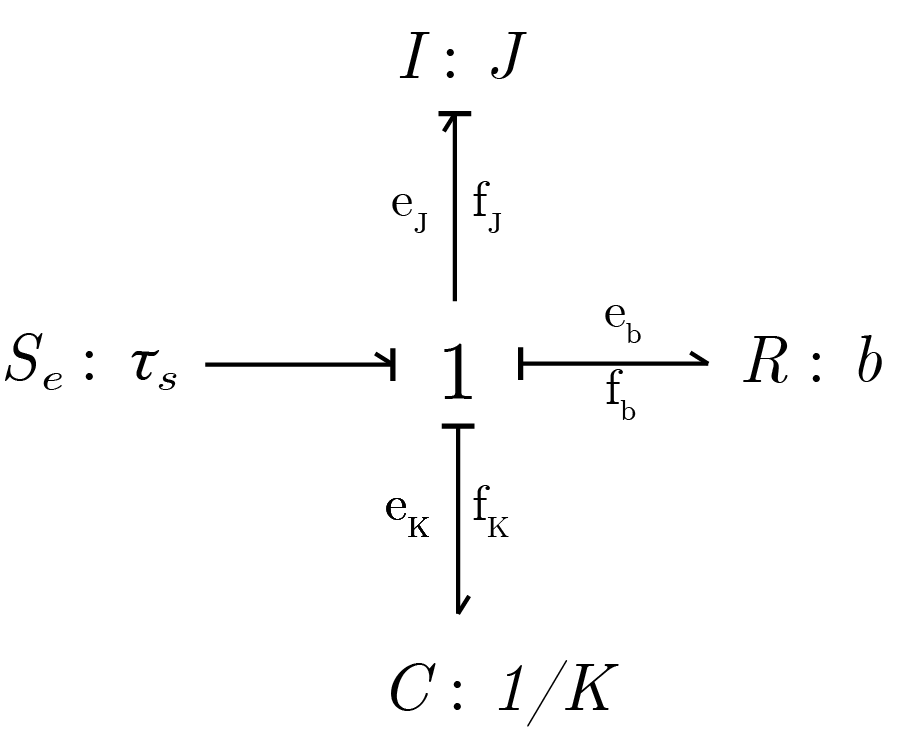} 
\caption{Bond graph of the rotational mechanical system, as referred to in Fig. \ref{fig:illustrative_examples}c}
\label{fig:bond_graph_rotational_mechanical_system}
\end{figure}

\noindent Consequently, the constitutive laws in Eq. \ref{eq:rot_const_1}-\ref{eq:rot_const_3} are retained, Eq. \ref{eq:rot_comp_1}-\ref{eq:rot_comp_2} are adopted as 0-junction laws, and Eq. \ref{eq:rot_cont_1} is adopted as 1-junction law.  Finally, these laws are simplified algebraically to produce the state space model in Eq. \ref{eq:rot_state}.  

\subsection{Fluidic System}
Fig. \ref{fig:bond_graph_fluidic_system} is the associated bond graph of the fluidic system illustrated in Fig. \ref{fig:illustrative_examples}d. According to Fig. \ref{fig:across_through_variables}, in the fluidic system domain, pressure $P$ is the effort variable, and volumetric flow rate $\dot{\cal V}$ is the flow variable.  Additionally, as shown in Fig. \ref{fig:graph_elements}, fluid resistance of pipes or valves, fluid tanks, and fluid inertances are categorized as generalized resistors, capacitors, and inductors respectively. Additionally, the state variables of this fluidic system are the effort variable of generalized C elements ($P_{C_1}$ and $P_{C_2}$), and the flow variable of generalized inductors ($\dot{\cal V_{\cal I}}$).

\begin{figure}[H]
\centering
\includegraphics[width=0.5\textwidth]{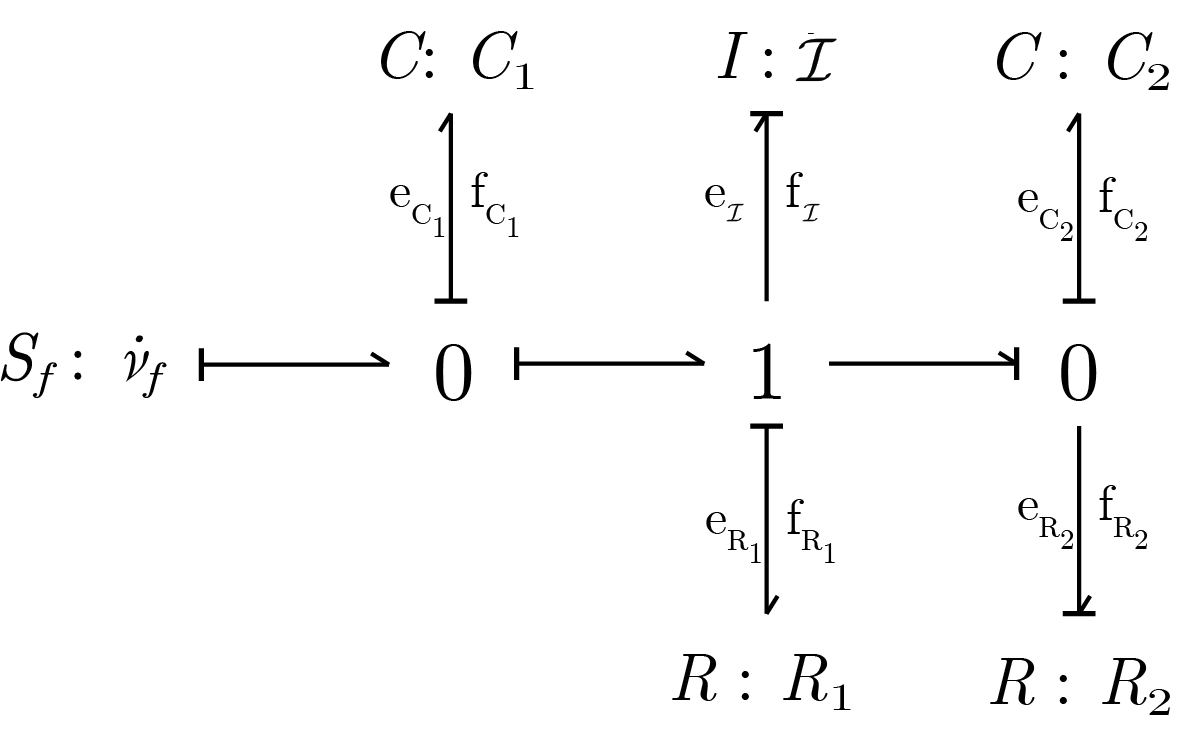} 
\caption{Bond graph of the fluidic system, as referred to in Fig. \ref{fig:illustrative_examples}d}
\label{fig:bond_graph_fluidic_system}
\end{figure}

\noindent After constructing the bond graph, the constitutive laws in Eq. \ref{eq:fluid_const_1}-\ref{eq:fluid_const_5} are retained, Eq. \ref{eq:fluid_cont_1}-\ref{eq:fluid_cont_3} are adopted as 0-junction laws, and Eq. \ref{eq:fluid_comp_1}-\ref{eq:fluid_comp_2} are adopted as 1-junction laws. Finally, these laws are simplified algebraically to produce the state space model in Eq. \ref{eq:fluid_state}.  

\subsection{Thermal System}
First, the bond graph associated with the thermal system and its analogous electrical circuit illustrated in Fig. \ref{fig:illustrative_examples}e and \ref{fig:thermal_system_electrical_circuit}, are shown in Fig. \ref{fig:bond_graph_thermal_system}. According to Fig. \ref{fig:across_through_variables}, in the thermal system domain, temperature $T$ is the effort variable, and heat flow rate $\dot{Q}$ is the flow variable.  Additionally, as shown in Fig. \ref{fig:graph_elements}, thermal resistances, and thermal capacitances are categorized as generalized resistors and capacitors respectively. Next, the state variables of this system are the effort variables of generalized C elements ($T_{C_i}$ and $T_{C_h}$).

\begin{figure}[H]
\centering
\includegraphics[width=0.45\textwidth]{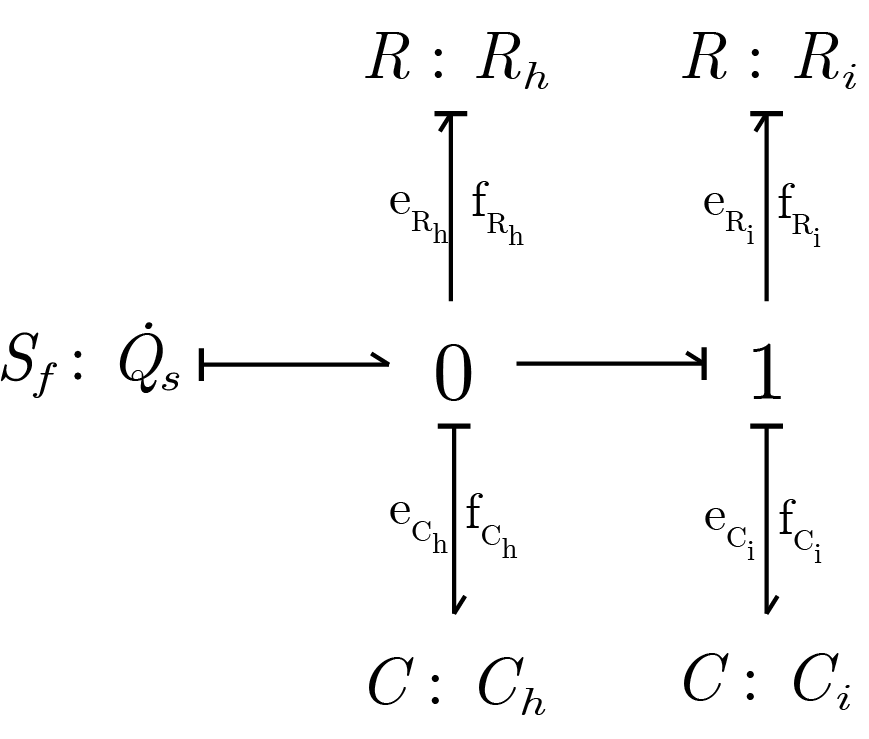} 
\caption{Bond graph of the thermal system, as referred to in Fig. \ref{fig:illustrative_examples}e}
\label{fig:bond_graph_thermal_system}
\end{figure}

\noindent Second, the constitutive laws in Eq. \ref{eq:thermal_const_1}-\ref{eq:thermal_const_4} are retained, Eq. \ref{eq:thermal_cont_1}-\ref{eq:thermal_cont_2} are adopted as 0-junction laws, and Eq. \ref{eq:thermal_comp_1}-\ref{eq:thermal_comp_2} are adopted as 1-junction laws. Finally, these laws are simplified algebraically to produce the state space model in Eq. \ref{eq:thermal_state}.  

\subsection{Multi-Energy System}

The bond graph methodology is applied similarly to the electro-mechanical system shown in Fig. \ref{fig:illustrative_examples}f. The bond graph associated with this system is shown in Fig. \ref{fig:bond_graph_electrical_system}. As electro-mechanical systems are a combination of electrical and (rotational) mechanical systems, voltage $V$ and torque $\tau$ are the effort variables, and current $i$ and angular velocity $\omega$ are the flow variables.  Additionally, as shown in Fig. \ref{fig:graph_elements}, the electrical resistors and the rotational dampers are generalized resistors. The electrical capacitors and rotational springs are generalized capacitors. The electrical inductors and rotational disks are categorized as generalized inductors. Furthermore, due to the different views in bond graphs, a generalized gyrator connects the electrical subsystem to the mechanical subsystem. Lastly, As mentioned in the electrical and rotational system examples, ($\omega_{J}$ and $i_{L}$ and $i_{L_2}$) are the state variables of this system.
\begin{figure}[H]
\centering
\includegraphics[width=0.55\textwidth]{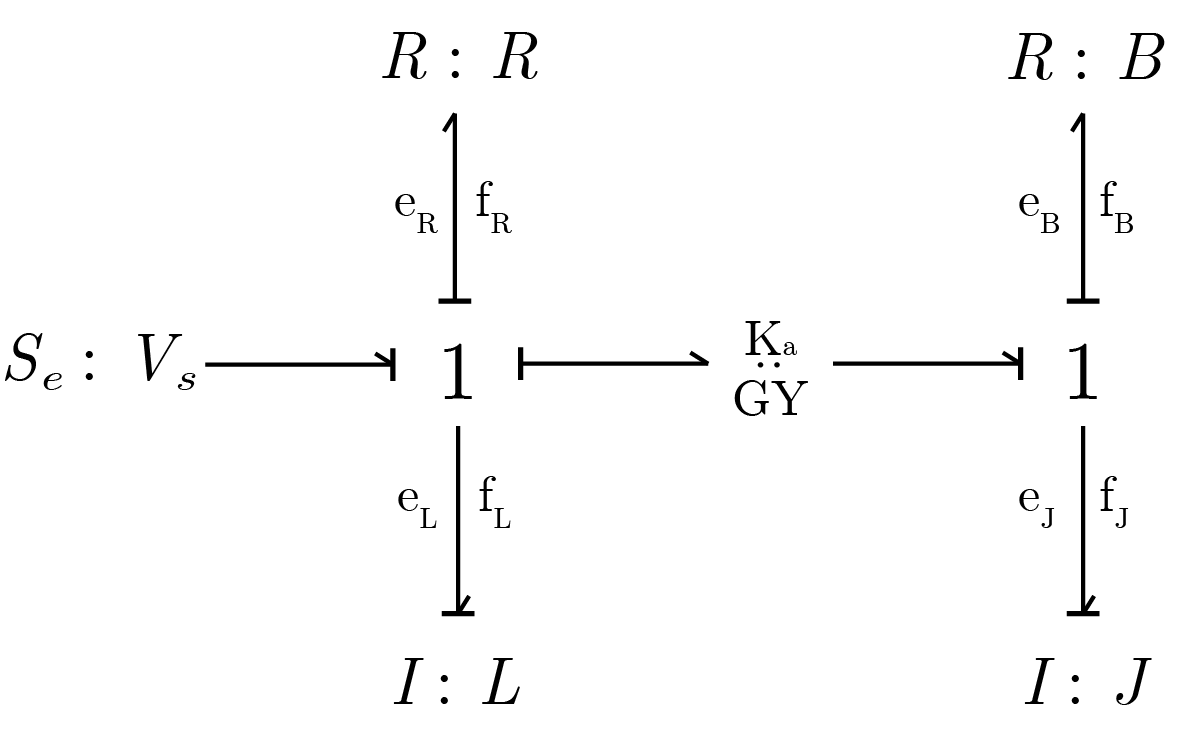} 
\caption{The Bond graph of electro-mechanical system showed in  Figure \ref{fig:illustrative_examples}(f)}
\label{fig:bond_graph_multi_system}
\end{figure}

\noindent Next, the constitutive laws in Eq. \ref{eq:elecmech_const_1}-\ref{eq:elecmech_const_6} are retained and Eq. \ref{eq:elecmech_cont_2}-\ref{eq:elecmech_cont_3} and Eq. \ref{eq:elecmech_comp_2}-\ref{eq:elecmech_comp_3} are adopted as 1-junction laws. Finally, these laws are simplified algebraically to produce the state space model in Eq. \ref{eq:elecmech_state}.  

\section{Hetero-functional Graphs by Example}
\label{subsec:HFGT_by_example}
In order to continue to concretely describe the relationship between linear graphs, bond graphs, and hetero-functional graphs, the same illustrative examples are now modeled using hetero-functional graph theory.  According to Fig. \ref{fig:different_graphs}, and the overview provided in Sec \ref{subsec:HFGT}, the hetero-functional graph methodology follows three main steps:  
\begin{enumerate}
\item Identify the system resources, the system processes, and their associated capabilities following Defn. \ref{Defn:D1}-\ref{defn:capabilityCh7}
\item Construct the engineering system net (and operand net if necessary) following Defn. \ref{Defn:D6}-\ref{Defn:ESN-STF}.
\item Setup and solve the hetero-functional network minimum cost flow problem as stated in Eq. \ref{eq:dummy_objective_function}-\ref{eq:device_model_initial} below.
\end{enumerate}

Before proceeding with derivation for each of the illustrative examples, it is important to recognize that linear graphs and bond graphs make several inherent, and limiting assumptions that are not made in hetero-functional graph theory by default.  
\begin{itemize}
\item $X[k] \in \mathbb{R} \quad \forall k \in \{1 \ldots K\}$. In physical systems, the primary decision variables are in the domain of real numbers.
\item $Y[k] \in \mathbb{R} \quad \forall k \in \{1 \ldots K\}$. 
Auxiliary decision variables are also in the domain of real numbers. 
\item $Z=0$. Linear graphs and bond graphs solve a set of simultaneous differential algebraic equations and do not require optimization.  Consequently, a dummy objective function is defined. 
\item $\Delta T \rightarrow 0$.  Linear graphs and bond graphs model differential algebraic equations where the simulation time step is infinitesimal.
\item $k_{d\psi}=0\  \quad \forall k, \forall \psi$.  The duration of each capability (Defn. \ref{defn:capabilityCh7}) is instantaneous.  Consequently, Eq. \ref{Eq:DurationConstraint} becomes:
\begin{align}\label{eq:simplicity_noDelay}
U_\psi^+[k] = U_\psi^-[k] \quad \forall k \in \{1, \dots, K\}
\end{align}
Additionally, Eq. \ref{Eq:ESN-STF2} collapses to triviality and is eliminated.
\item $Q_B[k] = Q_B[k+1] \quad \forall k \in \{1 \ldots K\}$. The engineering system does not accumulate operands at its buffers (Defn. \ref{defn:BSCh7}).  Furthermore, it is important to recognize that the above treatment of linear graphs and bond graphs only uses power variables (i.e. effort and flow pairs), and $Q_B[k]$ is a displacement variable in the Eulerian via and a momentum variable in the Lagrangian view.  Therefore, a 1-to-1 comparison of hetero-functional graphs to linear and bond graphs will not require the $Q_B$ variable.  Consequently, Eq. \ref{Eq:ESN-STF1} becomes:
\begin{align}
M U[k] \Delta T = 0 \quad \forall k \in \{1, \dots, K\}
\end{align}
where $M = M^+ - M^-$.  
\item $S_{l_i} = \emptyset$. ${\cal E}_{l_i} = \emptyset$. ${\cal N}_{l_i} = \emptyset$ (Defn. \ref{Defn:OperandNet}) All of the operands used in linear graphs and bond graphs have no state evolution and do not require their associated operand nets.  Consequently, Eqs. \ref{Eq:OperandNet-STF1}, \ref{Eq:OperandNet-STF2}, and \ref{Eq:OperandNetDurationConstraint} are eliminated. Similarly, without any operand net, there is no need for synchronization with the engineering system net firing vectors.  Consequently, Eqs.  \ref{Eq:SyncPlus}, \ref{Eq:SyncMinus} are eliminated as well.

\item The engineering system net boundary condition constraint in Equation \ref{CH6:eq:HFGTprog:comp:Bound} applies only when the engineering system has through-variable sources.  In such cases, and in light of the above, the boundary condition constraint becomes:
\begin{align}\label{eq:boundary_condition}
D_{U}.U[k] = C_{U}[k] \quad \forall k \in \{1, \dots, K\} 
\end{align}
Furthermore, the engineering system net boundary condition constraint is used to capture any initial conditions on the engineering system net firing vector. 
\begin{align}\label{eq:init_condition}
D_{Ui}.U[1] = C_{Ui}[1]  
\end{align}
\item Without any operand net, its boundary condition constraint in Equation \ref{Eq:OperandRequirements} is eliminated. 
\item The initial condition constraint in Eq \ref{CH6:eq:HFGTprog:comp:Init} is also eliminated as $Q_B$ is not retained as a decision variable. 
\item The final condition constraint in Eq. \ref{CH6:eq:HFGTprog:comp:Fini} is also eliminated as $Q_B$ is not retained as a decision variable.  Furthermore, all of the linear graph and bond graph models described above are initial value (rather than final value) problems. 
\item $\underbar{E}_{CP} \rightarrow -\infty$, $\overline{E}_{CP} \rightarrow \infty$.  The linear graph and bond graph methodologies do not place lower or upper bounds on the primary decision variables. Consequently, the inequality constraints on primary decision variables in Eq. \ref{ch6:eq:QPcanonicalform:3} are eliminated. 
\item The device model functions $g(X,Y)$ in Eq. \ref{Eq:DeviceModels} become the engineering system's constitutive laws. As elaborated below, they relate the engineering system net's primary variables (i.e. through variables) to its auxiliary variables (i.e. across variables).  
\item The device model function $h(Y)$ in Eq. \ref{Eq:DevicModels2} places bounds on the engineering system net's auxiliary variables.  Such constraints are used in linear graphs and bond graphs to impose times series from across variable sources.
\begin{align}
h(y[k]) = C_Y[k] \qquad \forall k \in \{1, \dots, K\}
\end{align}
They are also used to impose initial conditions on the auxiliary variables. 
\begin{align}\label{eq:aux_vars_init_condition}
h_i(y[1]) = C_{yi}[1]  
\end{align}
\end{itemize}
In summary, the HFNMCF problem stated in Eq. \ref{Eq:ObjFunc1}-\ref{Eq:DevicModels2} collapses to the following optimization problem in the context of linear and bond graph models.  

\begin{alignat}{3}
\text{minimize } Z &= 0 \label{eq:dummy_objective_function}\\
\text{s.t. }MU[k]\Delta T &= 0 &&  \quad \forall k \in \{1, \dots, K\} \label{eq:HFGT_continuity}\\ 
D_{U}.U[k] &= C_{U}[k] && \quad \forall k \in \{1, \dots, K\} 
\label{eq:HFGT_priVar_source}\\
D_{Ui}.U[1] &= C_{U}[1]
\label{eq:HFGT_priVar_initial}\\
g(X,Y)&=0\label{eq:device_model}\\
h(Y) &= C_Y[k] && \quad \forall k \in \{1, \dots, K\}\label{eq:device_model_source} \\
h_i(y[1]) & = C_{yi}[1] \label{eq:device_model_initial}
\end{alignat}

As mentioned above, the device model constraint in the HFNMCF problem shown first in Eq. \ref{Eq:DeviceModels} and now in Eq. \ref{eq:device_model} represents the constitutive laws in linear graphs and bond graphs.  As there are only a small number of generalized elements (e.g. resistor, inductor, capacitor, transformer, and gyrator), it is worthwhile recognizing that these generalized elements take on generic forms in the context of hetero-functional graph theory.  For generalized resistors: 
\begin{align} \label{eq:R_element_aux_eq}
S_R \cdot U[k] &= {\cal Z}_R\cdot S_R \cdot (-M)^T \cdot y[k] \quad  \forall k \in \{1, \dots, K\}
\end{align}
where $S_R$ is a projection operator that serves to select out the relevant primary (i.e. through) variables from the engineering system net firing vector.  Furthermore, ${\cal Z}_R$ is a diagonal matrix of resistance values. The engineering system net incidence matrix adopts a negative sign so that the across variables $y$ lose magnitude in the direction of flow.  Note that Eq. \ref{eq:R_element_aux_eq}, quite appropriately, is an algebraic relation between across and through variables at the same discrete time step $k$.  Next, for generalized inductors (in the Eulerian view, and generalized capacitors in the Lagrangian view):
\begin{align}\label{eq:L_element_aux_eq}
S_L \cdot (U[k+1]-U[k])\Delta T &= {\cal Z}_L \cdot S_L \cdot (-M)^T \cdot y[k] \quad \forall k \in \{1, \dots, K-1\}
\end{align}
where $S_L$  is a projection operator that serves to select out the relevant primary (i.e. through) variables from the engineering system net firing vector. Furthermore, ${\cal Z}_L$ is a diagonal matrix of inductance values. Note that Eq. \ref{eq:L_element_aux_eq}, quite appropriately, is a difference equation that results from discretizing the generalized inductor law via an Euler transformation.  Next, for generalized capacitors (in the Eulerian view and generalized inductors in the Lagrangian view): 
\begin{align}\label{eq:C_element_aux_eq}
S_C \cdot U[k] &= {\cal Z}_C \cdot S_C \cdot (-M)^T \cdot (y[k+1]-y[k]) \cdot \Delta t \quad \forall k \in \{1, \dots, K-1\}
\end{align}
where $S_C$ is a projection operator that serves to select out the relevant primary (i.e. through) variables from the engineering system net firing vector.  Furthermore, ${\cal Z}_C$ is a diagonal matrix of capacitance values. Once again, Eq. \ref{eq:C_element_aux_eq} results from the discretization of the generalized capacitance law via n Euler transformation.  Next, for generalized transformers:
\begin{align}
S_{T} y[k] &= 0 \quad \forall k \in \{1, \dots, K\} \label{eq:transformer_element_aux_eq}
\end{align}
where $S_{T_1}$ is the projection operator that serves to select out the transformer's relevant auxiliary (i.e. across) variables from the engineering system net. Finally, for generalized gyrators:
\begin{align}
S_{G_1}U[k] = S_{G_2} y[k] \quad \forall k \in \{1, \dots, K\}
\label{eq:gyrator_element_aux_eq}
\end{align}
where $S_{G_1}$ and $S_{G_2}$ are the projection operators that serve to select out the gyrator's relevant primary (i.e. through) and auxiliary (i.e. across) variables and from the engineering system net.

Given this specialized application of hetero-functional graphs to linear graphs and bond graphs, each of the six illustrative examples shown in Fig. \ref{fig:illustrative_examples} can be solved as the HFNMCF problem using the three steps identified at the top of the section.   

\subsection{Electrical System}
The first step is to recognize that the electrical system shown in Fig. \ref{fig:illustrative_examples}a is first, a specialization into the electrical domain, followed by an instantiation of the engineering system meta-architecture in Fig. \ref{fig:LFESMetaArchitecture}.  Consequently, Defn. \ref{Defn:D1}-\ref{defn:capabilityCh7} are understood as follows.  There are five electrical points that have distinct absolute values of across-variables that serve as independent buffers: $V_S$, $V_{R_L}$, $V_{C_1}$, $V_{L_2}$ and $V_0$.  Additionally, there is one across-variable source $V_S$ that serves as a transformation resource.  Additionally, the transportation resources include generalized resistors $R_1$, $R_2$, $R_3$, generalized inductors $L_1$, $L_2$, and a generalized capacitor $C_1$.  Fig. \ref{fig:LFESMetaArchitecture} shows that each of these transformation and transportation resources has exactly one system process; inject power with imposed through variable, dissipate power, store potential energy, and store kinetic energy.  The result is that each of these transformation and transportation resources introduces exactly one system capability with a primary through variable and an auxiliary across variable as attributes.  

In the next step, the electrical system shown in Fig. \ref{fig:illustrative_examples}a is transformed into the engineering system net shown in Fig. \ref{fig:PN_Electrical_system}. 
\begin{figure}[H]
\centering
\includegraphics[width=0.7\textwidth]{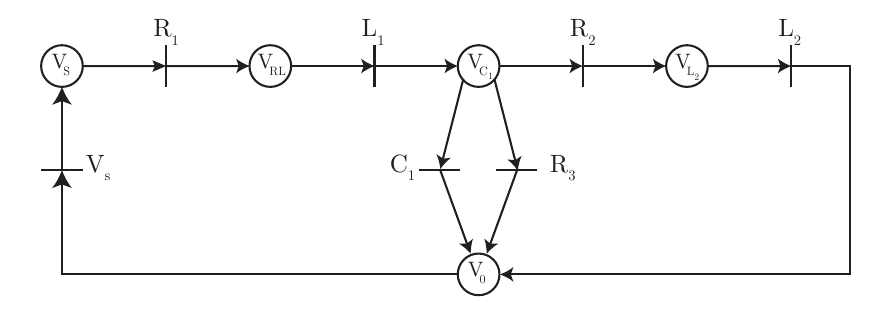} 
\caption{Engineering system net for the electrical system shown in Fig. \ref{fig:illustrative_examples}a}
\label{fig:PN_Electrical_system}
\end{figure}
While it is possible to produce Fig. \ref{fig:PN_Electrical_system} from the electrical circuit in Fig. \ref{fig:illustrative_examples}a by visual inspection; such an approach falls apart in systems with multiple energy domains or capabilities with multiple inputs and outputs.  Instead, the Engineering System Net and its state transition function is constructed according to Defn. \ref{Defn:ESN} and \ref{Defn:ESN-STF}.  Note that the hetero-functional graph theory toolbox \cite{Thompson:2023:ISC-JR02} can automatically calculate the positive and negative hetero-functional incidence matrices from an XML input file that instantiates the information from Defn. \ref{Defn:D1} - \ref{defn:capabilityCh7}.   The Engineering System Net in Fig. \ref{fig:PN_Electrical_system} shows the system buffers as places, the capabilities as transitions, and the incidence between them.  

In the third step, the hetero-functional network minimum cost flow problem is set up and solved.  More specifically, Eq. \ref{eq:dummy_objective_function}-\ref{eq:device_model_initial} are written out explicitly.  

\begin{align}
\text{minimize } Z = 0 \label{eq:HFGT_electrical_dummy_target_function}
\end{align}
\begin{align}
\text{s.t. }
\label{eq:HFGT_electrical_continuity}
\begin{bmatrix}
+1 & -1 & 0 & 0 & 0 & 0 & 0\\
0 & +1 & -1 & 0 & 0 & 0 & 0 \\
0 & 0 & +1 & -1 & -1 & 0 & -1\\
0 & 0 & 0 & 0 & +1 & -1 & 0\\
\end{bmatrix}\begin{bmatrix}
 i_{V_S} \\
 i_{R_1} \\
 i_{L_1} \\
 i_{C_1} \\
 i_{R_2} \\
 i_{L_2} \\
 i_{R_3} \\
\end{bmatrix}[k]\Delta T = 0   \quad \forall k \in \{1, \dots, K\}
\end{align}
\begin{align} \label{eq:HFGT_electrical_priVars_initial}
\begin{bmatrix}
0 & 0 & 1 & 0 & 0 & 0 & 0\\
0 & 0 & 0 & 0 & 0 & 1 & 0
\end{bmatrix}
\begin{bmatrix}
i_{V_S} \\
i_{R_1} \\
i_{L_1} \\
i_{C_1} \\
i_{R_2} \\
i_{L_2} \\
i_{R_3}
\end{bmatrix}[1]
=
\begin{bmatrix}
0\\
0
\end{bmatrix}
\end{align}
\begin{align}
\label{eq:HFGT_electrical_device_model_R}\nonumber
\begin{bmatrix}
0 & 1 & 0 & 0 & 0 & 0 & 0 \\
0 & 0 & 0 & 0 & 1 & 0 & 0 \\
0 & 0 & 0 & 0 & 0 & 0 & 1 
\end{bmatrix}
\begin{bmatrix}
i_{V_S} \\
i_{R_1} \\
i_{L_1} \\
i_{C_1} \\
i_{R_2} \\
i_{L_2} \\
i_{R_3}
\end{bmatrix}[k]
=
\begin{bmatrix}
\frac{1}{R_1} & 0 & 0 \\
0 & \frac{1}{R_2} & 0 \\
0 & 0 & \frac{1}{R_3}
\end{bmatrix}
\begin{bmatrix}
0 & 1 & 0 & 0 & 0 & 0 & 0 \\
0 & 0 & 0 & 0 & 1 & 0 & 0 \\
0 & 0 & 0 & 0 & 0 & 0 & 1 
\end{bmatrix}
\begin{bmatrix}
-1 & 0 & 0 & 0 \\
+1 & -1 & 0 & 0 \\
0 & +1 & -1 & 0 \\
0 & 0 & +1 & 0 \\
0 & 0 & +1 & -1 \\
0 & 0 & 0 & -1 \\
0 & 0 & +1 & 0 
\end{bmatrix}
\begin{bmatrix}
V_S \\
V_{RL} \\
V_{C_1} \\
V_{L_2}
\end{bmatrix}[k] & \\
\forall k \in \{1, \dots, K\} & 
\end{align}
\begin{align}
\label{eq:HFGT_electrical_device_model_L}\nonumber
\begin{bmatrix}
0 & 0 & 1 & 0 & 0 & 0 & 0 & 0 \\
0 & 0 & 0 & 0 & 0 & 1 & 0 & 0
\end{bmatrix}
\left(
\begin{bmatrix}
i_{V_S} \\
i_{R_1} \\
i_{L_1} \\
i_{C_1} \\
i_{R_2} \\
i_{L_2} \\
i_{R_3}
\end{bmatrix}[k+1]
-
\begin{bmatrix}
i_{V_S}\\
i_{R_1} \\
i_{L_1} \\
i_{C_1} \\
i_{R_2} \\
i_{L_2} \\
i_{R_3}
\end{bmatrix}[k]
\right)
=
\begin{bmatrix}
\frac{1}{L_1} & 0 \\
0 & \frac{1}{L_2}
\end{bmatrix}
\begin{bmatrix}
0 & 0 & 1 & 0 & 0 & 0 & 0 \\
0 & 0 & 0 & 0 & 0 & 1 & 0
\end{bmatrix}
\begin{bmatrix}
-1 & 0 & 0 & 0 \\
+1 & -1 & 0 & 0 \\
0 & +1 & -1 & 0 \\
0 & 0 & +1 & 0 \\
0 & 0 & +1 & -1 \\
0 & 0 & 0 & -1 \\
0 & 0 & +1 & 0 
\end{bmatrix}
\begin{bmatrix}
V_S \\
V_{RL} \\
V_{C_1} \\
V_{L_2}
\end{bmatrix}[k]\Delta T &\\
\forall k \in \{1, \dots, K-1\} & 
\end{align}
\begin{align}
\label{eq:HFGT_electrical_device_model_C}\nonumber
\begin{bmatrix}
0 & 0 & 0 & 1 & 0 & 0
\end{bmatrix}
\begin{bmatrix}
i_{V_S} \\
i_{R_1} \\
i_{L_1} \\
i_{C_1} \\
i_{R_2} \\
i_{L_2} \\
i_{R_3}
\end{bmatrix}[k] \Delta T
= 
\begin{bmatrix}
C_1
\end{bmatrix}
\begin{bmatrix}
0 & 0 & 0 & 1 & 0 & 0 & 0
\end{bmatrix}
\begin{bmatrix}
-1 & 0 & 0 & 0 \\
+1 & -1 & 0 & 0 \\
0 & +1 & -1 & 0 \\
0 & 0 & +1 & 0 \\
0 & 0 & +1 & -1 \\
0 & 0 & 0 & -1 \\
0 & 0 & +1 & 0 
\end{bmatrix}
\left(
\begin{bmatrix}
V_S \\
V_{RL} \\
V_{C_1} \\
V_{L_2}
\end{bmatrix}[k+1]
-
\begin{bmatrix}
V_S \\
V_{RL} \\
V_{C_1} \\
V_{L_2}
\end{bmatrix}[k]
\right) &\\
\forall k \in \{1, \dots, K-1\} &
\end{align}
\begin{align}
\label{eq:HFGT_electrical_aux_Var_source}
\begin{bmatrix}
1 & 0 & 0 & 0 
\end{bmatrix}
\begin{bmatrix}
V_S \\
V_{RL} \\
V_{C_1} \\
V_{L_2}
\end{bmatrix}[k]
= 1   \quad \forall k \in \{1, \dots, K\}
\end{align}
\begin{align}
\label{eq:HFGT_electrical_auxVar_initial}
\begin{bmatrix}
0 & 0 & 1 & 0 
\end{bmatrix}
\begin{bmatrix}
V_S \\
V_{RL} \\
V_{C_1} \\
V_{L_2}
\end{bmatrix}[k=1]
= 0 
\end{align}

This explicit statement of the HFNMCF problem in the context of the electrical system shown in Fig. \ref{fig:illustrative_examples}a provides the following insights:
\begin{itemize}
\item Eq. \ref{eq:HFGT_electrical_dummy_target_function} shows that the electrical system does not have an objective function and is simply a set of simultaneous equations.  
\item Eq. \ref{eq:HFGT_electrical_continuity} is a matrix restatement of the continuity laws in Eq. \ref{eq:electrical_cont_1} - \ref{eq:electrical_cont_3}.  Note that Eq. \ref{eq:HFGT_electrical_continuity} introduces an additional matrix row to account for the current provided by the voltage source $i_{V_S}$.  While this variable is not required in the linear graph and bond graph methodologies, the HFGT derivation requires across and through variables for all capabilities.  Also note that Eq. \ref{eq:HFGT_electrical_continuity} does not include the current balance associated with the voltage ground $V_0$.  This is because incidence matrices of closed systems (i.e. circuits) have a rank of N-1\cite{Gould:2012:00} and so the last redundant equation must be eliminated to make Eq. \ref{eq:HFGT_electrical_continuity} full rank.  Finally, the hetero-functional incidence matrix $M$ can be automatically produced from the HFGT toolbox \cite{Thompson:2023:ISC-JR02} for systems of arbitrary size.
\item There are no equations that impose exogenous values on the currents because there are no current sources.  
\item Eq. \ref{eq:HFGT_electrical_priVars_initial} is the initial condition on the inductors' current as the state variables.  
\item Eq. \ref{eq:HFGT_electrical_device_model_R} is a matrix restatement of the constitutive law for resistors (i.e. Ohm's Law) in Eq. \ref{eq:electrical_const_2}, \ref{eq:electrical_const_4}, and \ref{eq:electrical_const_6}.
\item Eq. \ref{eq:HFGT_electrical_device_model_L} is a matrix restatement of the constitutive law for inductors in Eq. \ref{eq:electrical_const_3} and \ref{eq:electrical_const_5}.
\item Eq. \ref{eq:HFGT_electrical_device_model_C} is a matrix restatement of the constitutive law for capacitors in Eq. \ref{eq:electrical_const_1}.
\item The selector matrices in Eq. \ref{eq:HFGT_electrical_priVars_initial}-\ref{eq:HFGT_electrical_aux_Var_source} can be automatically produced from the HFGT toolbox \cite{Thompson:2023:ISC-JR02}for systems of arbitary size.  
\item Eq. \ref{eq:HFGT_electrical_aux_Var_source} imposes exogenous values on the voltages due to the presence of voltage sources.  
\item Eq. \ref{eq:HFGT_electrical_auxVar_initial} is the initial condition on the capacitor voltage as a state variable. 
\item The compatibility laws stated in Eq. \ref{eq:electrical_comp_1} - \ref{eq:electrical_comp_3} are superfluous because all of the voltages have been stated in absolute terms relative to the ground rather than as voltage differences between electrical points.  
\end{itemize}
Ultimately, the HFNMCF problem restates, in discrete time, and in matrix form, the simultaneous continuity and constitutive equations from the linear graph derivation and eliminates entirely the need for compatibility equations.  Furthermore, the HFNMCF problem reveals that only the continuity laws create relationships \emph{between} capabilities (Defn. \ref{defn:capabilityCh7}) in the engineering system.  The remaining constraints, those tied to initial conditions, exogenous values, and constitutive laws, address capabilities individually.  Consequently, the HFGT toolbox\cite{Thompson:2023:ISC-JR02} makes setting up the HFNMCF problem relatively straightforward because it can automatically generate the hetero-functional graph structure and keep track of the indices and types of each capability.  

Once the HFNMCF problem for the electrical system has been set up, it can be simulated straightforwardly and compared against the state space ODE model derived by linear graph and/or bond graph.  The following parameter values are chosen:
$R_1 = 200 \ \Omega$, $R_2 = 200 \ \Omega$, $R_3 = 220 \ \Omega$, $L_1 = 100 \ \text{mH}$, $L_2 = 150 \ \text{mH}$, $C_1 = 10 \ \mu\text{F}$, and $V_s = 1 \ \text{V}$ (step input voltage). The simulation time is $t = 0.01$ seconds, and the time step $\Delta T = 1e-4$ seconds.  The HFNMCF results for the primary (current) and auxiliary (voltage) decision variables are shown as solid lines in Fig. \ref{fig:HFGT_Electrical_system_results}.  The associated state space ODE results are shown in dashed lines with embedded triangles.  

\begin{figure}[H]
\centering
\begin{subfigure}[b]{0.45\textwidth}
\centering
\includegraphics[width=1\linewidth]{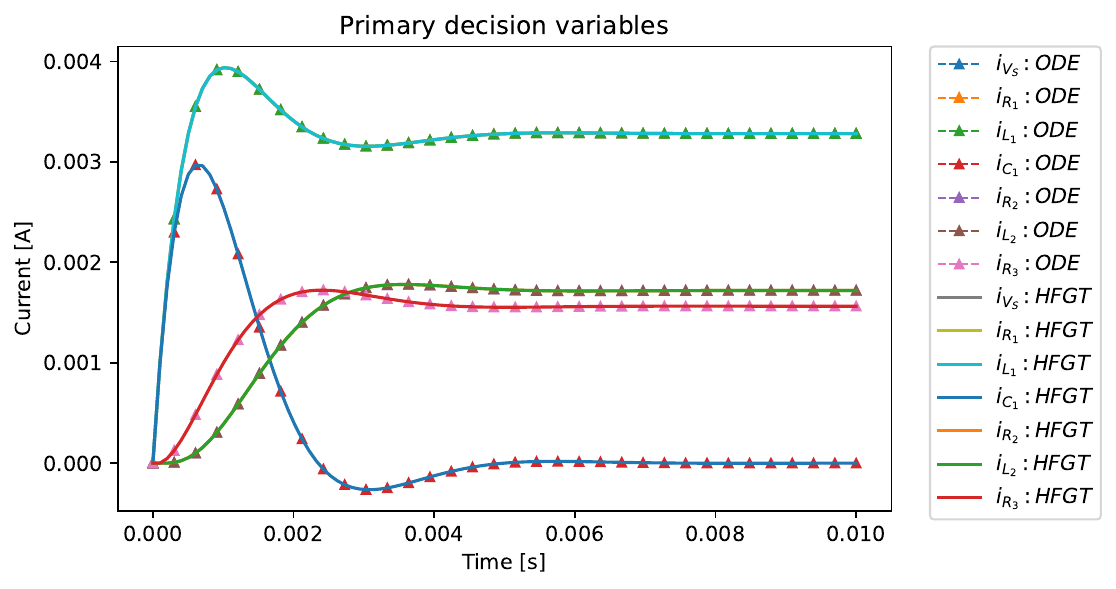} 
\caption{}
\label{fig:HFGT_Electrical_X}
\end{subfigure}
\hfill
\begin{subfigure}[b]{0.45\textwidth}
\centering
\includegraphics[width=1\linewidth]{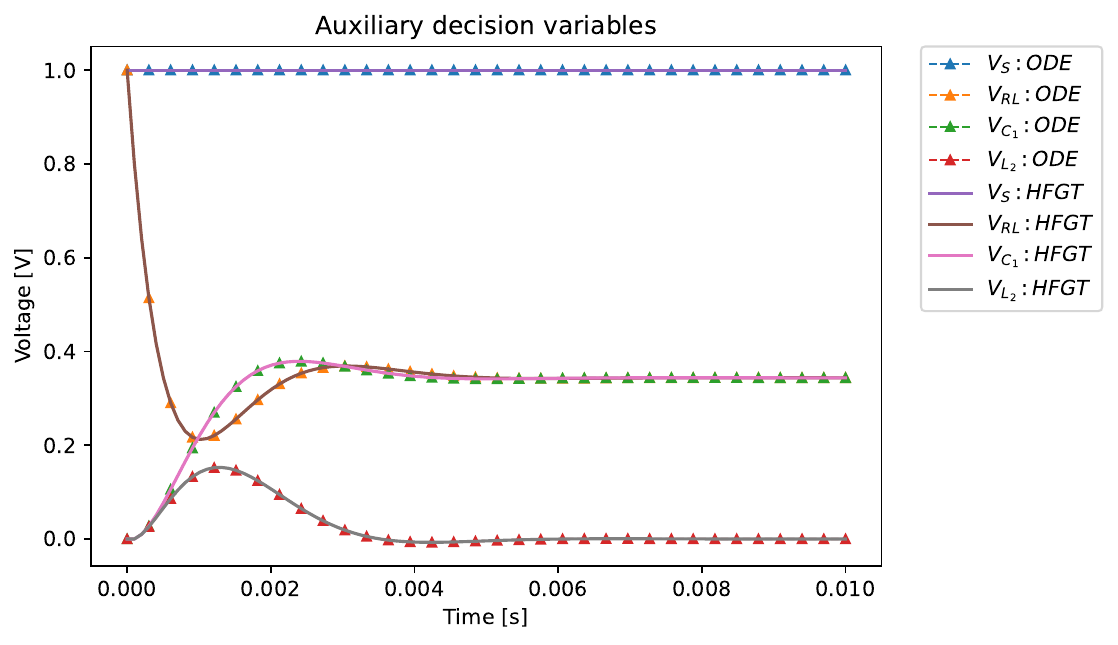}
\caption{}
\label{fig:HFGT_Electrical_Y}
\end{subfigure}
\caption{(a) Time series of primary decision variables of the electrical system. (b) Time series of auxiliary decision variables of the electrical system.  Results from the HFNMCF problem are shown in solid lines. Results from the State Space ODE model are shown in dashed lines with embedded triangles.}
\label{fig:HFGT_Electrical_system_results}
\end{figure}

\subsection{Translational Mechanical System}

The first step is to recognize that the translational mechanical system shown in Fig. \ref{fig:illustrative_examples}b is first, a specialization into the mechanical domain, followed by an instantiation of the engineering system meta-architecture in Fig. \ref{fig:LFESMetaArchitecture}.  Consequently, Defn. \ref{Defn:D1}-\ref{defn:capabilityCh7} are understood as follows.  There are four points that have distinct absolute values of across-variables that serve as independent buffers: $V_{m_1}$, $V_{k_2b_2}$, $V_{m_2}$ and $V_g$.  Additionally, there is one through-variable source $F_S$ that serves as a transformation resource.  Additionally, the transportation resources include generalized resistors $b_1$, $b_2$, generalized inductors $k_1$, $k_2$, and generalized capacitors $m_1$, $m_2$.  Fig. \ref{fig:LFESMetaArchitecture} shows that each of these transformation and transportation resources has exactly one system process; inject power with imposed through variable, dissipate power, store potential energy, and store kinetic energy.  The result is that each of these transformation and transportation resources introduces exactly one system capability with a primary through variable and an auxiliary across variable as attributes.  

In the next step, the translational mechanical system shown in Fig. \ref{fig:illustrative_examples}b is transformed into the engineering system net shown in Fig. \ref{fig:PN_Translational_mechanical_system}. 
\begin{figure}[H]
\centering
\includegraphics[width=0.7\textwidth]{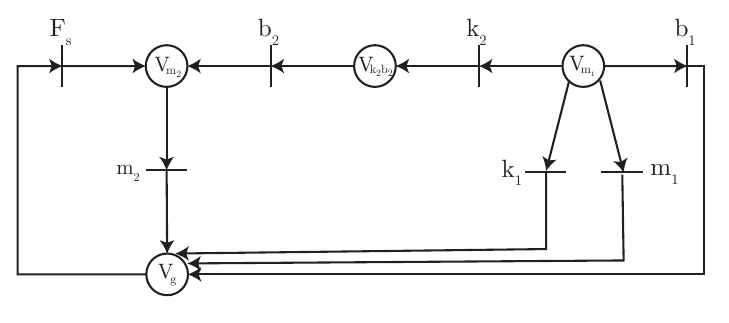} 
\caption{Engineering system net for the translational mechanical system shown in Fig. \ref{fig:illustrative_examples}b}
\label{fig:PN_Translational_mechanical_system}
\end{figure}
While it is possible to produce Fig. \ref{fig:PN_Translational_mechanical_system} from either the translational mechanical system in Fig. \ref{fig:illustrative_examples}b or the associated linear graph in Fig. \ref{fig:linear_graph_translational_mechanical_system} by visual inspection; such an approach falls apart in systems with multiple energy domains or capabilities with multiple inputs and outputs.  Instead, the Engineering System Net and its state transition function is constructed according to Defn. \ref{Defn:ESN} and \ref{Defn:ESN-STF}.  Note that the hetero-functional graph theory toolbox \cite{Thompson:2023:ISC-JR02} can automatically calculate the positive and negative hetero-functional incidence matrices from an XML input file that instantiates the information from Defn. \ref{Defn:D1} - \ref{defn:capabilityCh7}.   The Engineering System Net in Fig. \ref{fig:PN_Translational_mechanical_system} shows the system buffers as places, the capabilities as transitions, and the incidence between them.  

In the third step, the hetero-functional network minimum cost flow problem is set up and solved.  More specifically, Eq. \ref{eq:dummy_objective_function}-\ref{eq:device_model_initial} are written out explicitly.  

\begin{align}
\text{minimize } Z = 0 \label{eq:HFGT_translational_dummy_target_function}
\end{align}
\begin{align}
\text{s.t. }
\label{eq:HFGT_translational_continuity}
\begin{bmatrix}
+1 & -1 & +1 & 0 & 0 & 0 & 0\\
0 & 0 & -1 & +1 & 0 & 0 & 0 \\
0 & 0 & 0 & -1 & -1 & -1 & -1
\end{bmatrix}\begin{bmatrix}
F_{S}\\
F_{m_2} \\
F_{b_2} \\
F_{k_2} \\
F_{m_1} \\
F_{b_1} \\
F_{k_1}
\end{bmatrix}[k]\Delta T = 0   \quad \forall k \in \{1, \dots, K\}
\end{align}

\begin{align}
\label{eq:HFGT_translational_priVars_source}
\begin{bmatrix}
1 & 0 & 0 & 0 & 0 & 0 & 0
\end{bmatrix}
\begin{bmatrix}
F_{S}\\
F_{m_2} \\
F_{b_2} \\
F_{k_2} \\
F_{m_1} \\
F_{b_1} \\
F_{k_1}
\end{bmatrix}[k]
= 
1 \quad \forall k \in \{1, \dots, K\}
\end{align}

\begin{align} \label{eq:HFGT_translational_priVars_initial}
\begin{bmatrix}
0 & 0 & 0 & 1 & 0 & 0 & 0 \\
0 & 0 & 0 & 0 & 0 & 0 & 1 
\end{bmatrix}
\begin{bmatrix}
F_{S}\\
F_{m_2} \\
F_{b_2} \\
F_{k_2} \\
F_{m_1} \\
F_{b_1} \\
F_{k_1}
\end{bmatrix}[k=1]
= 
\begin{bmatrix}
0 \\
0
\end{bmatrix}
\end{align}
\begin{align}
\label{eq:HFGT_translational_device_model_R}\nonumber
\begin{bmatrix}
0 & 0 & 0 & 0 & 0 & 1 & 0 \\
0 & 0 & 1 & 0 & 0 & 0 & 0 \\
\end{bmatrix}
\begin{bmatrix}
F_{S}\\
F_{m_2} \\
F_{b_2} \\
F_{k_2} \\
F_{m_1} \\
F_{b_1} \\
F_{k_1}
\end{bmatrix}[k]
=
\begin{bmatrix}
b_1 & 0\\
0 & b_2
\end{bmatrix}
\begin{bmatrix}
0 & 0 & 0 & 0 & 0 & 1 & 0 \\
0 & 0 & 1 & 0 & 0 & 0 & 0 \\
\end{bmatrix}
\begin{bmatrix}
-1 & 0 & 0 \\
+1 & 0 & 0 \\
-1 & +1 & 0 \\
0 & -1 & +1 \\
0 & 0 & +1 \\
0 & 0 & +1 \\
0 & 0 & +1 \\
\end{bmatrix}
\begin{bmatrix}
V_{m_2} \\
V_{k_2b_2} \\
V_{m_1}
\end{bmatrix}[k] & \\
\forall k \in \{1, \dots, K\} & 
\end{align}
\begin{align}
\label{eq:HFGT_translational_device_model_L}\nonumber
\begin{bmatrix}
0 & 0 & 0 & 0 & 0 & 0 & 1 \\
0 & 0 & 0 & 1 & 0 & 0 & 0 \\
\end{bmatrix}
\left(
\begin{bmatrix}
F_{S}\\
F_{m_2} \\
F_{b_2} \\
F_{k_2} \\
F_{m_1} \\
F_{b_1} \\
F_{k_1} \\
\end{bmatrix}[k+1]
-
\begin{bmatrix}
F_{S}\\
F_{m_2} \\
F_{b_2} \\
F_{k_2} \\
F_{m_1} \\
F_{b_1} \\
F_{k_1} \\
\end{bmatrix}[k]
\right)
=
\begin{bmatrix}
k_1 & 0 \\
0 & k_2
\end{bmatrix}
\begin{bmatrix}
0 & 0 & 0 & 0 & 0 & 0 & 1 \\
0 & 0 & 0 & 1 & 0 & 0 & 0 \\
\end{bmatrix}
\begin{bmatrix}
-1 & 0 & 0 \\
+1 & 0 & 0 \\
-1 & +1 & 0 \\
0 & -1 & +1 \\
0 & 0 & +1 \\
0 & 0 & +1 \\
0 & 0 & +1 \\
\end{bmatrix}
\begin{bmatrix}
V_{m_2} \\
V_{k_2b_2} \\
V_{m_1}
\end{bmatrix}[k]\Delta T &\\
\forall k \in \{1, \dots, K-1\} & 
\end{align}
\begin{align}
\label{eq:HFGT_translational_device_model_C}\nonumber
\begin{bmatrix}
0 & 0 & 0 & 0 & 1 & 0 & 0 \\
0 & 1 & 0 & 0 & 0 & 0 & 0 \\
\end{bmatrix}
\begin{bmatrix}
F_{S}\\
F_{m_2} \\
F_{b_2} \\
F_{k_2} \\
F_{m_1} \\
F_{b_1} \\
F_{k_1} \\
\end{bmatrix}[k] \Delta T
= 
\begin{bmatrix}
m_1 & 0 \\
0 & m_2
\end{bmatrix}
\begin{bmatrix}
0 & 0 & 0 & 0 & 1 & 0 & 0 \\
0 & 1 & 0 & 0 & 0 & 0 & 0 \\
\end{bmatrix}
\begin{bmatrix}
-1 & 0 & 0 \\
+1 & 0 & 0 \\
-1 & +1 & 0 \\
0 & -1 & +1 \\
0 & 0 & +1 \\
0 & 0 & +1 \\
0 & 0 & +1 \\
\end{bmatrix}
\left(
\begin{bmatrix}
V_S \\
V_{R_1} \\
V_{C_1} \\
V_{L_2}
\end{bmatrix}[k+1]
-
\begin{bmatrix}
V_{m_2} \\
V_{k_2b_2} \\
V_{m_1}
\end{bmatrix}[k]
\right) &\\
\forall k \in \{1, \dots, K-1\} &
\end{align}

\begin{align}
\label{eq:HFGT_translational_auxVar_initial}
\begin{bmatrix}
1 & 0 & 0 \\
0 & 0 & 1 
\end{bmatrix}
\begin{bmatrix}
V_{m_2} \\
V_{k_2b_2} \\
V_{m_1}
\end{bmatrix}[k=1]
= 
\begin{bmatrix}
0 \\
0
\end{bmatrix}
\end{align}

This explicit statement of the HFNMCF problem in the context of the translational mechanical system shown in Fig. \ref{fig:illustrative_examples}b provides the following insights:
\begin{itemize}
\item Eq. \ref{eq:HFGT_translational_dummy_target_function} shows that the mechanical system does not have an objective function and is simply a set of simultaneous equations.  
\item Eq. \ref{eq:HFGT_translational_continuity} is a matrix restatement of the continuity laws (i.e. Newton’s First Law) in Eq. \ref{eq:trans_cont_1} - \ref{eq:trans_cont_3}.  Again, the force balance on the ground place is redundant and therefore eliminated.
\item Eq. \ref{eq:HFGT_translational_priVars_source} imposes exogenous values on the force due to the presence of force sources.  
\item Eq. \ref{eq:HFGT_translational_priVars_initial} is the initial condition on the spring force as a state variable.  
\item Eq. \ref{eq:HFGT_translational_device_model_R} is a matrix restatement of the constitutive law for mechanical dampers in Eq. \ref{eq:trans_const_5}, and \ref{eq:trans_const_6}.
\item Eq. \ref{eq:HFGT_translational_device_model_L} is a matrix restatement of the constitutive law for springs in Eq. \ref{eq:trans_const_3} and \ref{eq:trans_const_4}.
\item Eq. \ref{eq:HFGT_translational_device_model_C} is a matrix restatement of the constitutive law for masses in Eq. \ref{eq:trans_const_1} and \ref{eq:trans_const_2}.
\item There are no equations that impose exogenous values on the velocities because there are no velocity sources.
\item Eq. \ref{eq:HFGT_translational_auxVar_initial} is the initial condition on the mass velocities as state variables. 
\item The compatibility laws stated in Eq. \ref{eq:trans_comp_1} - \ref{eq:trans_comp_3} are superfluous because all of the velocities have been stated in absolute terms relative to the ground reference frame rather than as velocity differences between points.  
\end{itemize}

Once the HFNMCF problem for the translational mechanical system has been set up, it can be simulated straightforwardly and compared against the state space ODE model derived by linear graph and/or bond graph.  The following parameter values are chosen: $m_1 = 1 \ \text{kg}$, $m_2 = 2 \ \text{kg}$, $k_1 = 20 \ \text{N/m}$, $k_2 = 10 \ \text{N/m}$, $b_1 = 1 \ \text{Ns/m}$, $b_2 = 10 \ \text{Ns/m}$, and $F_s = 1 \ \text{N}$ (Step input force). The simulation time $K = 20$ seconds, and the time step $\Delta T = 0.02$ seconds.  The HFNMCF results for the primary (force) and auxiliary (velocity) decision variables are shown as solid lines in Fig. \ref{fig:HFGT_Translational_mechanical_system_results}.  The associated state space ODE results are shown in dashed lines with embedded triangles.  

\begin{figure}[H]
\centering
\begin{subfigure}[b]{.45\textwidth}
\centering
\includegraphics[width=1\linewidth]{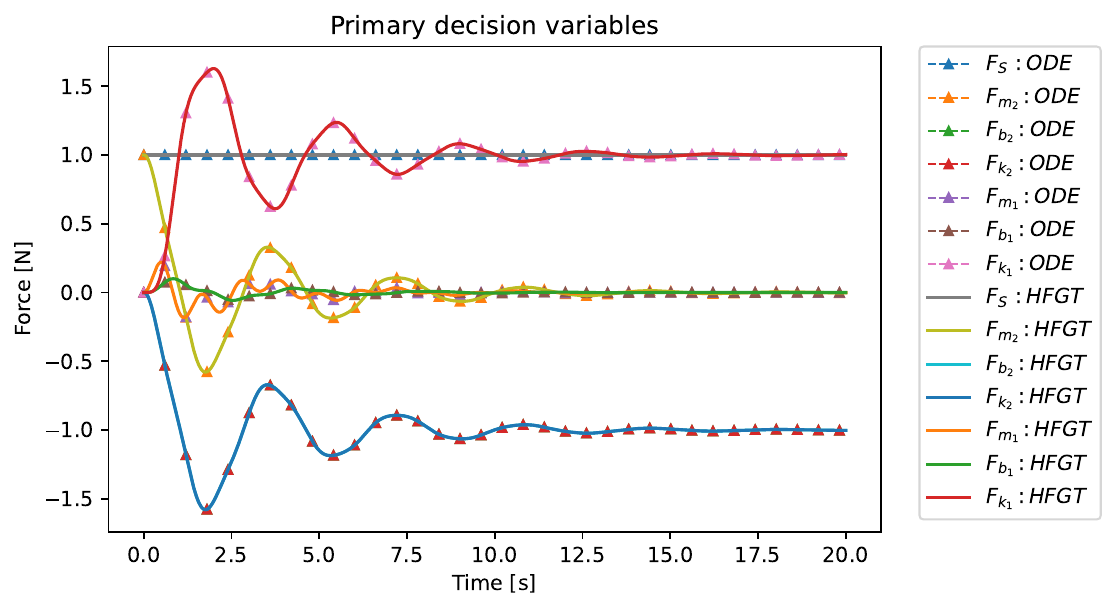} 
\caption{}
\label{fig:HFGT_Translational_mechanical_X}
\end{subfigure}
\hfill
\begin{subfigure}[b]{.45\textwidth}
\centering
\includegraphics[width=1\linewidth]{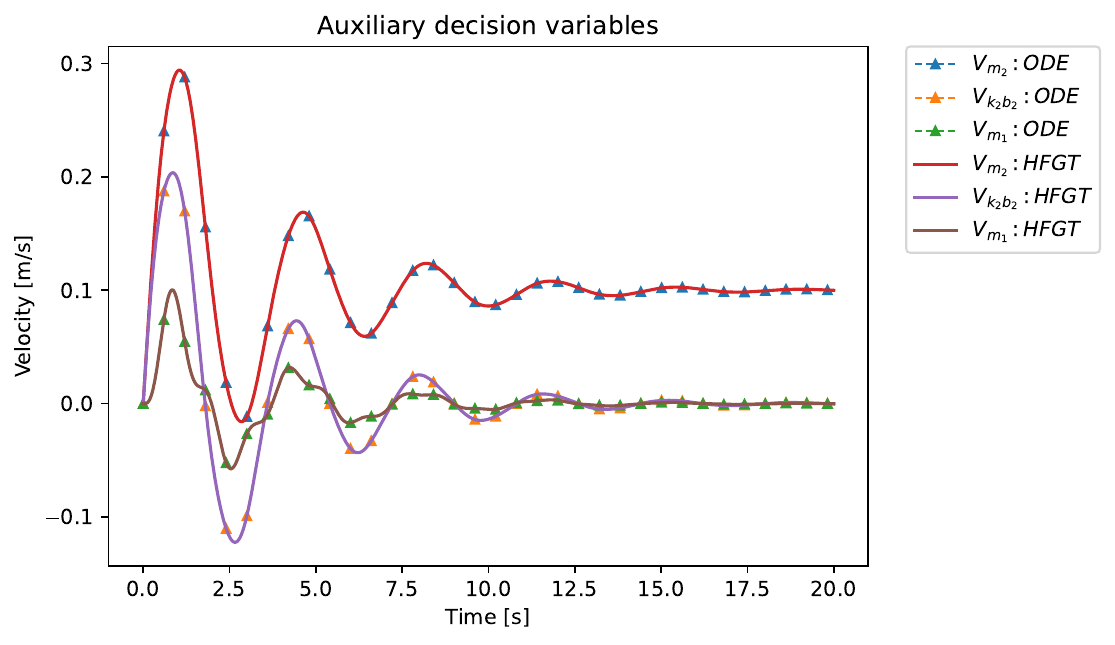}
\caption{}
\label{fig:HFGT_translational_mechanical_Y}
\end{subfigure}
\caption{(a) Time series of primary decision variables of the translational mechanical system. (b) Time series of auxiliary decision variables of the translational mechanical.  Results from the HFNMCF problem are shown in solid lines. Results from the State Space ODE model are shown in dashed lines with embedded triangles.}
\label{fig:HFGT_Translational_mechanical_system_results}
\end{figure}

\subsection{Rotational Mechanical System}
The first step is to recognize that the rotational mechanical system shown in Fig. \ref{fig:illustrative_examples}c is first, a specialization into the rotational mechanical domain, followed by an instantiation of the engineering system meta-architecture in Fig. \ref{fig:LFESMetaArchitecture}.  Consequently, Defn. \ref{Defn:D1}-\ref{defn:capabilityCh7} are understood as follows.  There are two points that have distinct absolute values of across-variables that serve as independent buffers: $\omega_J$, and $\omega_g$.  Additionally, there is one through-variable source $\tau_s$ that serves as a transformation resource.  Additionally, the transportation resources include a generalized resistor $b$, a generalized inductor $K$, and a generalized capacitor $J$.  Fig. \ref{fig:LFESMetaArchitecture} shows that each of these transformation and transportation resources has exactly one system process; inject power with imposed through variable, dissipate power, store potential energy, and store kinetic energy.  The result is that each of these transformation and transportation resources introduces exactly one system capability with a primary through-variable and an auxiliary across-variable as attributes.  

In the next step, the rotational mechanical system shown in Fig. \ref{fig:illustrative_examples}c is transformed into the engineering system net shown in Fig. \ref{fig:PN_Rotational_mechanical_system}. 
\begin{figure}[H]
\centering
\includegraphics[width=0.3\textwidth]{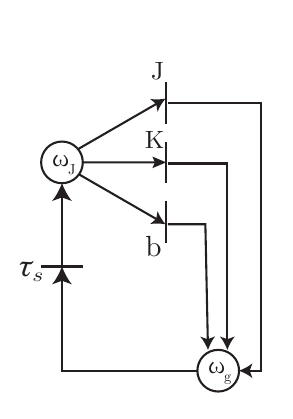} 
\caption{Petri-Net for the rotational mechanical system shown in Figure \ref{fig:illustrative_examples}(c)}
\label{fig:PN_Rotational_mechanical_system}
\end{figure}
\noindent While it is possible to produce Fig. \ref{fig:PN_Rotational_mechanical_system} from the rotational mechanical system in Fig. \ref{fig:illustrative_examples}c or the linear graph in Fig. \ref{X} by visual inspection; such an approach falls apart in systems with multiple energy domains or capabilities with multiple inputs and outputs.  Instead, the Engineering System Net and its state transition function is constructed according to Defn. \ref{Defn:ESN} and \ref{Defn:ESN-STF}.  Again, the hetero-functional graph theory toolbox \cite{Thompson:2023:ISC-JR02} automatically calculates the positive and negative hetero-functional incidence matrices from an XML input file that instantiates the information from Defn. \ref{Defn:D1} - \ref{defn:capabilityCh7}.   The Engineering System Net in Fig. \ref{fig:PN_Rotational_mechanical_system} shows the system buffers as places, the capabilities as transitions, and the incidence between them.  

In the third step, the hetero-functional network minimum cost flow problem is set up and solved.  More specifically, Eq. \ref{eq:dummy_objective_function}-\ref{eq:device_model_initial} are written out explicitly.  

\begin{align}
\text{minimize } Z = 0 \label{eq:HFGT_rotational_dummy_target_function}
\end{align}
\begin{align}
\text{s.t. }
\label{eq:HFGT_rotational_continuity}
\begin{bmatrix}
+1 & -1 & -1 & -1
\end{bmatrix}\begin{bmatrix}
\tau_{s} \\
\tau_{J} \\
\tau_{K} \\
\tau_{b}
\end{bmatrix}[k]\Delta T = 0   \quad \forall k \in \{1, \dots, K\}
\end{align}

\begin{align}
\label{eq:HFGT_rotational_priVars_source}
\begin{bmatrix}
1 & 0 & 0 & 0
\end{bmatrix}
\begin{bmatrix}
\tau_{s} \\
\tau_{J} \\
\tau_{K} \\
\tau_{b}
\end{bmatrix}[k]
= 1 \quad \forall k \in \{1, \dots, K\}
\end{align}

\begin{align} \label{eq:HFGT_rotational_priVars_initial}
\begin{bmatrix}
0 & 0 & 1 & 0
\end{bmatrix}
\begin{bmatrix}
\tau_{s} \\
\tau_{J} \\
\tau_{K} \\
\tau_{b}
\end{bmatrix}[k=1]
= 
0
\end{align}
\begin{align}
\label{eq:HFGT_rotational_device_model_R}\nonumber
\begin{bmatrix}
0 & 0 & 0 & 1
\end{bmatrix}
\begin{bmatrix}
\tau_{s} \\
\tau_{J} \\
\tau_{K} \\
\tau_{b}
\end{bmatrix}[k]
=
\begin{bmatrix}
b_\omega
\end{bmatrix}
\begin{bmatrix}
0 & 0 & 0 & 1
\end{bmatrix}
\begin{bmatrix}
-1 \\ +1 \\ +1 \\ +1
\end{bmatrix}
\begin{bmatrix}
\omega_J
\end{bmatrix}[k] & \\
\forall k \in \{1, \dots, K\} & 
\end{align}
\begin{align}
\label{eq:HFGT_rotational_device_model_L}\nonumber
\begin{bmatrix}
0 & 0 & 1 & 0 
\end{bmatrix}
\left(
\begin{bmatrix}
\tau_{s} \\
\tau_{J} \\
\tau_{K} \\
\tau_{b}
\end{bmatrix}[k+1]
-
\begin{bmatrix}
\tau_{s} \\
\tau_{J} \\
\tau_{K} \\
\tau_{b}
\end{bmatrix}[k]
\right)
=
\begin{bmatrix}
K_\omega
\end{bmatrix}
\begin{bmatrix}
0 & 0 & 1 & 0 
\end{bmatrix}
\begin{bmatrix}
-1 \\ +1 \\ +1 \\ +1
\end{bmatrix}
\begin{bmatrix}
\omega_J
\end{bmatrix}[k]\Delta T &\\
\forall k \in \{1, \dots, K-1\} & 
\end{align}
\begin{align}
\label{eq:HFGT_rotational_device_model_C}\nonumber
\begin{bmatrix}
0 & 1 & 0 & 0
\end{bmatrix}
\begin{bmatrix}
\tau_{s} \\
\tau_{J} \\
\tau_{K} \\
\tau_{b}
\end{bmatrix}[k] \Delta T
= 
\begin{bmatrix}
J
\end{bmatrix}
\begin{bmatrix}
0 & 1 & 0 & 0
\end{bmatrix}
\begin{bmatrix}
-1 \\ +1 \\ +1 \\ +1
\end{bmatrix}
\left(
\begin{bmatrix}
\omega_J
\end{bmatrix}[k+1]
-
\begin{bmatrix}
\omega_J
\end{bmatrix}[k]
\right) &\\
\forall k \in \{1, \dots, K-1\} &
\end{align}

\begin{align}
\label{eq:HFGT_rotational_auxVar_initial}
\begin{bmatrix}
\omega_J
\end{bmatrix}[k=1]
= 
0
\end{align}

This explicit statement of the HFNMCF problem in the context of the rotational mechanical system shown in Fig. \ref{fig:illustrative_examples}c provides the following insights:
\begin{itemize}
\item Eq. \ref{eq:HFGT_rotational_dummy_target_function} shows that mechanical system does not have an objective function and is simply a set of simultaneous equations.  
\item Eq. \ref{eq:HFGT_rotational_continuity} is a matrix restatement of the continuity law in Eq. \ref{eq:rot_cont_1}.  Again, the torque balance on the ground place is redundant and therefore eliminated.  
\item Eq. \ref{eq:HFGT_rotational_priVars_source} imposes exogenous values on the torque due to the presence of torque sources.  
\item Eq. \ref{eq:HFGT_rotational_priVars_initial} is the initial condition on the rotational spring torque as a state variable.  
\item Eq. \ref{eq:HFGT_rotational_device_model_R} is a matrix restatement of the constitutive law for rotational dampers in Eq. \ref{eq:rot_const_3}.
\item Eq. \ref{eq:HFGT_rotational_device_model_L} is a matrix restatement of the constitutive law for rotational springs in Eq. \ref{eq:rot_const_2}.
\item Eq. \ref{eq:HFGT_rotational_device_model_C} is a matrix restatement of the constitutive law for rotational inertias in Eq. \ref{eq:rot_const_1}.
\item The selector matrices in Eq. \ref{eq:HFGT_rotational_priVars_source}-\ref{eq:HFGT_rotational_auxVar_initial} can be automatically produced from the HFGT toolbox \cite{Thompson:2023:ISC-JR02} for systems of arbitrary size.  
\item There are no equations that impose exogenous values on the angular velocity because there are no angular velocity sources.
\item Eq. \ref{eq:HFGT_rotational_auxVar_initial} is the initial condition on the disk angular velocity as a state variable. 
\item The compatibility laws stated in Eq. \ref{eq:rot_comp_1} and \ref{eq:rot_comp_2} are superfluous because all of the angular velocities have been stated in absolute terms relative to the ground reference frame rather than as angular velocity differences between points.  
\end{itemize}

Once the HFNMCF problem for the rotational mechanical system has been set up, it can be simulated straightforwardly and compared against the state space ODE model derived by linear graph and/or bond graph.  The following parameter values are chosen:  $J = 0.5 \ \text{kg} \cdot \text{m}^2$, $k = 2 \ \text{N} \cdot \text{m/rad}$, $b = 0.5 \ \text{N} \cdot \text{m} \cdot \text{s/rad}$, and $\tau_s = 1 \ \text{N} \cdot \text{m}$ (Step input torque). The simulation time $K = 15 $ seconds, and the time step $\Delta T = 0.1 $ seconds.  The HFNMCF results for the primary (torque) and auxiliary (angular velocity) decision variables are shown as solid lines in Fig. \ref{fig:HFGT_Rotational_mechanical_system_results}.  The associated state space ODE results are shown in dashed lines with embedded triangles. 

\begin{figure}[H]
\centering
\begin{subfigure}[b]{.45\textwidth}
\centering
\includegraphics[width=1\linewidth]{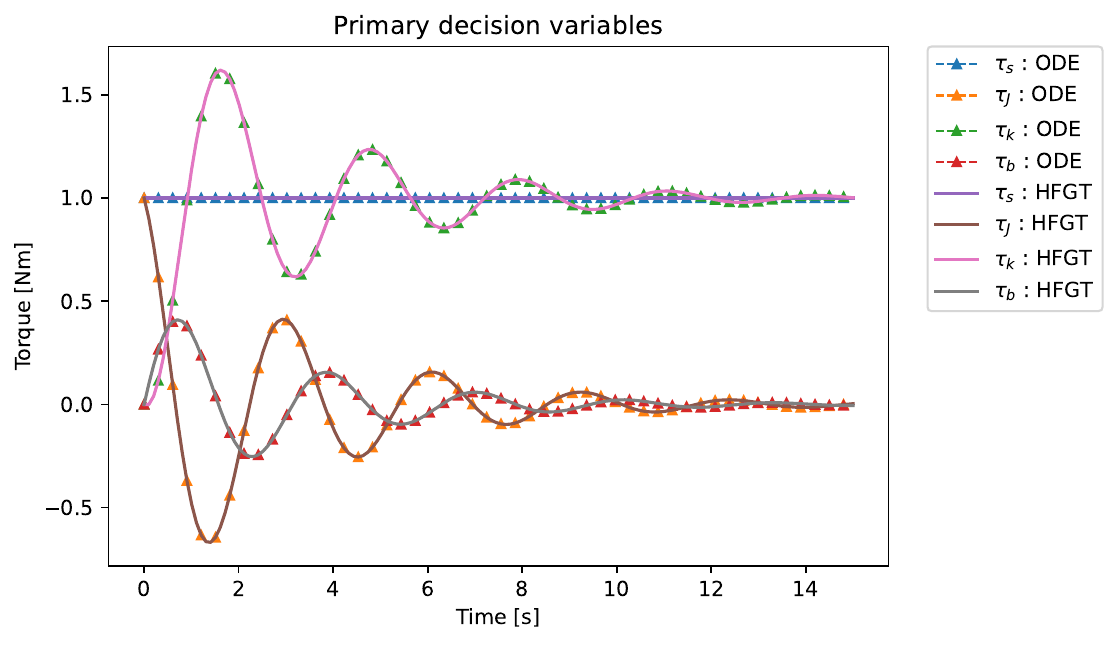} 
\caption{}
\label{fig:HFGT_Rotational_mechanical_X}
\end{subfigure}
\hfill
\begin{subfigure}[b]{.45\textwidth}
\centering
\includegraphics[width=1\linewidth]{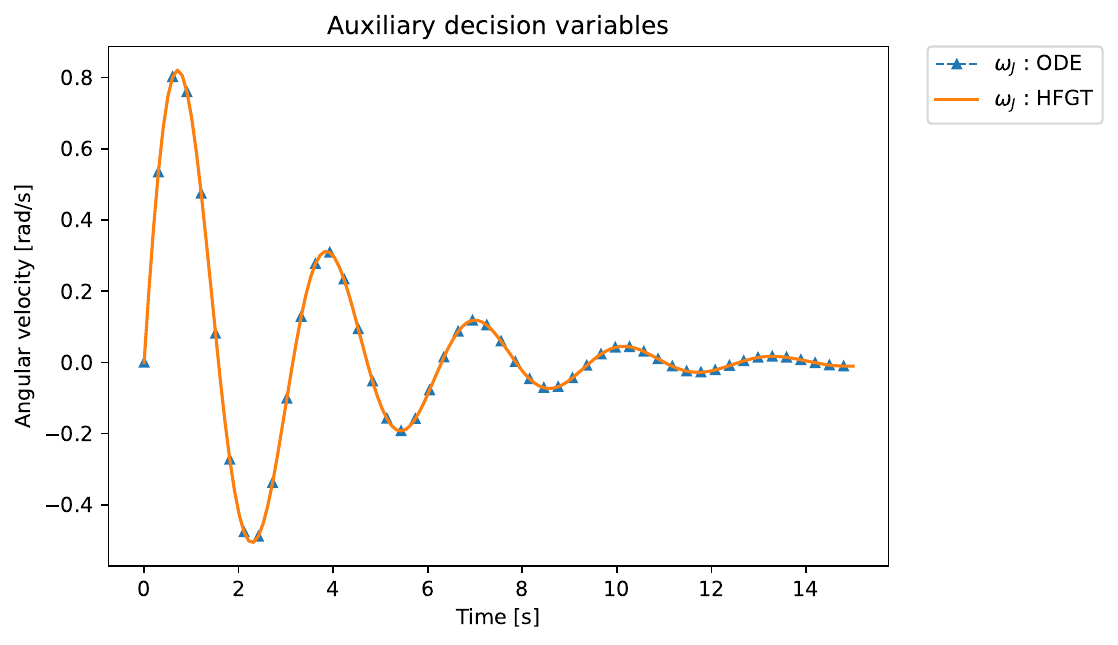}
\caption{}
\label{fig:HFGT_Rotational_mechanical_Y}
\end{subfigure}
\caption{(a) Time series of primary decision variables of the rotational mechanical system. (b) Time series of auxiliary decision variables of the rotational mechanical system.  Results from the HFNMCF problem are shown in solid lines. Results from the State Space ODE model are shown in dashed lines with embedded triangles.}
\label{fig:HFGT_Rotational_mechanical_system_results}
\end{figure}

\subsection{Fluidic System}
The first step is to recognize that the fluidic system shown in Fig. \ref{fig:illustrative_examples}d is first, a specialization into the fluidic domain, followed by an instantiation of the engineering system meta architecture in Fig. \ref{fig:LFESMetaArchitecture}.  Consequently, Defn. \ref{Defn:D1}-\ref{defn:capabilityCh7} are understood as follows. There are four fluidic points that have distinct absolute values of across-variables that serve as independent buffers: $P_1$, $P_{\mathcal{I} R_{1}}$, $P_2$, and $P_0$.  Additionally, there is one through-variable source $\dot{\cal V}_f$ that serves as a transformation resource.  Additionally, the transportation resources include generalized resistors $R_1$ and $R_2$, a generalized inductor $\cal I$, and generalized capacitors $C_1$ and $C_2$.  Fig. \ref{fig:LFESMetaArchitecture} shows that each of these transformation and transportation resources has exactly one system process; inject power with imposed through variable, dissipate power, store potential energy, and store kinetic energy.  The result is that each of these transformation and transportation resources introduces exactly one system capability with a primary through-variable and an auxiliary across-variable as attributes.  

In the next step, the fluidic system shown in Fig. \ref{fig:illustrative_examples}d is transformed into the engineering system net shown in Fig. \ref{fig:PN_Fluidic_system}. 
\begin{figure}[H]
\centering
\includegraphics[width=0.7\textwidth]{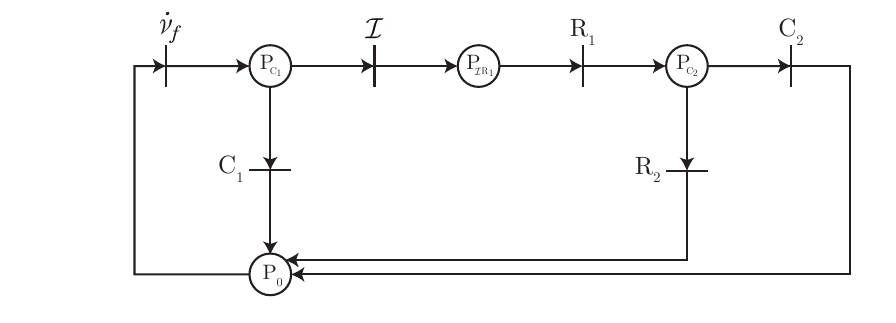} 
\caption{Petri-Net for the fluidic system shown in Figure \ref{fig:illustrative_examples}d}
\label{fig:PN_Fluidic_system}
\end{figure}
\noindent While it is possible to produce Fig. \ref{fig:PN_Fluidic_system} from the fluidic system in Fig. \ref{fig:illustrative_examples}d and the linear graph in Fig. \ref{X} by visual inspection; such an approach falls apart in systems with multiple energy domains or capabilities with multiple inputs and outputs.  Instead, the Engineering System Net and its state transition function is constructed according to Defn. \ref{Defn:ESN} and \ref{Defn:ESN-STF}.  Again, the hetero-functional graph theory toolbox \cite{Thompson:2023:ISC-JR02} automatically calculates the positive and negative hetero-functional incidence matrices from an XML input file that instantiates the information from Defn. \ref{Defn:D1} - \ref{defn:capabilityCh7}.   The Engineering System Net in Fig. \ref{fig:PN_Fluidic_system} shows the system buffers as places, the capabilities as transitions, and the incidence between them.  

In the third step, the hetero-functional network minimum cost flow problem is set up and solved.  More specifically, Eq. \ref{eq:dummy_objective_function}-\ref{eq:device_model_initial} are written out explicitly.  

\begin{align}
\text{minimize } Z = 0 \label{eq:HFGT_fluidic_dummy_target_function}
\end{align}
\begin{align}
\text{s.t. }
\label{eq:HFGT_fluidic_continuity}
\begin{bmatrix}
+1 & -1 & -1 & 0 & 0 & 0 \\
0 & +1 & 0 & -1 & 0 & 0 \\
0 & 0 & 0 & +1 & -1 & -1 
\end{bmatrix}\begin{bmatrix}
\dot{\cal V}_{f} \\
\dot{\cal V}_{\cal I} \\
\dot{\cal V}_{C_1} \\
\dot{\cal V}_{R_1} \\
\dot{\cal V}_{C_2} \\
\dot{\cal V}_{R_2}
\end{bmatrix}[k]\Delta T = 0   \quad \forall k \in \{1, \dots, K\}
\end{align}

\begin{align}
\label{eq:HFGT_fluidic_priVars_source}
\begin{bmatrix}
1 & 0 & 0 & 0 & 0 & 0
\end{bmatrix}
\begin{bmatrix}
\dot{\cal V}_{f} \\
\dot{\cal V}_{\cal I} \\
\dot{\cal V}_{C_1} \\
\dot{\cal V}_{R_1} \\
\dot{\cal V}_{C_2} \\
\dot{\cal V}_{R_2}
\end{bmatrix}[k]
=
1 \quad \forall k \in \{1, \dots, K\}
\end{align}

\begin{align} \label{eq:HFGT_fluidic_priVars_initial}
\begin{bmatrix}
0 & 1 & 0 & 0 & 0 & 0 
\end{bmatrix}
\begin{bmatrix}
\dot{\cal V}_{f} \\
\dot{\cal V}_{\cal I} \\
\dot{\cal V}_{C_1} \\
\dot{\cal V}_{R_1} \\
\dot{\cal V}_{C_2} \\
\dot{\cal V}_{R_2}
\end{bmatrix}[k=1]
= 
0
\end{align}

\begin{align}
\label{eq:HFGT_fluidic_device_model_R}\nonumber
\begin{bmatrix}
0 & 0 & 0 & 1 & 0 & 0 \\
0 & 0 & 0 & 0 & 0 & 1 \\
\end{bmatrix}
\begin{bmatrix}
\dot{\cal V}_{f} \\
\dot{\cal V}_{\cal I} \\
\dot{\cal V}_{C_1} \\
\dot{\cal V}_{R_1} \\
\dot{\cal V}_{C_2} \\
\dot{\cal V}_{R_2}
\end{bmatrix}[k]
=
\begin{bmatrix}
\frac{1}{R_1} & 0\\
0 & \frac{1}{R_2}
\end{bmatrix}
\begin{bmatrix}
0 & 0 & 0 & 1 & 0 & 0 \\
0 & 0 & 0 & 0 & 0 & 1 \\
\end{bmatrix}
\begin{bmatrix}
-1 & 0 & 0 \\
+1 & -1 & 0 \\
+1 & 0 & 0 \\
0 & +1 & -1 \\
0 & 0 & +1 \\
0 & 0 & +1 \\
\end{bmatrix}
\begin{bmatrix}
P_1 \\
P_{\mathcal{I} R_{1}} \\
P_2
\end{bmatrix}[k] & \\
\forall k \in \{1, \dots, K\} & 
\end{align}
\begin{align}
\label{eq:HFGT_fluidic_device_model_L}\nonumber
\begin{bmatrix}
0 & 1 & 0 & 0 & 0 & 0
\end{bmatrix}
\left(
\begin{bmatrix}
\dot{\cal V}_{f} \\
\dot{\cal V}_{\cal I} \\
\dot{\cal V}_{C_1} \\
\dot{\cal V}_{R_1} \\
\dot{\cal V}_{C_2} \\
\dot{\cal V}_{R_2}
\end{bmatrix}[k+1]
-
\begin{bmatrix}
\dot{\cal V}_{f} \\
\dot{\cal V}_{\cal I} \\
\dot{\cal V}_{C_1} \\
\dot{\cal V}_{R_1} \\
\dot{\cal V}_{C_2} \\
\dot{\cal V}_{R_2}
\end{bmatrix}[k]
\right)
=
\begin{bmatrix}
\frac{1}{\cal I}
\end{bmatrix}
\begin{bmatrix}
0 & 1 & 0 & 0 & 0 & 0
\end{bmatrix}
\begin{bmatrix}
-1 & 0 & 0 \\
+1 & -1 & 0 \\
+1 & 0 & 0 \\
0 & +1 & -1 \\
0 & 0 & +1 \\
0 & 0 & +1 \\
\end{bmatrix}
\begin{bmatrix}
P_1 \\
P_{\mathcal{I} R_{1}} \\
P_2
\end{bmatrix}[k]\Delta T &\\
\forall k \in \{1, \dots, K-1\} & 
\end{align}
\begin{align}
\label{eq:HFGT_fluidic_device_model_C}\nonumber
\begin{bmatrix}
0 & 0 & 1 & 0 & 0 & 0 \\
0 & 0 & 0 & 0 & 1 & 0 \\
\end{bmatrix}
\begin{bmatrix}
\dot{\cal V}_{f} \\
\dot{\cal V}_{\cal I} \\
\dot{\cal V}_{C_1} \\
\dot{\cal V}_{R_1} \\
\dot{\cal V}_{C_2} \\
\dot{\cal V}_{R_2}
\end{bmatrix}[k] \Delta T
= 
\begin{bmatrix}
C_1 & 0 \\
0 & C_2
\end{bmatrix}
\begin{bmatrix}
0 & 0 & 1 & 0 & 0 & 0 \\
0 & 0 & 0 & 0 & 1 & 0 \\
\end{bmatrix}
\begin{bmatrix}
-1 & 0 & 0 \\
+1 & -1 & 0 \\
+1 & 0 & 0 \\
0 & +1 & -1 \\
0 & 0 & +1 \\
0 & 0 & +1 \\
\end{bmatrix}
\left(
\begin{bmatrix}
P_1 \\
P_{\mathcal{I} R_{1}} \\
P_2
\end{bmatrix}[k+1]
-
\begin{bmatrix}
P_1 \\
P_{\mathcal{I} R_{1}} \\
P_2
\end{bmatrix}[k]
\right) &\\
\forall k \in \{1, \dots, K-1\} &
\end{align}
\begin{align}
\label{eq:HFGT_fluidic_auxVar_initial}
\begin{bmatrix}
1 & 0 & 0 \\
0 & 0 & 1 
\end{bmatrix}
\begin{bmatrix}
P_1 \\
P_{\mathcal{I} R_{1}} \\
P_2
\end{bmatrix}[k=1]
= 
\begin{bmatrix}
0 \\
0
\end{bmatrix}
\end{align}

This explicit statement of the HFNMCF problem in the context of the fluidic system shown in Fig. \ref{fig:illustrative_examples}d provides the following insights:
\begin{itemize}
\item Eq. \ref{eq:HFGT_fluidic_dummy_target_function} shows that the fluidic system does not have an objective function and is simply a set of simultaneous equations.  
\item Eq. \ref{eq:HFGT_fluidic_continuity} is a matrix restatement of the continuity laws in Eq. \ref{eq:fluid_cont_1} - \ref{eq:fluid_cont_3}.  Again, the volumetric flow rate balance on the ground place is redundant and therefore eliminated.  
\item Eq. \ref{eq:HFGT_fluidic_priVars_source} imposes exogenous values on the volumetric flow rate due to the presence of a volumetric flow rate source.
\item Eq. \ref{eq:HFGT_fluidic_priVars_initial} is the initial condition on the pipe inductance volumetric flow rate as a state variable.  
\item Eq. \ref{eq:HFGT_fluidic_device_model_R} is a matrix restatement of the constitutive law for the elements' fluidic resistances in Eq. \ref{eq:fluid_const_4}, and \ref{eq:fluid_const_5}.
\item Eq. \ref{eq:HFGT_fluidic_device_model_L} is a matrix restatement of the constitutive law for the elements' fluidic inductances in Eq. \ref{eq:fluid_const_3}.
\item Eq. \ref{eq:HFGT_fluidic_device_model_C} is a matrix restatement of the constitutive law for the tanks in Eq. \ref{eq:fluid_const_1} and \ref{eq:fluid_const_2}.
\item The selector matrices in Eq. \ref{eq:HFGT_fluidic_priVars_initial}-\ref{eq:HFGT_fluidic_device_model_C} can be automatically produced from the HFGT toolbox \cite{Thompson:2023:ISC-JR02}for systems of arbitary size.  
\item There are no equations that impose exogenous values on the pressure because there are no pressure sources.
\item Eq. \ref{eq:HFGT_fluidic_auxVar_initial} is the initial condition on the tank pressures as state variables. 
\item The compatibility laws stated in Eq. \ref{eq:fluid_comp_1} and \ref{eq:fluid_comp_2} are superfluous because all of the pressures have been stated in absolute terms relative to the reference ambient pressure rather than as pressure differences between fluidic points.  
\end{itemize}

Once the HFNMCF problem for the fluidic system has been set up, it can be simulated straightforwardly and compared against the state space ODE model derived by linear graph and/or bond graph.  The following parameter values are chosen:  $R_1 = 2 \ \text{s/m}^2$, $R_2 = 1 \ \text{s/m}^2$, $C_1 = 0.02 \ \text{m}^3$, $C_2 = 0.05 \ \text{m}^3$, $\mathcal{I} = 2 \ \text{N} \cdot \text{s}^2/\text{m}^5$, and $\dot{v}_f = 1 \ \text{m}^3/\text{s}$ (Step input volumetric flow rate). The simulation time $K = 10 $ seconds, and the time step $\Delta T = 0.02 $ seconds.  The HFNMCF results for the primary (volumetric flow rate) and auxiliary (pressure) decision variables are shown as solid lines in Fig. \ref{fig:HFGT_Fluidic_system_results}.  The associated state space ODE results are shown in dashed lines with embedded triangles.


\begin{figure}[H]
\centering
\begin{subfigure}[b]{.45\textwidth}
\centering
\includegraphics[width=1\linewidth]{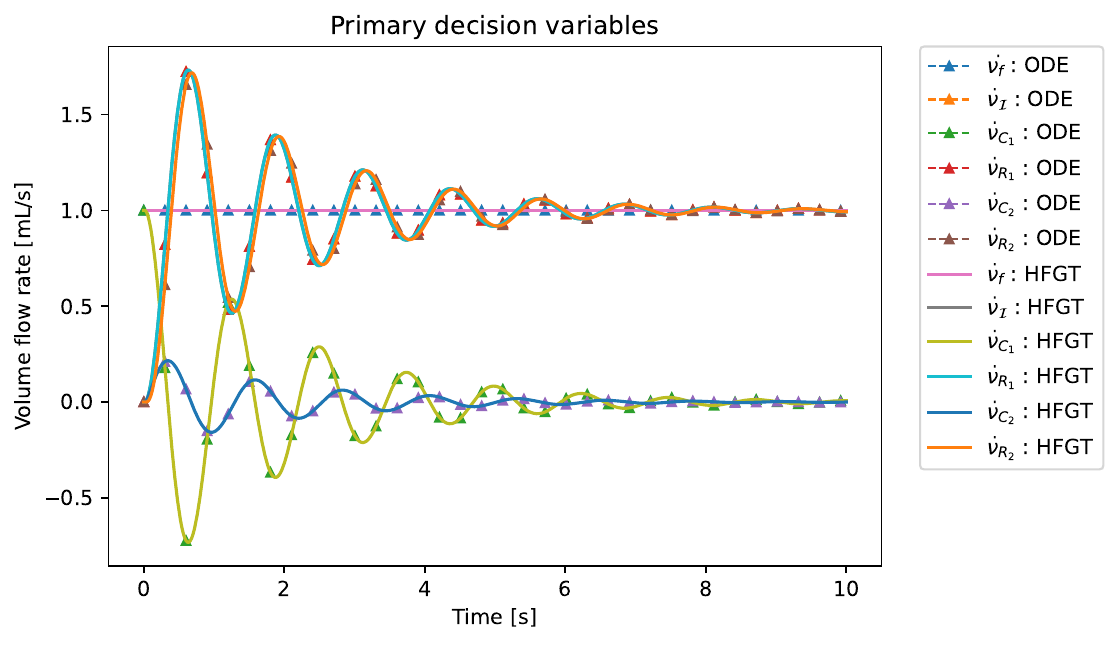} 
\caption{}
\label{fig:HFGT_fluidic_X}
\end{subfigure}
\hfill
\begin{subfigure}[b]{.45\textwidth}
\centering
\includegraphics[width=1\linewidth]{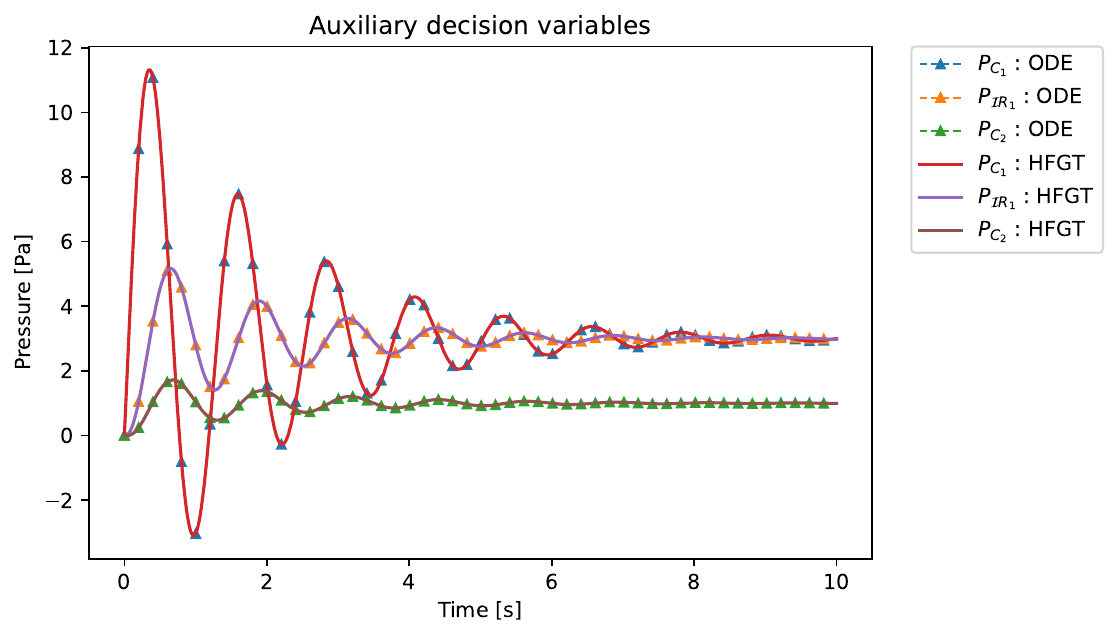}
\caption{}
\label{fig:HFGT_Fluidic_Y}
\end{subfigure}
\caption{(a) Time series of primary decision variables of the fluidic system. (b) Time series of auxiliary decision variables of the fluidic system.  Results from the HFNMCF problem are shown in solid linears. Results from the State Space ODE model are shown in dashed lines with embedded triangles.}
\label{fig:HFGT_Fluidic_system_results}
\end{figure}

\subsection{Thermal System}
The first step is to recognize that the thermal system shown in Fig. \ref{fig:illustrative_examples}e is first, a specialization into the thermal domain, followed by an instantiation of the engineering system meta architecture in Fig. \ref{fig:LFESMetaArchitecture}.  Consequently, Defn. \ref{Defn:D1}-\ref{defn:capabilityCh7} are understood as follows.  There are three thermal points that have distinct absolute values of across-variables that serve as independent buffers: $T_{C_h}$, $T_{C_i}$, and $T_0$.  Additionally, there is one through-variable source $\dot{Q_s}$ that serves as a transformation resource.  Additionally, the transportation resources include generalized resistors $R_i$ and $R_h$, and generalized capacitors $C_i$ and $C_h$.  Fig. \ref{fig:LFESMetaArchitecture} shows that each of these transformation and transportation resources has exactly one system process; inject power with imposed through variable, dissipate power, and store (thermal) potential energy.  The result is that each of these transformation and transportation resources introduces exactly one system capability with a primary through-variable and an auxiliary across-variable as attributes.  

In the next step, the thermal system shown in Fig. \ref{fig:illustrative_examples}e is transformed into the engineering system net shown in Fig. \ref{fig:PN_Thermal_system}. 
\begin{figure}
\centering
\includegraphics[width=0.45\textwidth]{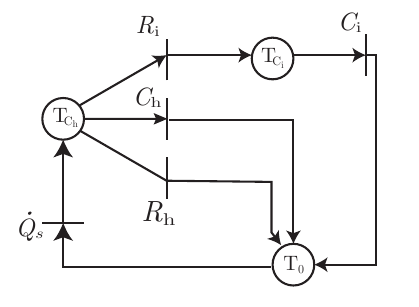} 
\caption{Petri-Net for the thermal system shown in Fig. \ref{fig:illustrative_examples}e}
\label{fig:PN_Thermal_system}
\end{figure}
\noindent While it is possible to produce Fig. \ref{fig:PN_Thermal_system} from the thermal system in Fig. \ref{fig:illustrative_examples}e, it's analogous electrical circuit in Fig. \ref{fig:thermal_system_electrical_circuit} and/or its linear graph in Fig. \ref{fig:linear_graph_thermal_system} by visual inspection; such an approach falls apart in systems with multiple energy domains or capabilities with multiple inputs and outputs.  Instead, the Engineering System Net and its state transition function is constructed according to Defn. \ref{Defn:ESN} and \ref{Defn:ESN-STF}.  Note that the hetero-functional graph theory toolbox \cite{Thompson:2023:ISC-JR02} can automatically calculate the positive and negative hetero-functional incidence matrices from an XML input file that instantiates the information from Defn. \ref{Defn:D1} - \ref{defn:capabilityCh7}.   The Engineering System Net in Fig. \ref{fig:PN_Thermal_system} shows the system buffers as places, the capabilities as transitions, and the incidence between them.  

In the third step, the hetero-functional network minimum cost flow problem is set up and solved.  More specifically, Eq. \ref{eq:dummy_objective_function}-\ref{eq:device_model_initial} are written out explicitly.  

\begin{align}
\text{minimize } Z = 0 \label{eq:HFGT_thermal_dummy_target_function}
\end{align}
\begin{align}
\text{s.t. }
\label{eq:HFGT_thermal_continuity}
\begin{bmatrix}
+1 & -1 & 0 & -1 & -1\\
0 & +1 & -1 & 0 & 0\\
\end{bmatrix}\begin{bmatrix}
\dot{Q}_{s} \\
\dot{Q}_{R_i} \\
\dot{Q}_{C_i} \\
\dot{Q}_{C_h} \\
\dot{Q}_{R_h}
\end{bmatrix}[k]\Delta T = 0   \quad \forall k \in \{1, \dots, K\}
\end{align}

\begin{align}
\label{eq:HFGT_thermal_priVars_source}
\begin{bmatrix}
1 & 0 & 0 & 0 & 0
\end{bmatrix}
\begin{bmatrix}
\dot{Q}_{s} \\
\dot{Q}_{R_i} \\
\dot{Q}_{C_i} \\
\dot{Q}_{C_h} \\
\dot{Q}_{R_h}
\end{bmatrix}[k]
= \begin{bmatrix}
1
\end{bmatrix}  \quad \forall k \in \{1, \dots, K\}
\end{align}

\begin{align}
\label{eq:HFGT_thermal_device_model_R}\nonumber
\begin{bmatrix}
0 & 1 & 0 & 0 & 0 \\
0 & 0 & 0 & 0 & 1 \\
\end{bmatrix}
\begin{bmatrix}
\dot{Q}_{s} \\
\dot{Q}_{R_i} \\
\dot{Q}_{C_i} \\
\dot{Q}_{C_h} \\
\dot{Q}_{R_h}
\end{bmatrix}[k]
=
\begin{bmatrix}
\frac{1}{R_i} & 0\\
0 & \frac{1}{R_h}
\end{bmatrix}
\begin{bmatrix}
0 & 1 & 0 & 0 & 0 \\
0 & 0 & 0 & 0 & 1 \\
\end{bmatrix}
\begin{bmatrix}
-1 & 0 \\
+1 & -1 \\
0 & +1 \\
+1 & 0 \\
+1 & 0 \\
\end{bmatrix}
\begin{bmatrix}
T_{C_h} \\
T_{C_i}
\end{bmatrix}[k] & \\
\forall k \in \{1, \dots, K\} & 
\end{align}

\begin{align}
\label{eq:HFGT_thermal_device_model_C}\nonumber
\begin{bmatrix}
0 & 0 & 1 & 0 & 0 \\
0 & 0 & 0 & 1 & 0
\end{bmatrix}
\begin{bmatrix}
\dot{Q}_{s} \\
\dot{Q}_{R_i} \\
\dot{Q}_{C_i} \\
\dot{Q}_{C_h} \\
\dot{Q}_{R_h}
\end{bmatrix}[k] \Delta T
= 
\begin{bmatrix}
C_i & 0 \\
0 & C_h
\end{bmatrix}
\begin{bmatrix}
0 & 0 & 1 & 0 & 0 \\
0 & 0 & 0 & 1 & 0
\end{bmatrix}
\begin{bmatrix}
-1 & 0 \\
+1 & -1 \\
0 & +1 \\
+1 & 0 \\
+1 & 0
\end{bmatrix}
\left(
\begin{bmatrix}
T_{C_h} \\
T_{C_i}
\end{bmatrix}[k+1]
-
\begin{bmatrix}
T_{C_h} \\
T_{C_i}
\end{bmatrix}[k]
\right) &\\
\forall k \in \{1, \dots, K-1\} &
\end{align}

\begin{align}
\label{eq:HFGT_thermal_auxVar_initial}
\begin{bmatrix}
1 & 0 \\
0 & 1 
\end{bmatrix}
\begin{bmatrix}
T_{C_h} \\
T_{C_i}
\end{bmatrix}[1]
= 
\begin{bmatrix}
0 \\
0
\end{bmatrix}
\end{align}

This explicit statement of the HFNMCF problem in the context of the thermal system shown in Fig. \ref{fig:illustrative_examples}e provides the following insights:
\begin{itemize}
\item Eq. \ref{eq:HFGT_thermal_dummy_target_function} shows that the thermal system does not have an objective function and is simply a set of simultaneous equations.  
\item Eq. \ref{eq:HFGT_thermal_continuity} is a matrix restatement of the continuity laws in Eq. \ref{eq:thermal_cont_1} and \ref{eq:thermal_cont_2}.  Again, the heat balance on the ground place is redundant and therefore eliminated.  
\item Eq. \ref{eq:HFGT_thermal_priVars_source} imposes exogenous values on the heat flow rate due to the presence of heat flow sources.  
\item Due to the absence of generalized inductors in thermal systems, there are no equations for their initial condition.
\item Eq. \ref{eq:HFGT_thermal_device_model_R} is a matrix restatement of the constitutive law for element's thermal resistances in Eq. \ref{eq:thermal_const_3} and \ref{eq:thermal_const_4}.
\item There are no equations that restate the constitutive laws for thermal inductors because there are no generalized inductors in thermal systems.
\item Eq. \ref{eq:HFGT_thermal_device_model_C} is a matrix restatement of the constitutive law for the thermal capacitance of system elements in Eq. \ref{eq:thermal_const_1} and \ref{eq:thermal_const_2}.
\item The selector matrices in Eq. \ref{eq:HFGT_thermal_priVars_source}-\ref{eq:HFGT_thermal_auxVar_initial} can be automatically produced from the HFGT toolbox \cite{Thompson:2023:ISC-JR02} for systems of arbitrary size.  
\item There are no equations that impose exogenous values on the temperature because there are no temperature sources.
\item Eq. \ref{eq:HFGT_thermal_auxVar_initial} is the initial condition on the thermal capacitors' temperature as the state variables. 
\item The compatibility laws stated in Eq. \ref{eq:thermal_comp_1} and \ref{eq:thermal_comp_2} are superfluous because all of the temperatures have been stated in absolute terms relative to $0 \deg C$ as a reference temperature rather than as temperature differences between points.  
\end{itemize}

Once the HFNMCF problem for the fluidic system has been set up, it can be simulated straightforwardly and compared against the state space ODE model derived by linear graph and/or bond graph.  The following parameter values are chosen:
$R_i = 0.5 \ \text{K/W}$, $R_h = 0.2 \ \text{K/W}$, $C_i = 1 \ \text{J/K}$, $C_h = 2 \ \text{J/K}$, $\dot{Q}_S = 1 \ \text{W}$ \ \text{(Step input heat flow rate)}.The simulation time $K = 5 $ seconds, and the time step $\Delta T = 0.1 $ seconds. The HFNMCF results for the primary (heat flow rate) and auxiliary (temperature) decision variables are shown as solid lines in Fig. \ref{fig:HFGT_Thermal_system_results}.  The associated state space ODE results are shown in dashed lines with embedded triangles.


\begin{figure}[H]
\centering
\begin{subfigure}[b]{.45\textwidth}
\centering
\includegraphics[width=1\linewidth]{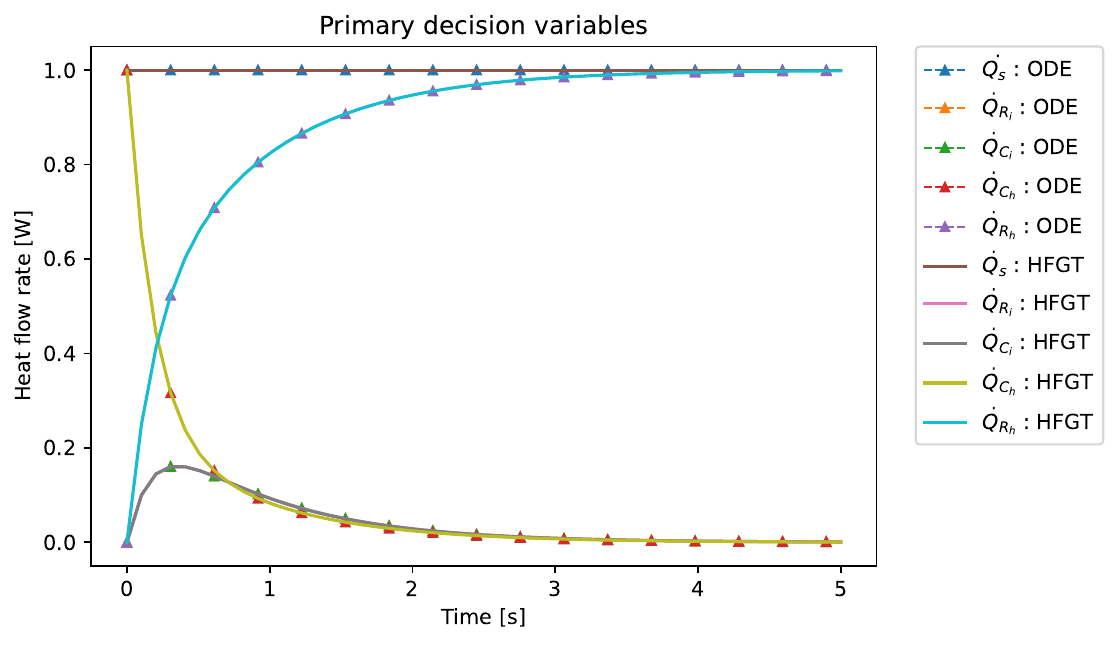} 
\caption{}
\label{fig:HFGT_Thermal_X}
\end{subfigure}
\hfill
\begin{subfigure}[b]{.45\textwidth}
\centering
\includegraphics[width=1\linewidth]{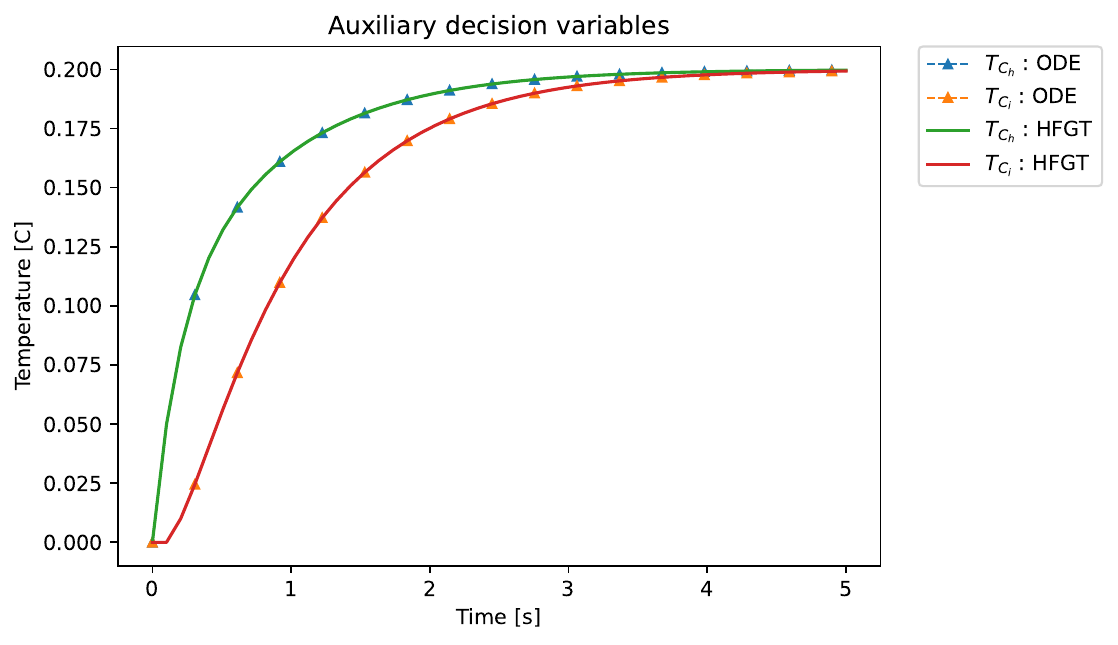}
\caption{}
\label{fig:HFGT_Thermal_Y}
\end{subfigure}
\caption{(a) Time series of primary decision variables of the thermal system. (b) Time series of auxiliary decision variables of the thermal system.  Results from the HFNMCF problem are shown in solid lines. Results from the State Space ODE model are shown in dashed lines with embedded triangles.}
\label{fig:HFGT_Thermal_system_results}
\end{figure}

\subsection{Multi-Energy System}

The first step is to recognize that the electro-mechanical system shown in Fig. \ref{fig:illustrative_examples}f is first, a specialization into the multi-system domain, followed by an instantiation of the engineering system meta-architecture in Fig. \ref{fig:LFESMetaArchitecture}.  Consequently, Defn. \ref{Defn:D1}-\ref{defn:capabilityCh7} are understood as follows.  There are four electrical points that have distinct absolute values of across-variables that serve as independent buffers: $V_S$, $V_{RL}$, $V_{LM}$, $V_0$. Also, there are two mechanical points that have distinct absolute values of across-variables that serve as independent buffers: $\omega_J$ and $\omega_0$. Additionally, there is one across variable source $V_S$ that serves as a transformation resource.  Additionally, the transportation resources include generalized resistors $R$ and $B$, a generalized inductor $L$, a generalized capacitor $J$, and a generalized transformer (i.e. motor) with motor constant $1/k_a$.  Fig. \ref{fig:LFESMetaArchitecture} shows that each of these transformation and transportation resources has exactly one system process; inject power with imposed through variable, dissipate power, store potential energy, and store kinetic energy.  The result is that each of these transformation and transportation resources introduces exactly one system capability with a primary through-variable and an auxiliary across-variable as attributes.  

In the next step, the electro-mechanical system shown in Fig. \ref{fig:illustrative_examples}a is transformed into the engineering system net shown in Fig. \ref{fig:PN_Electrical_system}. 
\begin{figure}[H]
\centering
\includegraphics[width=0.7\textwidth]{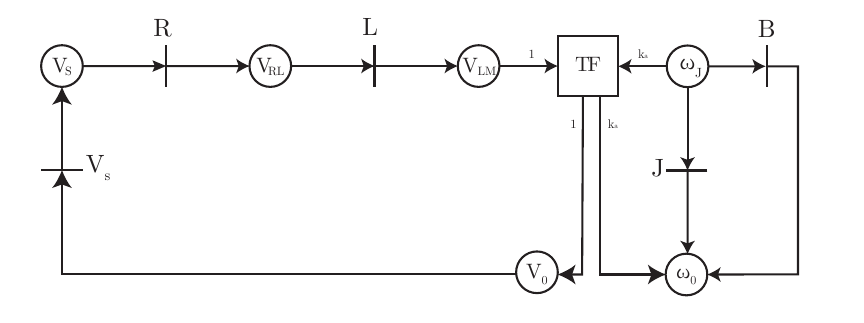} 
\caption{Petri-Net for the electro-mechanical system shown in Figure \ref{fig:illustrative_examples}f}
\label{fig:PN_Multi_system}
\end{figure}
\noindent Notice that the transformer (i.e. motor) acts as a transition with two inputs and two outputs: the motor current in and out, and the motor torque in and out.  Also note that the through variable associated with the motor torque no longer appears explicitly, and instead appears implicitly in the form of the arc weights label with the motor constant $k_a$.  While Fig. \ref{fig:PN_Multi_system} strongly resembles the electro-mechanical diagram in Fig. \ref{fig:illustrative_examples}f and the linear graph in Fig. \ref{fig:linear_graph_multi_system}, it can also be produced directly from HFGT.  The Engineering System Net and its state transition function is constructed according to Defn. \ref{Defn:ESN} and \ref{Defn:ESN-STF}.  Note that the hetero-functional graph theory toolbox \cite{Thompson:2023:ISC-JR02} can automatically calculate the positive and negative hetero-functional incidence matrices from an XML input file that instantiates the information from Defn. \ref{Defn:D1} - \ref{defn:capabilityCh7}.   The Engineering System Net in Fig. \ref{fig:PN_Multi_system} shows the system buffers as places, the capabilities as transitions, and the incidence between them.

In the third step, the hetero-functional network minimum cost flow problem is set up and solved.  More specifically, Eq. \ref{eq:dummy_objective_function}-\ref{eq:device_model_initial} are written out explicitly.  

\begin{align}
\text{minimize } Z = 0 \label{eq:HFGT_multi_dummy_target_function}
\end{align}
\begin{align}
\text{s.t. }
\label{eq:HFGT_multi_continuity}
\begin{bmatrix}
+1 & -1 & 0 & 0 & 0 & 0 \\
0 & +1 & -1 & 0 & 0 & 0 \\
0 & 0 & +1 & -1 & 0 & 0\\
0 & 0 & 0 & +\frac{1}{k_a} & -1 & -1\\
\end{bmatrix}\begin{bmatrix}
i_{V_S} \\
i_{R} \\
i_{L} \\
i_{m}\\
\tau_{B}\\
\tau_{J} \\ 
\end{bmatrix}[k]\Delta T = 0   \quad \forall k \in \{1, \dots, K\}
\end{align}

\begin{align} \label{eq:HFGT_multi_priVars_initial}
\begin{bmatrix}
0 & 0 & 1 & 0 & 0 & 0
\end{bmatrix}
\begin{bmatrix}
i_{V_S} \\
i_{R} \\
i_{L} \\
i_{m}\\
\tau_{B}\\
\tau_{J} \\ 
\end{bmatrix}[k=1]
= 
\begin{bmatrix}
0
\end{bmatrix}
\end{align}
\begin{align}
\label{eq:HFGT_multi_device_model_R}\nonumber
\begin{bmatrix}
0 & 1 & 0 & 0 & 0 & 0\\
0 & 0 & 0 & 0 & 1 & 0\\
\end{bmatrix}
\begin{bmatrix}
i_{V_S} \\
i_{R} \\
i_{L} \\
i_{m}\\
\tau_{B}\\
\tau_{J} \\  
\end{bmatrix}[k]
=
\begin{bmatrix}
\frac{1}{R} & 0\\
0 & B
\end{bmatrix}
\begin{bmatrix}
0 & 1 & 0 & 0 & 0 & 0\\
0 & 0 & 0 & 0 & 1 & 0\\
\end{bmatrix}
\begin{bmatrix}
-1 & 0 & 0 & 0 \\
+1 & -1 & 0 & 0 \\
0 & +1 & -1 & 0 \\
0 & 0 & +1 & -\frac{1}{k_a} \\
0 & 0 & 0 & +1 \\
0 & 0 & 0 & +1 \\
\end{bmatrix}
\begin{bmatrix}
V_S \\
V_{RL} \\
V_{LM} \\
\omega_J \\
\end{bmatrix}[k] & \\
\forall k \in \{1, \dots, K\} & 
\end{align}
\begin{align}
\label{eq:HFGT_multi_device_model_L}\nonumber
\begin{bmatrix}
0 & 0 & 1 & 0 & 0 & 0
\end{bmatrix}
\left(
\begin{bmatrix}
i_{V_S} \\
i_{R} \\
i_{L} \\
i_{m}\\
\tau_{B}\\
\tau_{J} \\ 
\end{bmatrix}[k+1]
-
\begin{bmatrix}
i_{V_S} \\
i_{R} \\
i_{L} \\
i_{m}\\
\tau_{B}\\
\tau_{J} \\ 
\end{bmatrix}[k]
\right)
=
\begin{bmatrix}
\frac{1}{L}
\end{bmatrix}
\begin{bmatrix}
0 & 0 & 1 & 0 & 0 & 0
\end{bmatrix}
\begin{bmatrix}
-1 & 0 & 0 & 0 \\
+1 & -1 & 0 & 0\\
0 & +1 & -1 & 0 \\
0 & 0 & +1 & +k_a\\
0 & 0 & 0 & +1 \\
0 & 0 & 0 & +1\\
\end{bmatrix}
\begin{bmatrix}
V_S \\
V_{RL} \\
V_{LM} \\
\omega_J\\
\end{bmatrix}[k]\Delta T &\\
\forall k \in \{1, \dots, K-1\} & 
\end{align}
\begin{align}
\label{eq:HFGT_multi_device_model_C}\nonumber
\begin{bmatrix}
0 & 0 & 0 & 0 & 0 & 1 
\end{bmatrix}
\begin{bmatrix}
i_{V_S} \\
i_{R} \\
i_{L} \\
i_{m}\\
\tau_{B}\\
\tau_{J} \\ 
\end{bmatrix}[k] \Delta T
= 
\begin{bmatrix}
J
\end{bmatrix}
\begin{bmatrix}
0 & 0 & 0 & 0 & 0 & 1
\end{bmatrix}
\begin{bmatrix}
-1 & 0 & 0 & 0\\
+1 & -1 & 0 & 0\\
0 & +1 & -1 & 0\\
0 & 0 & +1 & -\frac{1}{k_a}\\
0 & 0 & 0 & +1\\
0 & 0 & 0 & +1 
\end{bmatrix}
\left(
\begin{bmatrix}
V_S \\
V_{RL} \\
V_{LM} \\
\omega_J \\
\end{bmatrix}[k+1]
-
\begin{bmatrix}
V_S \\
V_{RL} \\
V_{LM} \\
\omega_J \\
\end{bmatrix}[k]
\right) &\\
\forall k \in \{1, \dots, K-1\} &
\end{align}


\begin{align}
\label{eq:HFGT_multi_device_model_transformer2}
\begin{bmatrix}
0 & 0 & 1 & -\frac{1}{k_a}
\end{bmatrix}
\begin{bmatrix}
V_S \\
V_{RL} \\
V_{LM} \\
\omega_J\\
\end{bmatrix}[k]=
\begin{bmatrix}
0
\end{bmatrix} \quad \forall k \in \{1, \dots, K\}
\end{align}

\begin{align}
\label{eq:HFGT_multi_aux_Var_source}
\begin{bmatrix}
1 & 0 & 0 & 0
\end{bmatrix}
\begin{bmatrix}
V_S \\
V_{RL} \\
V_{LM} \\
\omega_J\\
\end{bmatrix}[k]
= \begin{bmatrix}
1
\end{bmatrix}  \quad \forall k \in \{1, \dots, K\}
\end{align}
\begin{align}
\label{eq:HFGT_multi_auxVar_initial}
\begin{bmatrix}
0 & 0 & 0 & 1 
\end{bmatrix}
\begin{bmatrix}
V_S \\
V_{RL} \\
V_{LM} \\
\omega_J\\
\end{bmatrix}[k=1]
= 
\begin{bmatrix}
0
\end{bmatrix}
\end{align}

This explicit statement of the HFNMCF problem in the context of the electro-mechanical system shown in Fig. \ref{fig:illustrative_examples}f provides the following insights:
\begin{itemize}
\item Eq. \ref{eq:HFGT_multi_dummy_target_function} shows that an electro-mechanical system does not have an objective function and is simply a set of simultaneous equations.  
\item Eq. \ref{eq:HFGT_multi_continuity} is a matrix restatement of the continuity laws in Eq. \ref{eq:elecmech_cont_1} - \ref{eq:elecmech_cont_3}.  
The first three rows apply a current balance on each electrical place, while the last applies a torque balance on each mechanical place.  Note that the motor constant $k_a$ serves to transform the motor torque $i_m$ into the motor torque.  This serves to combine the transformer's constitutive law in Eq. \ref{eq:elecmech_const_6} with the torque balance in Eq. \ref{eq:elecmech_cont_1}.  Note that Eq. \ref{eq:HFGT_multi_continuity} introduces an additional matrix row to account for the current provided by the voltage source $i_{V_S}$.  While this variable is not required in the linear graph and bond graph methodologies, the HFGT derivation requires across and through variables for all capabilities.  Again, the current balance on the electrical ground and the torque balance on the mechanical ground are redundant and therefore eliminated.  
\item There are no equations that impose exogenous values on the currents and angular velocities because there are no associated sources. 
\item Eq. \ref{eq:HFGT_multi_priVars_initial} is the initial condition on the inductor current as a state variable.  
\item Eq. \ref{eq:HFGT_multi_device_model_R} is a matrix restatement of the constitutive law for the resistor and the physical damper in Eq. \ref{eq:elecmech_const_3} and \ref{eq:elecmech_const_4}.
\item Eq. \ref{eq:HFGT_multi_device_model_L} is a matrix restatement of the constitutive law for the inductor in Eq. \ref{eq:elecmech_const_2}.
\item Eq. \ref{eq:HFGT_multi_device_model_C} is a matrix restatement of the constitutive law for rotating disk in Eq. \ref{eq:elecmech_const_1}.
\item Eq. \ref{eq:HFGT_multi_device_model_transformer2} is a matrix restatement of the constitutive law for the auxiliary (i.e. across) decision variable for transformer in Eq. \ref{eq:elecmech_const_5}.  Again, the transformer's second constitutive law has already been incorporated in the context of \ref{eq:HFGT_multi_continuity}.
\item The selector matrices in Eq. \ref{eq:HFGT_multi_priVars_initial}-\ref{eq:HFGT_multi_aux_Var_source} can be automatically produced from the HFGT toolbox \cite{Thompson:2023:ISC-JR02} for systems of arbitrary size.  
\item Eq. \ref{eq:HFGT_multi_aux_Var_source} imposes exogenous values on the voltages due to the presence of voltage sources.  
\item Eq. \ref{eq:HFGT_multi_auxVar_initial} is the initial condition on the disk angular velocity as a state variable. 
\item The compatibility laws stated in Eq. \ref{eq:elecmech_comp_1} - \ref{eq:elecmech_comp_3} are superfluous because all of the voltages and angular velocities have been stated in absolute terms relative to the ground rather than as voltage/angular velocity differences between points.  
\end{itemize}

Once the HFNMCF problem for the fluidic system has been set up, it can be simulated straightforwardly and compared against the state space ODE model derived by linear graph and/or bond graph.  The following parameter values are chosen:
$R = 1 \ \Omega$, $L = 0.01 \ \text{mH}$, $J = 5 \ \text{kg} \cdot \text{m}^2$, $B = 0.1 \ \text{N} \cdot \text{m} \cdot \text{s/rad}$, and $k_a = 0.1$, and $V_s = 1 \ \text{V}$ (Step input voltage).  The simulation time $K = 0.3 $ seconds, and the time step $\Delta T = 0.001 $ seconds.  The HFNMCF results for the primary (current and torque) and auxiliary (voltage and angular velocity) decision variables are shown as solid lines in Fig. \ref{fig:HFGT_Multi_system_results}.  The associated state space ODE results are shown in dashed lines with embedded triangles.

\begin{figure}[H]
\centering
\begin{subfigure}[b]{.45\textwidth}
\centering
\includegraphics[width=1\linewidth]{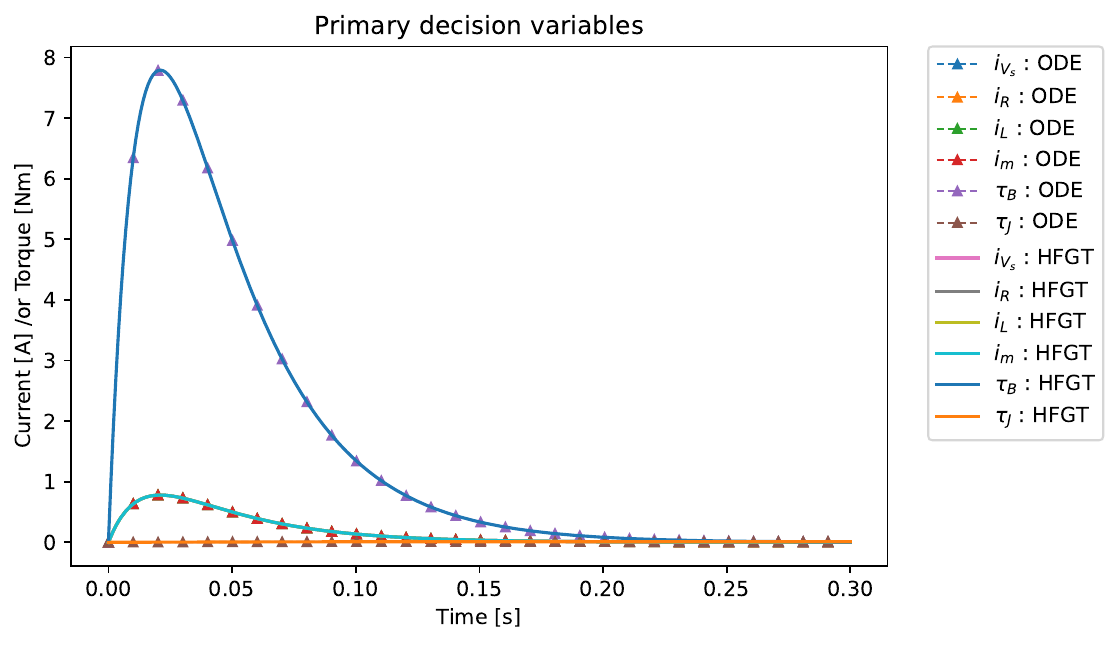} 
\caption{}
\label{fig:HFGT_Multi_X}
\end{subfigure}
\hfill
\begin{subfigure}[b]{.45\textwidth}
\centering
\includegraphics[width=1\linewidth]{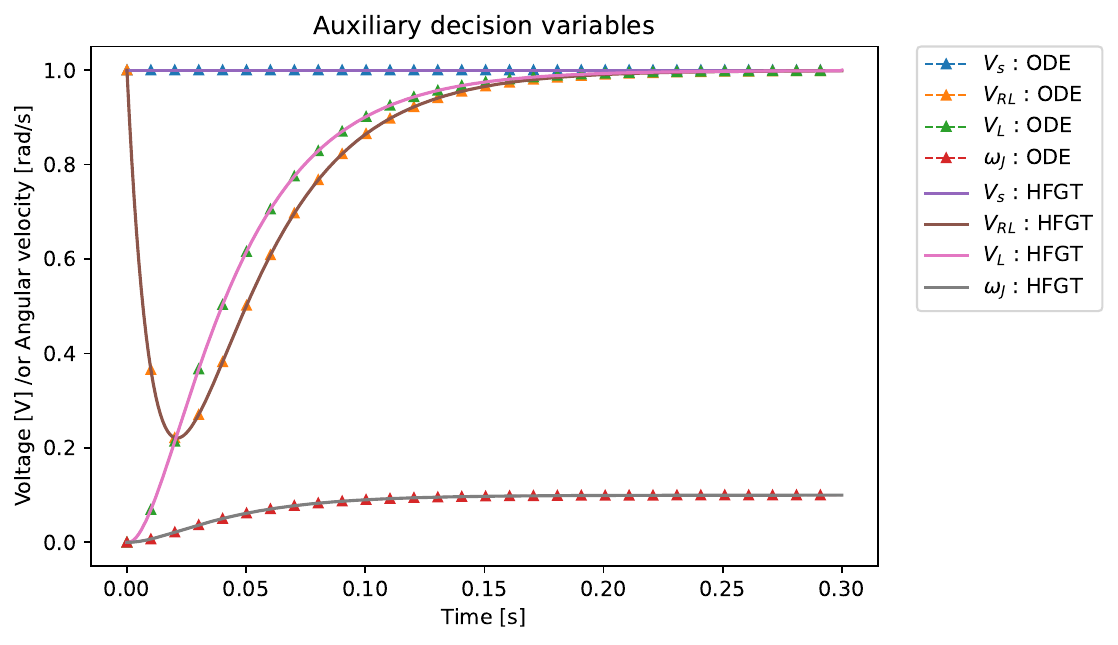}
\caption{}
\label{fig:HFGT_Multi_Y}
\end{subfigure}
\caption{(a) Time series of primary decision variables of the electro-mechanical system. (b) Time series of auxiliary decision variables of the electro-mechanical system.  Results from the HFNMCF problem are shown in solid lines. Results from the State Space ODE model are shown in dashed lines with embedded triangles.}
\label{fig:HFGT_Multi_system_results}
\end{figure}

\section{On the Generality of Hetero-functional Graphs}
\label{sec:On_generality_of_HFGT}
The previous section concretely demonstrated the relationship between linear graphs, bond graphs, and hetero-functional graphs on the six illustrative examples depicted in Fig \ref{fig:illustrative_examples}.  Each time, the result of the HFNMCF problem was numerically equivalent to the simulation of the associated state space ODE model.  In effect, the linear graphs and bond graph methodologies derive continuity laws (i.e. 0-Junction in Eulerian view systems/ 1-Junction in Lagrangian view systems), constitutive laws, and compatibility laws (i.e. 1-Junction in Eulerian view system/ 0-Junction in Lagrangian view systems) and then use algebraic manipulations to simplify them into a state space ODE model.  Hetero-functional graph theory, quite similarly, includes the continuity laws in the engineering system net state transition function, includes the constitutive laws in the device models, and eliminates the need for compatibility models by virtue of its choice of reference frame and then states these laws as equations that are solved simultaneously and numerically (without manual algebraic manipulations).  Consequently, while the numerical evidence in the previous section is compelling from a pedagogical perspective, in reality, the pattern of results points to two more general results.  

\begin{thm}\label{Thm:LG_HFG}
Given an arbitrary linear graph composed of 
\begin{enumerate*}
\item across variable sources, 
\item through variable sources, 
\item D-Type elements, 
\item A-Type elements, 
\item T-Type elements, 
\item generalized transformers, and 
\item generalized gyrators, 
\end{enumerate*}
organized in an arbitrary topology, and a minimal set of initial conditions on the associated state variables, the solution of its associated state space ODE model is equivalent to the solution of a specialized instance of the HFNMCF problem.  
\end{thm}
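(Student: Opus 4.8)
The plan is to proceed constructively: given an arbitrary linear graph of the stated element types, I will exhibit an explicit instance of the collapsed HFNMCF problem in Eq.~\ref{eq:dummy_objective_function}--\ref{eq:device_model_initial} and show that the differential–algebraic system it encodes has the same solution as the linear-graph model, hence the same state-space ODE solution. The construction follows the ontological mapping already argued around Fig.~\ref{fig:LFESMetaArchitecture}: each node of the linear graph (a point with a distinct absolute across-variable value) becomes an independent buffer/place; each across- or through-variable source and each D-, A-, T-type element, transformer, and gyrator becomes exactly one capability/transition carrying one primary (through) variable and one auxiliary (across) variable. Assembling these according to Defn.~\ref{Defn:D6}--\ref{Defn:ESN-STF} yields the engineering system net, its incidence matrix $M$, the selector matrices $S_R, S_L, S_C, S_T, S_{G_1}, S_{G_2}$, and the boundary/initial data; this quadruple is the ``specialized instance'' asserted by the theorem.

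First I would establish the structural correspondence. The essential observation is that, after deleting the single redundant row associated with the reference (ground) node --- the node–branch incidence matrix of a connected graph on $N$ nodes has rank $N-1$~\cite{Gould:2012:00} --- the engineering system net incidence matrix $M$ coincides with the oriented node–branch incidence matrix of the linear graph. Therefore $M\,U[k]\Delta T = 0$ in Eq.~\ref{eq:HFGT_continuity} reproduces exactly the continuity (through-variable conservation) laws at every node. Moreover, because HFGT references every across variable absolutely to the ground node, the across-variable drop across any element equals $(-M)^T y[k]$ by definition; this is precisely the content that the linear-graph method records as its compatibility laws, so in the HFNMCF those laws are absorbed into the definition of the auxiliary variables rather than appearing as separate constraints, exactly as observed on each of the six worked examples.

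Next I would establish the constitutive correspondence. Each admissible element type already has its constitutive law written in generic HFGT form: Eq.~\ref{eq:R_element_aux_eq} for generalized resistors, Eq.~\ref{eq:L_element_aux_eq} and Eq.~\ref{eq:C_element_aux_eq} for the energy-storage elements (in the Eulerian view, with the roles of $L$ and $C$ interchanged in the Lagrangian view), and Eq.~\ref{eq:transformer_element_aux_eq}--\ref{eq:gyrator_element_aux_eq} for transformers and gyrators, each obtained by applying the appropriate projection operator to $U[k]$ and $(-M)^T y[k]$. For an arbitrary topology one simply vertically concatenates the relevant rows, with the selector matrices read off mechanically from the element list. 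It then remains to note that the difference-equation forms in Eq.~\ref{eq:L_element_aux_eq}--\ref{eq:C_element_aux_eq} are the explicit-Euler discretizations of the continuous storage laws, so under the stated limit $\Delta T \to 0$ their solution converges to that of the corresponding ODEs by the standard consistency argument for linear systems. The union of (i) the continuity rows, (ii) the absolute-reference definition of the across variables, and (iii) the constitutive rows is exactly the DAE that the linear-graph method manipulates; the normal-tree procedure is merely one particular elimination ordering applied to this DAE and does not alter its solution set.

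The main obstacle I anticipate is the rank/index bookkeeping needed to make ``minimal set of initial conditions'' precise in full generality. I would argue that the A-type elements retained in the normal tree and the T-type elements excluded from it --- which define the linear-graph state vector --- correspond precisely to the degrees of freedom left unfixed by the algebraic HFNMCF constraints, so that supplying initial values for exactly these quantities, as an initial condition on the auxiliary variables of the C-type capabilities and on the firing vector of the L-type capabilities (Eq.~\ref{eq:HFGT_priVar_initial} and Eq.~\ref{eq:aux_vars_init_condition}), renders the HFNMCF solution unique and equal to the ODE solution. This requires a careful rank computation on the stacked constraint matrix (continuity: $N-1$ independent rows; one row per transformer/gyrator; one algebraic row per resistor) together with an appeal to existence–uniqueness for linear DAEs of index at most one, and care with degenerate topologies such as capacitor-only loops and inductor-only cutsets. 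The normal-tree construction is exactly what resolves those degeneracies, and since the \emph{same} tree yields the HFGT state variables, the two formulations remain in lockstep; the $\Delta T \to 0$ limit is then a routine convergence statement rather than a separate difficulty.
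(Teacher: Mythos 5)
Your proposal follows essentially the same route as the paper's own proof: translate the linear graph's continuity laws into the incidence-matrix constraint of Eq.~\ref{eq:HFGT_continuity}, eliminate the compatibility laws by measuring across variables absolutely against the ground reference, recast each element's constitutive law into the generic device-model forms of Eq.~\ref{eq:R_element_aux_eq}--\ref{eq:gyrator_element_aux_eq}, encode sources and initial conditions in the remaining boundary constraints, append the null objective, and identify the result as the collapsed HFNMCF instance of Eq.~\ref{eq:dummy_objective_function}--\ref{eq:device_model_initial}. Your treatment is in fact somewhat more careful than the paper's sketch---explicitly identifying $M$ with the rank-$(N-1)$ node--branch incidence matrix, treating the $\Delta T$ discretization as a convergence statement rather than an exact equivalence, and tying the ``minimal set of initial conditions'' to the normal-tree state variables---so no gap needs to be flagged.
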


\begin{proof}
The linear graph's state space ODE model is equivalent to a set of simultaneous differential algebraic equations composed of continuity laws, constitutive laws, and compatibility laws.  Furthermore, the compatibility laws can be eliminated entirely with an algebraic change of variable that measures all across variables relative to a ground reference frame.  Furthermore, the differential algebraic equation form of the continuity and constitutive laws can be stated as algebraic equations with a sufficiently small choice of the discrete-time step $\Delta T$.  Next, the continuity laws can be algebraically recast into the form stated in Eq. \ref{eq:HFGT_continuity}.  Additionally, the constitutive laws can be algebraically recast into the form stated in Eq. \ref{eq:device_model} where 
\begin{itemize}
\item generalized resistors laws take the form in Eq. \ref{eq:R_element_aux_eq},
\item generalized T-Type element laws take the form in Eq. \ref{eq:L_element_aux_eq},
\item generalized A-Type element laws take the form in Eq. \ref{eq:C_element_aux_eq}, 
\item generalized transformers laws take the form in Eq. \ref{eq:transformer_element_aux_eq},
\item generalized gyrator laws take the form in Eq. \ref{eq:gyrator_element_aux_eq}.
\end{itemize}
Additionally, the across-variable sources are described by Eq. \ref{eq:HFGT_priVar_source}, and the through-variable sources are described by Eq. \ref{eq:device_model_source}.  Next, the initial conditions on through-type state variables are described by Eq. \ref{eq:HFGT_priVar_initial} and the initial conditions on across-type state variables are described by Eq. \ref{eq:device_model_initial}.  Next, the solution of the simultaneous equations in Eq. \ref{eq:HFGT_continuity}-\ref{eq:device_model_initial} can be recast as an optimization program with the null objective function in Eq. \ref{eq:dummy_objective_function} and these same equations as constraints.  Finally, the optimization problem stated in Eq. \ref{eq:dummy_objective_function}-\ref{eq:device_model_initial} is a special case of the HFNMCF problem in Eq. \ref{Eq:ObjFunc1}-\ref{Eq:DevicModels2} under the conditions described in the beginning of Sec. \ref{subsec:HFGT_by_example}.  
\end{proof}

\begin{thm}\label{Thm:BG_HFG}
Given an arbitrary bond graph composed of 
\begin{enumerate*}
\item effort sources, 
\item flow sources, 
\item generalized resistors, 
\item generalized capacitors, 
\item generalized inductors, 
\item generalized transformers, and 
\item generalized gyrators, 
\end{enumerate*}
organized in an arbitrary topology, and a minimal set of initial conditions on the associated state variables, the solution of its associated state space ODE model is equivalent to the solution of a specialized instance of the HFNMCF problem.  
\end{thm}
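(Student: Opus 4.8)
The plan is to reduce Theorem \ref{Thm:BG_HFG} to Theorem \ref{Thm:LG_HFG} by exploiting the structural correspondence between bond graphs and linear graphs established in Sec. \ref{subsec:bond_graphs}. First I would observe that any bond graph built from the seven admissible element types can be placed in exact correspondence with a linear graph: effort and flow sources map to through-variable and across-variable sources (or the reverse) according to whether the system is modeled in a Lagrangian or an Eulerian view, as summarized in Fig. \ref{fig:across_through_variables}; generalized resistors, capacitors, and inductors map to D-Type, A-Type, and T-Type elements respectively (with the A/T roles possibly interchanged depending on the view); and generalized transformers and gyrators map to their linear-graph namesakes. Under this correspondence the 0-Junction and 1-Junction laws of the bond graph coincide with the continuity and compatibility laws of the linear graph, with the assignment of ``continuity'' versus ``compatibility'' to the 0- and 1-junctions determined by the view, exactly as stated in Sec. \ref{subsec:bond_graphs}.

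Next I would note that, regardless of the view, exactly one family of junction laws is a set of conservation relations among the variables passing through a node, while the other family consists of compatibility relations expressing one element variable as a signed sum of node variables. The latter can be eliminated by the same algebraic change of variable used in the proof of Theorem \ref{Thm:LG_HFG}: measuring all across variables (whichever of the effort/flow pair plays that role) relative to a single ground reference frame, which HFGT adopts by default. What remains is a set of continuity laws of the form Eq. \ref{eq:HFGT_continuity}, together with the constitutive (device-model) laws --- generalized resistors giving Eq. \ref{eq:R_element_aux_eq}, generalized inductors and capacitors giving the Euler-discretized forms Eq. \ref{eq:L_element_aux_eq} or Eq. \ref{eq:C_element_aux_eq} (the pairing fixed by the view), transformers giving Eq. \ref{eq:transformer_element_aux_eq}, and gyrators giving Eq. \ref{eq:gyrator_element_aux_eq} --- plus source constraints Eq. \ref{eq:HFGT_priVar_source} and Eq. \ref{eq:device_model_source} and initial conditions Eq. \ref{eq:HFGT_priVar_initial} and Eq. \ref{eq:device_model_initial}, after choosing $\Delta T$ sufficiently small to render the differential-algebraic equations algebraic.

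Having recast the bond graph's equations into precisely the system Eq. \ref{eq:HFGT_continuity}--\ref{eq:device_model_initial}, I would then append the null objective Eq. \ref{eq:dummy_objective_function} and invoke the final step of the proof of Theorem \ref{Thm:LG_HFG}: this optimization program is the special case of the HFNMCF problem obtained under the restrictions catalogued at the start of Sec. \ref{subsec:HFGT_by_example}. Equivalently --- and more briefly --- one could simply translate the bond graph to its linear-graph counterpart as above and apply Theorem \ref{Thm:LG_HFG} verbatim.

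I expect the main obstacle to be the careful bookkeeping of the Lagrangian/Eulerian dichotomy: because the effort-to-through and flow-to-across identifications flip between the two views, one must verify that in \emph{both} views the family of junction laws that survives (after absorbing the other into the reference-frame choice) is genuinely a continuity law on the variables that HFGT treats as primary, and that the storage elements discretize to Eq. \ref{eq:L_element_aux_eq} or Eq. \ref{eq:C_element_aux_eq} with the correct pairing. A secondary subtlety is that a bond graph carries no intrinsic spatial reference frame the way a linear graph does; one must argue that the compatibility-type junction laws are nonetheless exactly what is needed to define consistent absolute node variables, so that their elimination is lossless.
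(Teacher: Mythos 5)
Your proposal is correct and takes essentially the same route as the paper: translate the bond graph element-by-element and junction-by-junction into an equivalent linear graph using the view-dependent (Eulerian/Lagrangian) correspondences of Sec.~\ref{subsec:bond_graphs}, and then invoke Theorem~\ref{Thm:LG_HFG}. The only detail where the paper is more careful is the two-port elements: because the two sides of a transformer or gyrator may sit in subsystems with different views, a bond-graph gyrator can map to a linear-graph \emph{transformer} (and vice versa) rather than always to its namesake, as indeed happens in the electro-mechanical example.
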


\begin{proof}
The bond graph's state space ODE model is equivalent to a set of simultaneous differential algebraic equations composed of 0-junction laws, 1-junction laws, and constitutive laws. Furthermore, 0-junction laws, 1-junction laws, and bond graph system elements can be transformed 1-to-1 to form the continuity laws, compatibility laws, and system elements of the corresponding linear graph.  As elaborated in Sec. \ref{subsec:bond_graphs}) and more specifically, 
\begin{itemize}
\item In systems with an Eulerian view (e.g. electrical systems, fluidic systems, thermal systems)
\begin{itemize}
    \item effort sources are equivalent to across variable sources,
    \item flow sources are equivalent to through variable sources,
    \item generalized resistors are equivalent to D-Type variables,
    \item generalized capacitors are equivalent to A-type variables,
    \item generalized inductors are equivalent to T-Type variables,
    \item 0-Junction laws are equivalent to constitutive laws,
    \item 1-junction laws are equivalent to compatibility laws.
    \end{itemize} 
\item In systems with the Lagrangian view (e.g. Mechanical system)
\begin{itemize}
    \item effort sources are equivalent to through-variable sources,
    \item flow sources are equivalent to across-variable sources,
    \item generalized resistors are equivalent to D-Type variables,
    \item generalized capacitors are equivalent to T-Type variables,
    \item generalized inductors are equivalent to A-type variables,
    \item 0-Junction laws are equivalent to compatibility laws,
    \item 1-junction laws are equivalent to continuity laws.
\end{itemize}
\item A generalized gyrator in a bond graph is equivalent to either a generalized transformer or generalized gyrator in the linear graph methodology depending on the choice of system on either side of the element.  
\item Similarly, a generalized transformer in a bond graph is equivalent to either a generalized transformer or generalized gyrate in the linear graph methodology depending on the choice of system on either side of the element.  
\end{itemize}
Because an arbitrary bond graph model can be transformed to an equivalent linear graph model, then by Theorem \ref{Thm:LG_HFG}, the ODE state space model derived by the bond graph is a special case of the HFNMCF problem.
\end{proof}

\section{Conclusion and Future Work}\label{sec:Conclusion}
This paper relates hetero-functional graphs to linear graphs and bond graphs.  Despite having completely different theoretical origins, it demonstrates the former is a generalization of the latter two.  To facilitate the comparison, each of the three modeling techniques is described and then compared conceptually.   These three descriptions reveal that hetero-functional graphs, linear graphs, and bond graphs have completely different ontologies so finding a direct relationship through a purely abstract treatment is difficult.  Instead, the paper focuses the discussion concretely on six example systems:  (a) an electrical system, (b) a translational mechanical system, (c) a rotational mechanical system, (d) a fluidic system, (e) a thermal system, and (f) a multi-energy (electro-mechanical) system.  Each of these systems was modeled with hetero-functional graphs, linear graphs, and bond graphs to reveal that dynamic simulation models produced by each of these modeling techniques result in numerically equivalent results.  Finally, this concrete numerical evidence provides significant intuition and insight that overcomes the ontological differences between these three types of graph approaches.  The paper proves mathematically that hetero-functional graphs are a formal generalization of both linear graphs and bond graphs.  

This abstract and highly general result is significant for several reasons.  First, linear graphs and bond graphs have a much longer history in the literature and have produced extensive theoretical and practical results.   Until now, these contributions have been theoretically divorced from the hetero-functional graph theory literature.  A direct relationship between hetero-functional, linear, and bond graphs facilitates the cross-pollination of theoretical and practical results between these approaches.  For example, bond graphs are often used to study system causality\cite{Karnopp:1990:00}; where such an analysis has been elusive in hetero-functional graphs.  More practically, the well-known modeling and simulation tool, Modelica\cite{Fritzson:2011:00,Fritzson:2014:00}, was originally developed on a bond-graph foundation but its connection to hetero-functional graphs has not been made.  Looking further ahead, and as exposited in the introduction, the 21st century is creating challenges that demand a deep understanding of the structure and behavior of systems-of-systems.  This requires modeling approaches with an open, rather than closed, set of modeling primitives that spans systems of fundamentally different functions.  These systems must address operands of energy, matter, information, money, and living organisms, and not just energy.  It also requires modeling approaches that can handle continuous time, discrete time, and discrete event dynamics.  This paper reveals that HFGT can provide such analytical flexibility and extensibility without losing the rich tradition of graph-based modeling that originated in the previous century.

\bibliographystyle{IEEEtran}
\bibliography{LIINESLibrary,LIINESPublications,LG_BG_HFGT_References}

\end{document}